\documentclass[11pt]{article}%

\usepackage[top=1in, bottom=1in, left=1in, right=1in]{geometry}

\usepackage{caption}
\usepackage{subcaption}
\usepackage{tikz}
\usepackage{pgfplots}
\pgfplotsset{compat=1.17}
\usetikzlibrary{patterns,arrows.meta,backgrounds}

\usepackage{graphicx} 
\usepackage{siunitx}
\usepackage{url} 
\usepackage{amsmath,mathtools, amsthm}
\usepackage{cleveref}
\usepackage{tabularx}
\usepackage{thm-restate}

\usepackage[numbers]{natbib} 
\usepackage{verbatim}
\usepackage{changepage}
\usepackage{algorithm}
\usepackage{todonotes}

\usepackage[caption = false]{subfig}

\usepackage[english]{babel}
\usepackage{multirow}
\usepackage{longtable}

\usepackage{booktabs}

\usepackage{colortbl}
\usepackage{threeparttable}

\definecolor{winnerColor}{RGB}{189, 223, 167}%
\definecolor{nearWinColor}{RGB}{223, 240, 216}    %
\definecolor{contenderColor}{RGB}{253, 245, 206}  %
\definecolor{competitiveColor}{RGB}{253, 231, 208} %
\definecolor{distantColor}{RGB}{248, 218, 205}    %
\definecolor{farBehindColor}{RGB}{242, 201, 198}  %

\newcommand{\Comments}{1}
\newcommand{\mynote}[3]{\ifnum\Comments=1\textcolor{#1}{#2:#3}\fi}

\usepackage{longtable}
\usepackage{booktabs}

\theoremstyle{definition}
\newtheorem{theorem}{Theorem}[section]

\newtheorem{definition}[theorem]{Definition}

\theoremstyle{plain}

\theoremstyle{remark}

\usepackage{soul}
\usepackage{float}
 \usepackage
[english]{babel}

\usepackage{color}

\usepackage{algcompatible}
\usepackage{xcolor}

\newcommand{\RETURN}[1]{\State \textbf{return} #1}
\providecommand{\COMMENT}[1]{\Comment{#1}}

\makeatletter
\newcommand{\LineNumbersBlueVIItoXIII}{%
  \algrenewcommand\alglinenumber[1]{%
    \ifnum##1>6 \ifnum##1<14
      \textcolor{blue}{\arabic{ALG@line}}%
    \else
      \arabic{ALG@line}%
    \fi\else
      \arabic{ALG@line}%
    \fi
  }%
}
\newcommand{\LineNumbersDefault}{%
  \algrenewcommand\alglinenumber[1]{\arabic{ALG@line}}%
}
\makeatother

\LineNumbersDefault %

\makeatletter
\newcommand{\LineNumbersRobustBlue}{%
  \algrenewcommand\alglinenumber[1]{%
    \ifnum##1=5 \textcolor{blue}{\arabic{ALG@line}}%
    \else\ifnum##1=8  \textcolor{blue}{\arabic{ALG@line}}%
    \else\ifnum##1 = 2 \textcolor{blue}{\arabic{ALG@line}}%
    \else\ifnum##1=15 \textcolor{blue}{\arabic{ALG@line}}%
    \else\ifnum##1=17 \textcolor{blue}{\arabic{ALG@line}}%
    \else\ifnum##1=28 \textcolor{blue}{\arabic{ALG@line}}%
    \else                 \arabic{ALG@line}%
    \fi\fi\fi\fi\fi\fi
  }%
}
\makeatother

\title{Simpler Than You Think: The Practical Dynamics of Ranked Choice Voting}
\author{
Sanyukta Deshpande$^{1}$ \quad
Nikhil Garg$^{2}$ \quad
Sheldon Jacobson$^{1}$\\[0.5em]
$^{1}$University of Illinois at Urbana-Champaign\\
$^{2}$Cornell Tech
}

\date{}

\begin{document}
\maketitle

\begin{abstract}
Ranked Choice Voting (RCV) adoption is expanding across U.S. elections, but faces persistent criticism for complexity, strategic manipulation, and ballot exhaustion. We empirically test these concerns on real election data, across three diverse contexts: New York City's 2021 Democratic primaries (54 races), Alaska's 2024 primary-infused statewide elections (52 races), and Portland's 2024 multi-winner City Council elections (4 races). Our algorithmic approach circumvents computational complexity barriers by reducing election instance sizes (via candidate elimination).

Our findings reveal that despite its intricate multi-round process and theoretical vulnerabilities, RCV consistently exhibits \emph{simple} and \emph{transparent} dynamics in practice, closely mirroring the interpretability of plurality elections. 
Following RCV adoption, competitiveness increased substantially compared to prior plurality elections, with average margins of victory declining by 9.2 percentage points in NYC and 11.4 points in Alaska. Empirically, complex ballot-addition strategies are not more efficient than simple ones, and ballot exhaustion has minimal impact, altering outcomes in only 3 of 110 elections. These findings demonstrate that RCV delivers measurable democratic benefits while proving robust to ballot-addition manipulation, resilient to ballot exhaustion effects, and maintaining transparent competitive dynamics in practice. The computational framework offers election administrators and researchers tools for immediate election-night analysis and facilitating clearer discourse around election dynamics.

\end{abstract}

\section{Introduction}\label{sec:intro}

Ranked Choice Voting (RCV) is gaining widespread adoption across the United States, currently reaching approximately 14 million voters and 52 jurisdictions as of October 2025 \citep{FairVoteRCVUSE}. In 2024, Portland conducted its first multi-winner RCV elections, and Washington D.C. voters approved implementation beginning in 2026—joining a growing number of cities and states advancing RCV reform. Legislative proposals such as the Fair Representation Act \citep{fra} would establish multi-member districts with multi-winner RCV for U.S. congressional elections.

RCV's growing popularity stems from its potential to address several electoral challenges: mitigating vote splitting, reducing gerrymandering effects, enhancing campaign civility, and improving minority representation \citep{garg2022combatting, benade2021ranked, tomlinson2024moderating}. However, it is criticized as being overly complex, leading to a lack of transparency and a susceptibility to strategic manipulation; in turn voters may submit incomplete ballots, and such ballot exhaustion may influence outcomes \citep{RCVNightmare2024, clark2020rank, atkeson2024impact, alameda2025agenda, von2019ranked}. %

Central to these criticisms is RCV's intricacy compared to plurality voting's straightforward and accessible ``most votes wins" approach. RCV is most commonly implemented through the Single Transferable Vote (STV) system, and Instant Runoff Voting (IRV) is its popular single-winner version.  In elections with $n$ candidates competing for $k$ seats, RCV determines winners through iterative elimination rounds. Each round elects candidates meeting the Droop quota votes (equals $\lfloor \frac{\text{total votes}}{k+1} \rfloor + 1$, e.g., 50\% vote share for single-winner IRV), redistributing surplus votes proportionally.  When no candidate reaches the quota, the lowest-vote candidate is eliminated and their votes transfer to voters' next preferences. This iterative process continues until all $k$ winners are selected or only $k$ candidates remain. 

RCV thus requires extensive computations presented through multi-round tables, unlike plurality voting's single tally. Figure~\ref{fig:mayor_2017} illustrates New York City's 2017 Democratic mayoral primary official result under plurality voting  \citep{NYCBOE2017}.
This straightforward presentation contrasts sharply with the official 8-round tabulation from the 2021 RCV result in Figure~\ref{fig:mayor_2021} \citep{NYCBOE2021}. While Figure~\ref{fig:mayor_2017} conveys all information at a glance, Figure~\ref{fig:mayor_2021} obscures competitive dynamics, which must be explored via exponentially many elimination order possibilities.

\begin{figure}[H]
\centering
\includegraphics[width=0.5\linewidth]{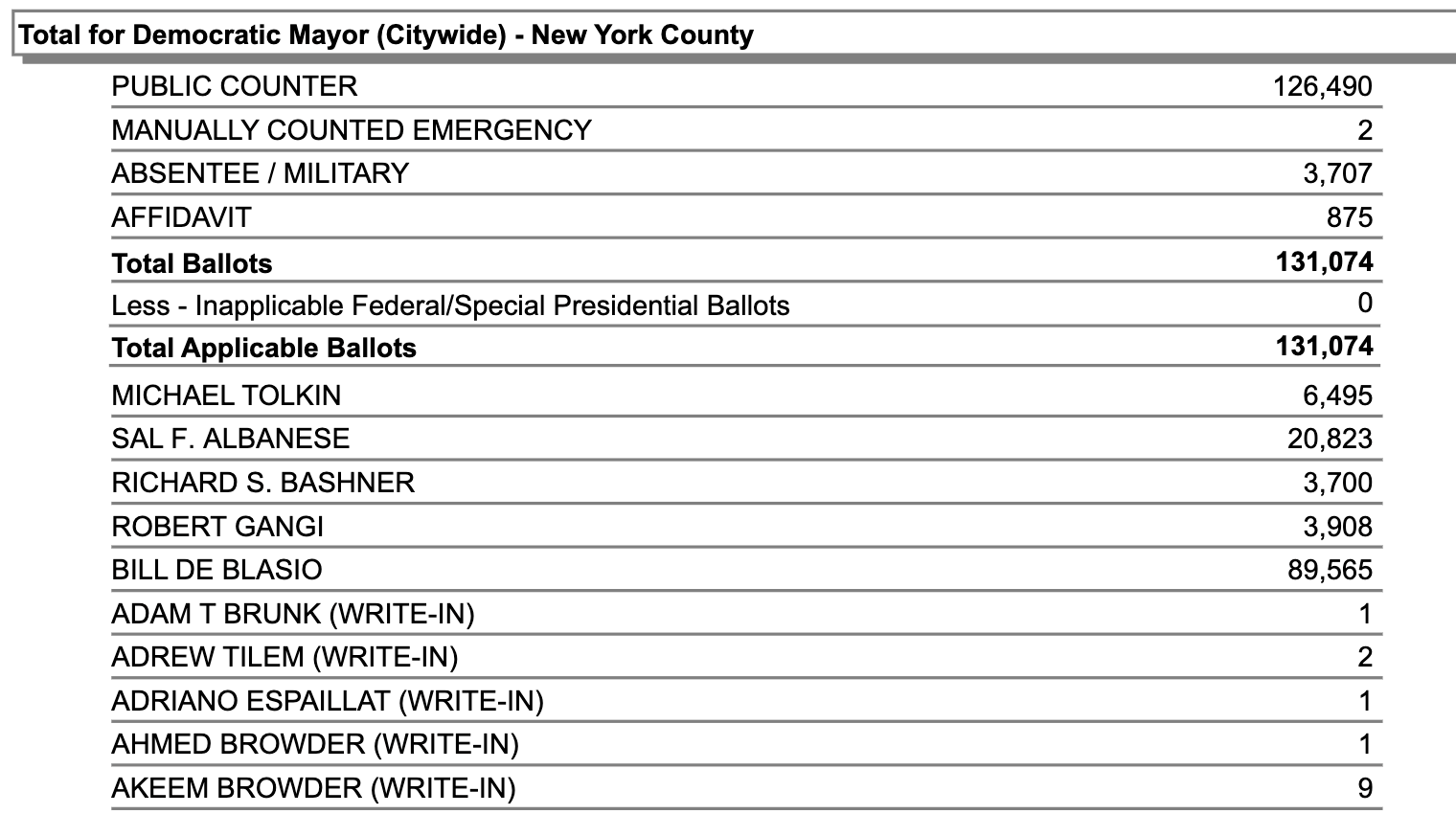}
\caption{The 2017 plurality voting results from New York City's Democratic Mayoral primary, from the official election website. (The extensive list of write-in votes is partly omitted here.) Bill de Blasio was elected in a single-round count, by securing more votes than any other candidate.}\label{fig:mayor_2017}
\end{figure}
\begin{figure}[h]
\centering
\includegraphics[width=0.8\linewidth]{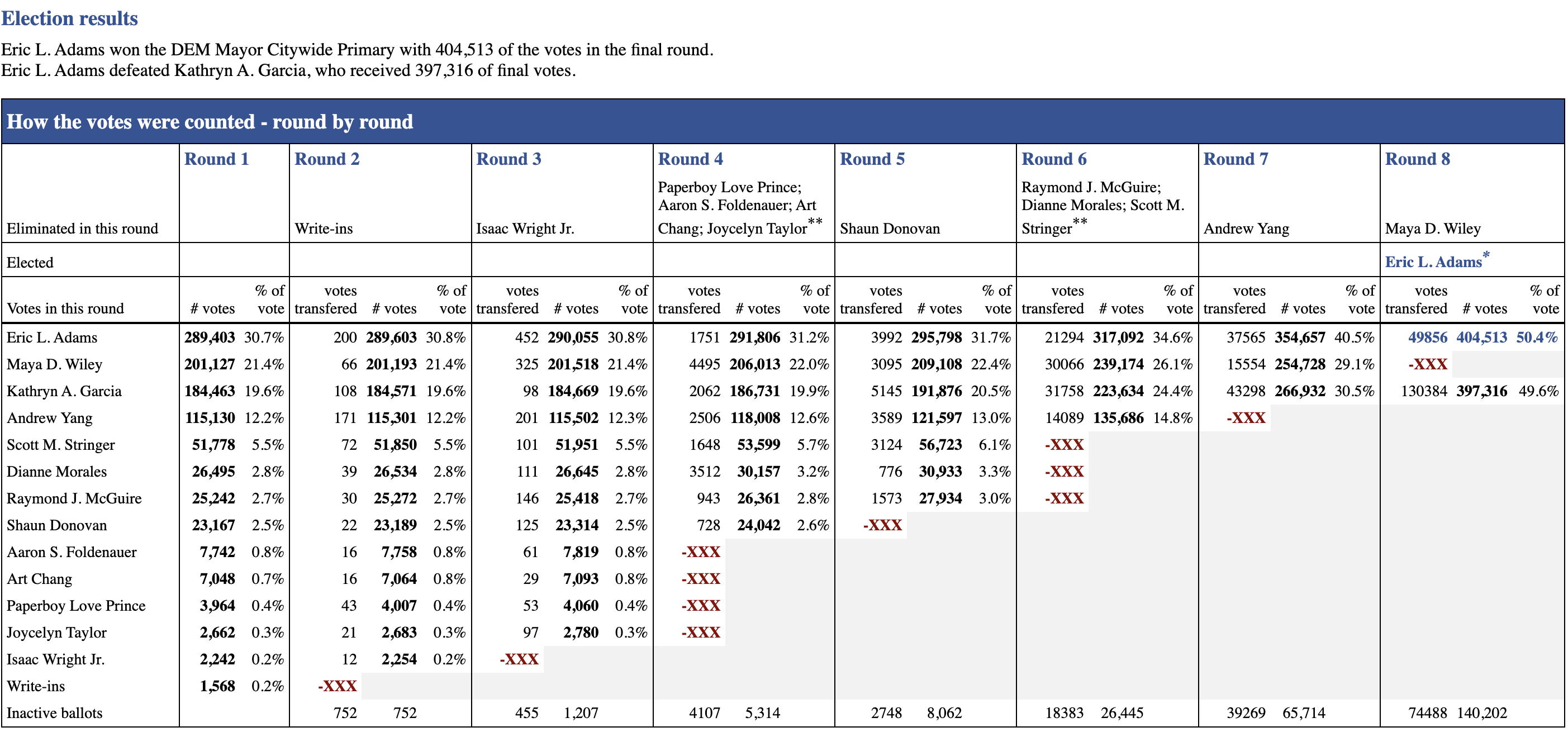}
\caption{The 2021 Ranked Choice Voting (RCV) results from New York City's Democratic Mayoral primary, as reported by the official election website. Eric L. Adams was elected in the 8th round, following the elimination of 10 candidates and the transfer of their votes.}\label{fig:mayor_2021}
\end{figure}

For instance, the official presentation in Figure~\ref{fig:mayor_2021} obscures essential competitive information: How narrowly did early-round second-place candidate Maya D. Wiley miss victory? What strategic actions could have boosted trailing candidates most effectively—building first-choice support, securing cross-endorsements, or supporting a spoiler against the key rival? Would completing the substantial `inactive' or exhausted ballots change the result? Was NYC's rank-5-candidates policy partly to blame, and was this election more competitive than the prior ones? Such questions—easily answered in plurality voting through simple vote tallies—illustrate RCV's transparency problem. 

RCV's complexity poses challenges even for expert stakeholders with access to detailed Cast Vote Records \citep{kean2023ranked}. With exponentially many elimination order possibilities and cascading vote transfers, comparing candidate standings is difficult \citep{xia2011determining}—candidates with broader later-round appeal may risk early elimination, while those with narrower early appeal may ultimately win. The election result in \Cref{fig:mayor_2021}, for instance, represents just one among $ 13!\approx 6.2$ billion possible elimination orders that could arise under different ballot profiles. This complexity obscures competitive dynamics, e.g., understanding the exact impact of voter actions, and contributes to voter confusion, strategic voting, and ballot exhaustion \citep{simmons2024sincere}. Theoretically, RCV remains susceptible to spoiler effects, where voters may achieve better outcomes by strategically ranking spoiler candidates higher than their true preference, and ballot exhaustion effects, where completing exhausted ballots can change outcomes. While such occurrences have been rare in prior analyzed elections \citep{graham2023examination}, these vulnerabilities resist systematic analysis due to RCV's inherent complexity and continue to fuel major criticisms of the system.

We tackle these transparency and complexity challenges through a computational framework that allows a large-scale validation of RCV's competitive dynamics in practice. Our approach covers both wide implementations of RCV -- single-member IRV and multi-member STV -- providing a large-scale analysis across over 100 diverse real-world RCV contests, involving up to 30 candidates and more than 900,000 voters.
For each election, our methods provide a straightforward presentation as available in plurality elections, offering answers to the above-mentioned questions. %
Specifically, our approach includes the following.

\paragraph{Computational Methods to dissect RCV's competitive dynamics:} We extend the algorithmic framework of \citet{deshpande2024optimal} to compute each candidate's optimal path to victory— traditionally an intractable problem due to RCV's cascading vote transfers. The resulting framework, termed  `Enhanced RCV Strategic Framework (ERSF)' is used to dissect elections of varying complexity: for smaller elections ($\leq$10 candidates), we provide comprehensive analysis of all candidates; for larger elections, we establish an algorithmic traceability threshold that identifies candidates who could realistically affect outcomes with additional votes up to that threshold (empirically validated at 8.5\% minimum across the dataset). This approach optimizes computational efficiency without sacrificing accuracy, focusing analysis on competitively relevant candidates.

We compute four key metrics: victory gaps (proximity to winning), ballot exhaustion impact (outcome sensitivity to incomplete ballots), strategic complexity (whether simple or coalition strategies are optimal), and preference order alignment (whether elimination order reflects true competitiveness). We apply the framework to compute these metrics in 110 real-world elections: 54 single-winner contests from New York City's 2021 local primaries, 52 single-winner contests from Alaska's 2024 statewide elections, and four multi-winner RCV contests from Portland's 2024 City Council elections, with bootstrap analysis providing robustness validation. These datasets were selected to provide: (i) sufficient scale for robust statistical analysis ($>$50 elections each for NYC and Alaska), (ii) broad coverage of RCV variants (single-winner IRV and multi-winner STV), (iii) significant electoral contexts—the largest municipal RCV rollout (NYC), the only U.S. state legislative RCV elections (Alaska), and recent multi-winner elections with high candidate density (Portland, 16–30 candidates), and (iv) diverse competition types—single-party primaries (NYC), multi-party general elections (Portland), and top-four primary system with multi-party generals (Alaska).

Our computational methods translate intricate RCV processes into clear, plurality-like metrics, enabling intuitive comparisons. As illustrations, we depict simplified, plurality-akin presentations of two RCV races---NYC's first council district with single-winner IRV and Portland's first 3-member district with multi-winner STV.\footnote{The dataset and the code are available at \url{https://github.com/sanyukta-D/Optimal_Strategies_in_RCV}.}

\paragraph{Empirical Validation of Simpler RCV Dynamics:}
Our large-scale analysis reveals that RCV exhibits surprisingly simple and transparent dynamics in practice, contrary to what theoretical concerns might imply. Elections are more competitive under RCV; ballot exhaustion has had a limited impact on electoral outcomes; and strategic behavior patterns are quantifiable and largely transparent. More precisely, we find:

(A) \textit{Positive Impact on Competitiveness:} We find substantial increases in competitiveness following RCV adoption. The average margin of victory dropped by 9.2 percentage points in NYC primaries (31\% relative reduction from 2017 to 2021) and by 11.4 points in Alaska statewide elections (42\% relative reduction from 2020 to 2024). We note that this analysis is not causal, given many other potential factors that could explain competitiveness changes over time.

 (B) \textit{Limited Impact of Ballot Exhaustion:} Although premature ballot exhaustion theoretically affects election outcomes, our analysis demonstrates low practical impact. Across 107 of the 110 elections analyzed, probabilistic completion of exhausted ballots—using several different models of voter preferences—indicated that completing exhausted ballots would be very unlikely to alter the outcomes,  indicating robustness to ballot incompleteness. Thus, RCV's vote transfer mechanism functioned effectively in practice, successfully aggregating voter preferences despite incomplete rankings in the vast majority of elections.

(C) \textit{Simple Strategic Behavior:} Despite vast theoretical possibilities, actual strategic behavior under RCV is surprisingly straightforward and predictable. Under traceability thresholds below 20\%, all 106 single-winner contests (NYC, Alaska) and 4 multi-winner contests (Portland) show exclusively self-support strategies. This shows empirical resistance to incentives for elaborate strategic voting in the ballot-addition framework. 

(D) \emph{Transparent Results}: RCV's elimination order typically mirrors underlying competitive dynamics rather than obscuring them. Under traceability thresholds below 20\%, 104 of 106 single-winner elections demonstrate perfect alignment between candidates' proximity to victory and their elimination sequence. Even at larger thresholds, 92 of 106 elections maintain full alignment, with partial alignment preserved for competitively relevant candidates. Portland's multi-winner elections exhibit similar transparency despite their added complexity.

~\\\noindent
Putting things together, these findings demonstrate that RCV's theoretical complexity rarely translates into practical complications. Our computational framework transforms RCV from an opaque multi-round process into an interpretable system, providing election administrators with tools for real-time analysis and giving policymakers empirical evidence to evaluate electoral reforms.

The remainder of the paper is structured as follows. Section~\ref{sec:lit_review} situates our work within the broader literature. Section~\ref{sec:theory} presents our computational approach and theoretical framework for assessing RCV's dynamics. We then turn to empirical analyses: Section~\ref{sec:case_studies_rcv} focuses on single-winner RCV elections in New York City (2021) and Alaska (2024), while Section~\ref{sec:casestudy_portland} examines multi-winner RCV elections in Portland’s 2024 City Council races. Apart from large-scale empirical results, these sections illustrate plurality-akin presentations of single and multi-winner RCV results, respectively. Section~\ref{sec:conclusions} concludes.

\subsection{Related Literature}\label{sec:lit_review}

Ranked Choice Voting (RCV) is an established alternative to plurality voting, implemented for over a century in Australia, Ireland, and Malta, and in U.S. cities like Cambridge, MA, since 1941. While research documents RCV's benefits in reducing vote splitting and improving representation, concerns persist regarding voter confusion, computational complexity, and strategic vulnerabilities. Despite extensive study across multiple disciplines, systematic empirical analysis remains limited by computational constraints that restrict most studies to small elections. This paper addresses these limitations through scalable computational methods and large-scale empirical investigation of real-world RCV elections.

\paragraph{Theoretical Foundations and Computational Complexity:}
RCV inherits fundamental strategic vulnerabilities from the Gibbard-Satterthwaite impossibility theorem, making it non-strategy-proof \citep{gibbard1973manipulation, arrow2012social}. More critically, single-transferable-vote manipulation is NP-hard \citep{bartholdi1991single}, with subsequent work extending hardness results to various manipulation types in both IRV and STV \citep{bartholdi1992hard, conitzer2007elections, faliszewski2009hard, ganzer2023model, xia2012computing, walsh2009manipulability}. This complexity manifests in several key questions that are NP-hard to answer: determining whether strategic voting can change outcomes, calculating exact margins of victory, predicting results with incomplete ballots, and analyzing the impact of candidate withdrawals \citep{blom2016efficient, han2023determining}. The root of this complexity lies in the exponential number of ways STV results can be realized, making it difficult to analyze alternate scenarios as the number of candidates grows \citep{brandt2016handbook}.

\paragraph{Algorithmic Approaches to RCV Analysis:}
Despite complexity challenges, several heuristics and algorithms have been devised to address crucial RCV questions  \citep{iceland2024sampling, sarwate2014efficient, ayadi2019single, ek2024improving, mccune2023ranked}.  For single-winner elections, \citet{magrino2011computing, blom2016efficient, cary2011estimating} develop algorithms to estimate margins of victory, while \citet{blom2020shifting} develop bounds on margins of victory for multi-winner STV elections. Given known complexity results, these approaches remain computationally feasible only for smaller contests and lack broader implementation to real-world data. 
Recently, \citet{deshpande2024optimal} developed algorithms for finding optimal strategic approaches, employing kernelization-like approaches by reducing the number of candidates considered, circumventing complexity barriers.  In their poll-based case study, backing rivals or spoilers was a dominant strategy. We extend their algorithms and apply them extensively to real-world data to broadly understand patterns in RCV, including the prevalence of such strategic concerns.

\paragraph{Voter Behavior and Ballot Exhaustion:}
The complexity of RCV systems creates significant challenges for voter behavior, with implications extending beyond individual voting decisions to broader electoral outcomes. \citet{simmons2024sincere} demonstrate that ``RCV appears to increase voter uncertainty around how to decide which candidates to support and leads to voters who appear to be neither sincere nor strategic." This aligns with \citet{atkeson2024impact}'s evidence that RCV's informational complexity reduces voter confidence and increases confusion, while \citet{wendland2023new} identify demographic and campaign factors that influence voters' likelihood to rank multiple candidates.

The phenomenon of ballot exhaustion—when voters submit incomplete rankings—critically threatens RCV's democratic promise. Extensive research has examined ballot length and exhaustion patterns \citep{dickerson2023empirical, baumeister2012campaigns, ayadi2019single, tomlinson2023ballot, blais2021smarter}, with disproportionate effects on minority communities \citep{mccarty2024minority} and other demographic groups \citep{neely2015overvoting, neely2008whose, santucci2017party, ClellandDuchinMcCune2025PortlandSTV}. When exhausted ballots exceed winners' margins, completion of ballots can alter outcomes \citep{mccune2023ranked, burnett2015ballot, kilgour2020prevalence}. Yet the complexity in measuring margins as well as the impact of completing exhausted ballots \citep{baumeister2012campaigns} has hampered systematic analysis of how non-uniform exhaustion undermines RCV's democratic representation—a question we address through comprehensive quantitative analysis.

Overall, while much of the previous theoretical work focuses on worst-case properties of RCV rather than typical electoral behavior, the empirical studies often lack systematic frameworks to dissect RCV or allow direct comparison with plurality systems. Recent systematic reviews (New America Foundation, 2024) synthesize the growing literature on RCV but note the continued need for large-scale empirical analysis, overcoming computational and methodological limitations.
\section{Theory}\label{sec:theory}
This section presents the computational framework for analyzing strategic dynamics in RCV and introduces key metrics that capture its characteristics. Building upon \citet{deshpande2024optimal}'s foundational algorithms, we develop the Enhanced RCV Strategic Framework (ERSF) applicable to real-world RCV elections.  The ERSF allows for efficient computation of election attributes as well as comparisons with plurality voting.

\subsection{Strategic Complexity and Foundational Approach}\label{sec:foundational_approach}
Plurality voting enables straightforward strategic analysis: the winner simply needs the most votes, making the margin of victory a direct measure of electoral competitiveness. In contrast, RCV introduces fundamental complexity through its elimination process, where vote transfers cascade across multiple rounds depending on the interaction of all ranked ballots. 
This structural difference makes strategic assessments computationally prohibitive in RCV systems. Questions trivial in plurality voting—such as determining the minimum ballot modifications needed to alter outcomes or quantifying electoral robustness—become challenging computational problems rather than simple arithmetic calculations \citep{bartholdi1992hard}.

Given the challenge, \citet{deshpande2024optimal} developed algorithms for strategic analysis in real-world scenarios. Their approach addresses finding the minimum number of ranked ballot additions needed to make particular election results viable, and is demonstrated on 2024 Republican primaries. We apply and extend their framework to analyze a large set of US elections.

\subsubsection{Preliminaries and Notation}\label{sec:preliminaries}

Let $\mathcal{C}=\{1,\dots,m\}$ be the set of candidates and $\mathcal{B}$ the multiset of (possibly partial) ranked ballots.  
We fix a positive integer $k\le m$—the \emph{seat count}—so the top $k$ positions of the final ranking are declared winners.  
Throughout the paper, we use the term \emph{Ranked Choice Voting (RCV)} to denote multi-winner \emph{Single Transferable Vote} (STV), which simplifies to single winner \emph{Instant-Runoff Voting} (IRV) when $k=1$. 

We adopt the widely used weighted-inclusive Gregory transfer \citep{PRF2024WIGM} with \emph{Droop quota} $Q$ defined as:
\[
Q = \lfloor | \mathcal{B} |/(k+1) \rfloor + 1
\]

\paragraph{RCV function.}
In each round, an active candidate whose vote total reaches $Q$ becomes a winner, exits the contest, and her \emph{surplus} votes (the excess over $Q$) are redistributed fractionally to later active choices.  
If no candidate reaches $Q$, the active candidate with the fewest votes is eliminated and all her votes transfer fully to the next active preference. Initialized empty, the social choice order places a winning candidate at the highest available spot, while a losing candidate is placed at the lowest.
A fixed tie-breaking hierarchy on $\mathcal{C}$ resolves any ties.  
The process terminates once every candidate is inactive; the first $k$ candidates in the resulting social choice order form the winning set. Candidates still in competition in a round are called \emph{active} in that round, and those already removed—either as winners or losers—are \emph{inactive}.  

Running the RCV function on $\{\mathcal{C},\mathcal{B}\}$ thus outputs a \emph{social choice order} $f\in \{ m! \text{ orders}\}$ (the final permutation of $m$ candidates) \emph{and} a binary \emph{round sequence} $s\in\{W,L\}^{\,m-1}$ recording whether each round ends with a win ($W$) or an elimination ($L$).  
The pair $(f,s)$ is called a \emph{structure}.  With the tie-breaking hierarchy fixed, every ballot profile maps to a unique structure (its election result), and the ballot simplex partitions into $m!2^{m-1}$ structures. This enables optimization algorithms to operate locally within each structure, with global optimality achieved by evaluating all possible structures.

\subsubsection{Algorithmic Approach}\label{sec:existing_approach}

A central aspect of analyzing RCV dynamics is quantifying the ``distance'' to alternative election outcomes (structures). We operationalize this distance as the \emph{minimum number of additional ballots} required to move from one structure to another. This additive metric is strategically meaningful as it mirrors real-world campaign tactics—such as Get Out The Vote (GOTV) initiatives—informing on voter mobilization\footnote{Our metric measures vote addition (`bribery' in the computational social choice literature) rather than altering existing preferences (`control'). While both manipulation strategies are computationally hard, the number of additions needed provides an upper bound for the number of alterations required—since each alteration is at least as meaningful as an addition. Thus, the additive metric offers a conservative estimate of manipulation difficulty.}.

The framework introduced in \citet{deshpande2024optimal} provides two algorithmic components to compute the metric:

\begin{enumerate}
    \item  \underline{Optimal Ballot Addition:} An algorithm to compute the minimum set of ballots (with explicit rankings) required to realize a desired structure. The algorithm processes each round sequentially in a greedy way, adding only the votes necessary to establish the required elimination or win at that stage  (see \Cref{app:prior_algo} for details). Iterating over all possible structures enables optimal strategy identification for each candidate. However, this process becomes computationally prohibitive for elections with more than $m=8$ candidates.

    \item \underline{Candidate Reduction:} An algorithm to identify candidates who cannot influence the election outcome, even with the addition of up to $B$ new ballots to $\mathcal{B}$. These candidates are called \emph{irrelevant} up to parameter $B$, the \emph{allowance}, that represents the maximum number of additional ballots considered when determining candidate irrelevance. Once irrelevant candidates are removed, the optimal ballot addition algorithm operates on the remaining  candidates to find optimal strategies within $B$. The \emph{algorithmic traceability threshold} is the maximum allowance that successfully reduces the candidate set.
\end{enumerate}

Both algorithms run in polynomial time with respect to $m$ (number of candidates) and $|\mathcal{B}|$ (number of ballots). For completeness, we reproduce the algorithms with brief descriptions in \Cref{app:prior_algo} as \Cref{algo: allocation} and \Cref{algo:candidate-removal}. Note that, since it is not guaranteed that the number of candidates can be reduced (especially with a large tolerance $B$), the worst case runtime remains intractable.

\subsection{Enhanced RCV Strategic Framework}\label{sec:computational_framework}

We address a few specific limitations of the foundational algorithms while maintaining polynomial-time complexity. We refer to our comprehensive solution as the ERSF. 

\begin{enumerate}
\item \textbf{Robustness and ballot exhaustion}: \Cref{algo: allocation} outputs optimal strategies as ballots with specific fixed rankings, but these strategies may fail if voters add preferences beyond those specified. We extend the algorithm to generate length-constrained strategies that remain optimal even when arbitrary additional preferences are appended (\Cref{sec:robust_allocation}), aligning outputs with real-world ballot constraints and unpredictable voter behavior. This robustness extension forms the basis of measuring the impact of ballot exhaustion.

\item \textbf{Improving algorithmic traceability thresholds}: The candidate removal conditions in \Cref{algo:candidate-removal} may often be conservative, yielding lower algorithmic traceability thresholds than achievable.  We strengthen removal conditions to handle broader election instances (\Cref{sec:improve_last_condition}), especially enabling analysis of larger elections with up to 120\% higher algorithmic traceability thresholds compared to earlier. 

\item \textbf{Containing early winner STV elections}: \Cref{algo:candidate-removal} can only remove consecutively eliminated candidates, thus fails to handle multi-winner elections with early winner(s), leaving such instances computationally prohibitive. We extend the algorithm for these scenarios (\Cref{sec:extension_multi-winner}). Our resulting approach can analyze, for example, Portland's elections with up to 30 initial candidates.

\end{enumerate}

Additionally, we implement parallelization and memoization techniques that extend practical computational limits from 8 to 10 candidates, after candidate reduction. (\Cref{app:proofs})

The ERSF powers our empirical analysis of large-scale RCV elections across diverse electoral contexts.  We use this analysis to measure RCV election attributes (\Cref{sec:rcv_attributes}). We next overview the computational enhancements, while relegating longer technical proofs to \Cref{app:proofs}. The formal specifications of the ERSF—\Cref{algo: robust_allocation} and \Cref{algo:candidate-removal-updated}—are detailed in \Cref{app:enhanced_algo}.

\subsubsection{Robust Ballot Allocation Strategies}\label{sec:robust_allocation}

Algorithm~\ref{algo: allocation} outputs optimal ballot addition strategies to achieve particular election outcomes. However, these strategies have two practical limitations: they specify ballots with exact fixed rankings (e.g., [A, B, C]), and their optimality can break if voters add preferences beyond those specified (e.g., submitting [A, B, C, D, E] instead of [A, B, C]). Understanding real-world voter mobilization requires strategies that remain optimal despite unpredictable voter behavior---specifically, strategies robust to voters adding arbitrary preferences beyond what's specified, and strategies that respect length constraints (such as ballot formats limiting rankings to 5 candidates). We thus extend \Cref{algo: allocation} to guarantee robustness to additional preferences and to output length-constrained strategies. The enhanced allocation algorithm is presented as \Cref{algo: robust_allocation} in \Cref{app:enhanced_algo}.

\begin{restatable}{proposition}{robustchoices} \label{thm: robust_subsequent_choice_part1}
Algorithm~\ref{algo: allocation} admits extensions that output:
\begin{enumerate}
\item Optimal strategies robust to suffixing arbitrary subsequent preferences to any ballot in the strategy.
\item Optimal strategies robust to prefixing ballots with irrelevant candidates, as determined by Algorithm~\ref{algo:candidate-removal}.
\item Optimal length-restricted strategies, including single-choice ballots.
\end{enumerate}
All extensions preserve polynomial-time complexity and optimality within their respective constraints.
\end{restatable}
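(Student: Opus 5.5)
The plan is to work structure by structure, reusing the round-by-round greedy logic of Algorithm~\ref{algo: allocation}. Fix a target structure $(f,s)$. Algorithm~\ref{algo: allocation} processes rounds in order. At each round it adds just enough ballots so that the required candidate wins (reaches $Q$) or is the unique lowest among active candidates. It then passes these ballots on: a ballot headed by an eliminated candidate $c$ transfers to its next \emph{active} choice. The central observation is this. A ballot's effect on the tally in each round depends only on the first candidate on it who is still active in that round. I would therefore classify each added ballot by the round index at which it would become exhausted, or pass to an unspecified candidate. The extensions then come from constraining where ballots may land.

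\textbf{Part 1 (suffix robustness).} I would make the greedy allocation build ballots whose specified rankings already reach the candidate who stays active longest in the target order. A ballot of the form $[c_1,\dots,c_j]$ whose last-specified candidate $c_j$ is a winner, or is the last remaining candidate, never transfers past $c_j$ while $c_j$ is active. So any suffix matters only after $c_j$ leaves. If $c_j$ is eliminated, its votes go to later preferences. The danger is that appended preferences feed a rival after $c_j$'s elimination. The fix is to restrict the algorithm to ballots in which, after any eliminated candidate, the ballot either names candidates consistent with $f$ or stops where no later round can be affected. Equivalently, in the LP/greedy step I treat every unspecified continuation as adversarial: a ballot's transfer after its last specified candidate leaves counts as possibly going to any active candidate. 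The constraint "target candidate lowest" must then hold in the worst case, which adds those votes to the loser's rival side in the inequality. I would then show the greedy step still finds the minimum under this worst-case accounting. The exchange argument of Algorithm~\ref{algo: allocation} goes through with per-round thresholds that are larger but still computed in closed form, so optimality and polynomial time carry over.

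\textbf{Part 2 (irrelevant prefixes).} Candidates removed by Algorithm~\ref{algo:candidate-removal} are eliminated before every relevant candidate under any addition of at most $B$ ballots. So a ballot prefixed by such candidates transfers intact to its first relevant choice before any round of the reduced election. I would argue that prefixing cannot change any relevant round's tallies. The removal condition guarantees that the irrelevant candidates' own eliminations stay in the same rounds, which is the hypothesis under which Algorithm~\ref{algo:candidate-removal} was proven. Optimality is therefore inherited from the reduced instance.

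\textbf{Part 3 (length restriction).} I would add a length parameter $\ell$ and restrict the ballot types available to the greedy to sequences of length at most $\ell$. For single-choice ballots, $\ell=1$, each ballot only adds to its candidate until that candidate exits, so the per-round requirements decompose candidate by candidate. The number of ballot types stays polynomial for fixed $\ell$, because only prefixes along the target order are needed, so runtime is preserved.

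\textbf{Main obstacle.} I expect Part~1 to be the hardest step. The difficulty is proving that worst-case accounting of unspecified continuations still lets the round-by-round greedy be globally optimal rather than merely feasible. My first attempt would be to show that an optimal robust solution never needs ballots that become unspecified before their target round. Then the adversarial term vanishes, and the original optimality proof applies to a restricted ballot family.
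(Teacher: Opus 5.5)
\paragraph{Parts 1 and 2.} These follow the paper's proof. The paper likewise tracks the ``free'' ballots, i.e.\ added ballots whose specified candidates have all left, and folds a worst-case transfer term into the per-round thresholds of Algorithm~\ref{algo: allocation}. It justifies irrelevant prefixes as you do: candidates removed by Algorithm~\ref{algo:candidate-removal} leave as a block before any relevant round. The removal condition guarantees only this block property, not that each irrelevant candidate keeps its own round, but the block property is all you use.

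Two cautions on Part 1. First, in an elimination round the worst case must charge the free ballots $F_r$ to the candidate being eliminated, so each rival needs $v^r_{C_i}+x_{C_i}\ge v^r_{C_s}+F_r+1$. In a win round, $C_s$ must reach $Q$ with no help from the free ballots. Second, the lemma in your last paragraph would not make the adversarial term vanish. A ballot backing a spoiler becomes free exactly when it has done its job, and from then on its unspecified tail is what must be charged adversarially. The paper keeps that term rather than trying to remove it.

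\paragraph{Part 3: the genuine gap.} You treat a length cap as merely shrinking the menu of ballot types, and for $\ell=1$ you claim the requirements decompose candidate by candidate. This misses why Algorithm~\ref{algo: allocation} uses length-two ballots at all.

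In a round where $C_s$ must win, line 20 routes the votes that lift $C_s$ to the quota through ballots $[C_{last},C_s]$. These count for $C_{last}$ in the previous round and reach $C_s$ only after $C_{last}$ is eliminated. A single-choice ballot for $C_s$ counts from round 1. So when $C_{last}$ transfers few votes to $C_s$, the additions needed in round $r$ already put $C_s$ over $Q$ in round $r-1$. Then $C_s$ wins prematurely and the target structure is not realized.

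Single-choice additions therefore face upper bounds $v^{r'}_{C_s}+x_{C_s}<Q$ for every $r'$ before the designated win round. These sit on top of lower bounds that tie each survivor's additions to those of the candidate being eliminated. The requirements do not decompose candidate by candidate.

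The paper's argument for (3) is built around exactly this point. With reuse disabled the timing device is unavailable, so the extension must detect the conflict and return infeasible, which only a larger quota or budget can cure. Without that check, your greedy can output a ``strategy'' that does not produce the structure it was computed for. For general $\ell$, what actually changes is that reuse of free ballots must be capped, with fresh ballots created once a ballot is full. Your ballot-type description does not address this.
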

\begin{proof}[Proof Sketch]
\Cref{algo: allocation} functions in a greedy way, adding votes to align the round result with desired round result, either via introducing new empty ballots or adding preferences to previously introduced ballots. 
For suffix-robust strategies, we track potential vote transfers from eliminated candidates and modify the vote allocation logic to account for all worst-case transfer scenarios that may challenge a round result. This ensures strategies remain effective regardless of subsequent preferences on added ballots. For prefix-robust strategies, we show that the irrelevant candidate-removal logic supports prefixing irrelevant candidates. For length-restricted strategies, we adapt the vote tracking mechanism to respect ballot length constraints, creating new ballots instead of reusing existing ones when necessary.  See \Cref{app:proofs} for complete proofs.
\end{proof}

Since strategies are additive, the robustness extends beyond practical implementation to modeling ballot exhaustion's electoral impact. A ballot exhausts when all its ranked candidates have been eliminated. When such exhausted ballots are ``completed'' by adding preferences for remaining candidates, this completion can be seen as equivalent to adding new strategic ballots from the exhaustion round onward. This is because the exhausted portion contains only inactive candidates already eliminated, and the ballots have `freed up' to accommodate strategic vote transfers.  

However, this equivalence would be meaningful if the optimal strategy (coming from \Cref{algo: robust_allocation}'s output) \emph{remains effective} when exhausted ballots serve as ``prefixes'' to optimal additions. That is, modified ballots of the form [original exhausted preferences,  \Cref{algo: robust_allocation} specified strategic preferences] are as effective as new additions [\Cref{algo: robust_allocation} specified strategic preferences].

The robustness guarantees in \Cref{thm: robust_subsequent_choice_part1} ensure additive strategies work despite irrelevant candidate prefixes and arbitrary suffix preferences. For exhausted ballot completion, we extend this concept: already-eliminated (not necessarily irrelevant) candidates in exhausted ballots function as prefixes to additive strategies without affecting their optimality. This
enables complete characterization of which candidates could win through ballot completion.

\begin{restatable}{proposition}{ballotexhaust} \label{prop: ballot_exhaust} 
\Cref{algo: robust_allocation} determines the set of candidates that can achieve victory through completion of exhausted ballots.
\end{restatable}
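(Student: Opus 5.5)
We reduce ballot completion to ballot addition, which \Cref{algo: robust_allocation} already handles optimally. Fix the observed profile $\mathcal{B}$ and a candidate $c$. Let $E\subseteq\mathcal{B}$ be the ballots that exhaust in the actual tabulation. For each $b\in E$, let $r(b)$ be the round in which $b$ exhausts, so every candidate ranked on $b$ is inactive by round $r(b)$.

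A completion appends preferences over candidates absent from $b$. We would prove two inclusions:
\begin{enumerate}
\item every outcome reachable by completing ballots in $E$ is a structure that \Cref{algo: robust_allocation} certifies as reachable within the allowance $|E|$ (with exhaustion-round constraints);
\item every structure that \Cref{algo: robust_allocation} finds with a strategy using at most $|E|$ ballots, each needed only from round $r(b)$ onward, can be realized by completing exhausted ballots.
\end{enumerate}
The set of candidates that can win by completion is then the set of $c$ that win in some such structure.

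\textbf{Step 1 (completion is a prefixed addition).} Write a completed ballot as $b\oplus t$, where $t$ is the appended suffix. Suppose the target structure agrees with the actual run on the eliminations and wins that made $b$ exhaust. Then up to round $r(b)$, the completed ballot $b\oplus t$ contributes exactly what $b$ did, and afterwards it behaves exactly like a fresh ballot $t$ whose earlier preferences are already inactive.

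\textbf{Step 2 (reordering the prefix).} If the target structure reorders those early rounds, we use that the relevant prefix candidates are still eliminated before $t$'s candidates receive transfers. This is the \emph{prefix robustness} of \Cref{thm: robust_subsequent_choice_part1}(2), generalized from irrelevant candidates to already-eliminated ones. Under weighted-inclusive Gregory, a ballot whose first active choice lies in $t$ passes its full weight (or its surplus fraction) to $t$'s candidates. Such a ballot is therefore identical in effect to a new ballot $t$ entering in that round.

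\textbf{Step 3 (both directions via robust strategies).} Given these equivalences, the \emph{suffix robustness} in \Cref{thm: robust_subsequent_choice_part1}(1) shows that the strategies returned by \Cref{algo: robust_allocation} remain valid under arbitrary appended tails. This also covers completed ballots whose appended preferences extend beyond what the strategy specifies. For direction (1), an optimal-cost argument applies: any completion realizing $c$'s win is a feasible addition of at most $|E|$ ballots. The optimality of \Cref{algo: robust_allocation} therefore gives a strategy of no larger cost. For direction (2), we assign each strategy ballot to a distinct exhausted ballot that has exhausted by the round the strategy ballot first needs to count, and append the strategic ranking. To make this assignment possible we would run \Cref{algo: robust_allocation} with the budget indexed by round: at most $|\{b\in E: r(b)\le \rho\}|$ ballots may be active by round $\rho$. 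We would show the greedy round-by-round allocation respects this cumulative constraint because it adds ballots only when a round requires them.

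\textbf{Main obstacle.} The hardest step is the round-dependent budget. A target structure may change the elimination order so that some ballots in $E$ exhaust earlier or later than they actually did, or not at all. We would first handle structures that share the actual prefix up to the relevant exhaustion rounds. We would then argue that any completion changing the earlier order must already alter a round decided before the relevant ballot's exhaustion, using only other ballots that are still active. That is a contradiction unless those ballots themselves were completed, and in that case we induct on rounds. If this induction fails in general, we would weaken the claim to the candidates identified relative to the observed exhaustion rounds, which is what the algorithm computes.
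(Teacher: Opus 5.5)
Your reduction is the same as the paper's. A completed ballot $b\oplus t$ acts like a fresh strategy ballot once $b$ has exhausted. Suffix robustness comes from \Cref{thm: robust_subsequent_choice_part1}, and the budget is the exhausted ballots available by the activation round. However, two steps do not go through as written.

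\textbf{Step~2 and the round-indexed budget.} Step~2 does not follow from \Cref{thm: robust_subsequent_choice_part1}(2). That prefix robustness concerns candidates removed by \Cref{algo:candidate-removal}, which are eliminated first under \emph{every} admissible addition. A relevant candidate eliminated in the actual run can instead be kept alive, or even elected, by the strategy itself. Then $b\oplus t$ keeps counting for that prefix candidate, or passes on only a surplus fraction. The actual exhaustion rounds $r(b)$ then stop describing when ballots become available, in both directions: a completion can also make other ballots exhaust \emph{earlier} than in the actual run. This undermines your round-indexed budget in direction~(2) and the emulation in direction~(1).

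The induction you sketch does not yield a contradiction. Completions of early-exhausting ballots legitimately reorder later rounds; that is exactly how multi-stage strategies operate. The paper sidesteps this by anchoring at $r_c$, the earliest round in which $c$'s strategy changes anything. It then uses only ballots exhausted by round $r_c-1$, giving the condition $g_c\le E_{r_c-1}$. Because the two tabulations coincide before $r_c$, every candidate on such a ballot is inactive in both runs. Completing it is therefore literally a fresh ballot entering at $r_c$, and no generalized prefix robustness is needed. That is your ``shared prefix'' case. Beyond it, your plan ends in ``weaken the claim,'' which is not a proof of the statement. The paper's per-activation extension $g_{c_i}\le E_{r_{c_i}-1}$ does not confront this issue either, so the anchored case is the rigorous core.

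\textbf{The quota.} You never account for it. Completing exhausted ballots leaves $|\mathcal{B}|$, and hence $Q=\lfloor |\mathcal{B}|/(k+1)\rfloor+1$, unchanged. The strategies of \Cref{algo: robust_allocation}, by contrast, add ballots and raise $Q$. For $k=1$ this does not change who wins. For $k\ge 2$, however, your claim that a completed ballot is ``identical in effect'' to a new ballot $t$ is false:
\begin{itemize}
\item Under the lower quota, a candidate can reach $Q$ in a round the strategy treated as an elimination round. This shifts when surpluses are transferred and at what weights, and the change can cascade through the rest of the round-result sequence.
\item A ballot that passed through an elected candidate before exhausting carries only a fractional weight into $t$, not a full vote.
\end{itemize}
So both of your inclusions can break for STV. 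The paper handles this explicitly in the multi-winner case by restricting to strategies that preserve the round-result sequence, and it verifies this computationally for Portland.
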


\begin{proof}[Proof Sketch]
A candidate can win via ballot completion if their required strategic votes are fewer than the available exhausted ballots by the time their strategy would activate. In single-winner IRV, we show that this condition is sufficient. In multi-winner STV, adding ballots increases the Droop quota, which can strategically alter the round-result sequence (changing `win' rounds to `elimination' rounds or vice versa). We show that the condition remains sufficient for strategies that preserve the sequence structure---such strategies are identified through \Cref{algo: robust_allocation}'s output constrained by the available exhausted ballot pool. (See \Cref{app:proofs} for complete details.)
\end{proof}

This characterization allows assessment of whether ballot exhaustion, i.e., incomplete presentation of voter preferences, impacted the electoral outcome, quantifying how completely the result captures the underlying preferences. 

\subsubsection{Strengthened Candidate Removal Conditions}\label{sec:improve_last_condition}

The original removal condition eliminates candidate groups when \emph{all} candidates in the lower group have fewer votes than any upper group candidate, even when up to $B$ additional ballots may be added. We strengthen this by recognizing that candidates may be safely removed even if they temporarily survive an elimination round.

\begin{restatable}{theorem}{augmentremoval} \label{thm: augment_removal}
Algorithm~\ref{algo:candidate-removal} can be strengthened to remove candidates who survive one elimination round but are guaranteed elimination in the subsequent round, preserving optimality while enabling more aggressive pruning.
\end{restatable}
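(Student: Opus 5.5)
The plan is to first make precise what the removal condition of \Cref{algo:candidate-removal} certifies, and then show that a one-round relaxation still certifies the same invariant. As I read the earlier framework, a group $L$ of trailing candidates is \emph{irrelevant up to $B$} when, for every addition of at most $B$ ballots, every candidate of $L$ is eliminated before any candidate of the upper group $U=\mathcal{C}\setminus L$ is removed. Their votes then merely flow into $U$, so optimal strategies can be computed on $U$ alone. The original sufficient condition compares the \emph{total} votes of $L$, after all transfers among its members plus $B$, against the smallest first-preference tally in $U$. I would restate this invariant as the target. The proof then reduces to showing that the strengthened condition also guarantees that no candidate of $U$ is eliminated or wins while some candidate of $L$ is still active, under any addition of at most $B$ ballots.

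The strengthened condition I would formalize is as follows. Let $c\in U$ be the weakest upper candidate. It may be the case that $c$'s current tally does not dominate the largest possible tally of every $L$-candidate plus $B$, so the old test fails. Suppose instead that (i) in the round where $c$ is at risk, some $\ell\in L$ has strictly fewer votes than $c$ even after the adversary's $B$ ballots are optimally placed, and (ii) after $\ell$'s votes transfer, $c$ receives enough to exceed every remaining $L$-candidate's maximum attainable total. I would write (ii) as a worst-case inequality. On one side is $c$'s guaranteed tally, meaning its first preferences plus transfers from $\ell$ that cannot be diverted, that is, transfers from existing ballots only. On the other side is the maximum any $L'=L\setminus\{\ell\}$ candidate can reach, counting full transfers from $\ell$ plus $B$. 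The proof would go by induction on rounds. The base step uses (i) to show that the round eliminates a member of $L$, not $c$. The inductive step uses (ii) to show that the reduced instance $L'$ satisfies the original removal condition, which \Cref{algo:candidate-removal} already proves safe.

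The key technical step, and the expected obstacle, is handling the adversary's freedom to split the $B$ ballots across rounds. The same ballots could boost an $L$-candidate in the first round to save it, and then transfer to another candidate later. I would address this by a budget argument. Any ballot counted toward keeping some $\ell'\in L$ above $c$ in the next round contributes at most one vote to that round's tallies. So the sum of needed boosts across both rounds is bounded by the per-round gaps, and it suffices to check $B$ against the minimum gap in each round separately, using the fact that ballot additions never decrease any active candidate's tally. I would also check that ties are resolved by the fixed hierarchy consistently with the strict inequalities. Finally, optimality is preserved for the following reason. Any structure realizable within $B$ on the full profile places $L$ at the bottom of the social choice order in the same relative way, so the minimum-addition values computed on the reduced instance coincide with those on the full instance. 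The additional checks cost $O(m)$ per candidate pair per round, which keeps polynomial time.
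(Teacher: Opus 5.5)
\textbf{Gap: the proposal keeps the invariant the theorem relaxes.} Your argument is built to preserve the invariant that every member of $L$ leaves before any member of $U=\mathcal{C}\setminus L$. Your condition (i) exists precisely to keep the weakest upper candidate from going out first. But that invariant is what the theorem relaxes.

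The candidates the theorem targets are those $C_i\in L$ for which the original test fails, $B+S[C_i]\ge \min_{C_j\in U}S_j^i[C_j]$, so that $C_i$ can outlast the weakest upper candidate $C_{worst}$. Such a $C_i$ survives the round in which $C_{worst}$ is eliminated. The point is to remove $L$ anyway, provided $C_i$ is certainly eliminated in the next round. The paper certifies this with two checks for each such $C_i$:
\begin{enumerate}
\item the budget cannot shield both eliminations at once, $2S_j^i[C_{third}]-S_j^i[C_{second}]-S_j^i[C_{worst}]>B$;
\item after simulating $C_{worst}$'s elimination with \textsc{ReduceElectionInstance}, $S^{temp}[C_i]+B<\min_{C_j\in U\setminus\{C_{worst}\}}S_j^i[C_j]$.
\end{enumerate}
Your optimality step says every structure realizable within $B$ puts $L$ at the bottom of the social choice order. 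That rules out this one-round inversion by construction, so your condition cannot deliver the pruning the theorem claims. What is needed instead is an argument that eliminating $C_{worst}$ and then $C_i$ leads to the same continuation as the reduced instance.

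\textbf{Misreading of the baseline test.} The mismatch partly comes from misreading the original test. The test in \Cref{algo:candidate-removal} does not compare against first-preference tallies in $U$, because $S_j^i[C_j]$ already counts $C_j$'s transfers from $L\setminus C_i$. Consequently, the original test restricted to $C_i\in L\setminus\{\ell\}$ is exactly the original test on the instance with $\ell$ removed. Re-establishing that test is all your (ii) is used for. So the only real relaxation in your scheme is dropping the test for $\ell$ itself in favor of (i).

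\textbf{Gap in your own induction.} Even that relaxation is not justified by your argument. Condition (i) only shows that $c$ is not the lowest candidate in that round. The member of $L$ actually eliminated may be some $\ell_2\neq\ell$, and its transfers can lift $\ell$ above $c$ (or above another upper candidate) in the following round. Your inductive step applies (ii) to $L\setminus\{\ell\}$, which presupposes that $\ell$ was the one eliminated. Nothing you verify controls $\ell$ in the instance obtained by removing $\ell_2$. The original test for $\ell$, which would have controlled it, is exactly what you discarded.
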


\begin{proof}[Proof Sketch]
When a candidate $C$ in the lower group could potentially outlast the lowest-vote candidate in the upper group, we verify: (1) whether the budget $B$ is sufficient to save both candidates simultaneously, and (2) whether $C$ would still be eliminated in the next round after the lowest-vote upper group candidate is eliminated. If both conditions confirm inevitable elimination, we can safely remove $C$ along with the lower group. See \Cref{app:proofs} for the complete mathematical proof and the corresponding algorithm.
\end{proof}

The extension preserves the original algorithm's complexity of $O(nm^4)$ while enabling more effective reduction of the problem size in practice. This improves algorithm search capability, 
enabling 55\% medium to big-sized ($\geq 6$ candidates) elections to achieve higher algorithmic traceability thresholds, with an average 39\% gain for larger elections ($\geq 10$ candidates). See \Cref{app:strong_removal_impact} for details.

\subsubsection{Containing Multi-Winner Instances} \label{sec:extension_multi-winner}
We next extend the applicability of \Cref{algo:candidate-removal} to effectively contain \emph{multi-winner} election instances with an \emph{early} round-winner(s). In such cases, \Cref{algo:candidate-removal} may result in a small set of irrelevant candidates (i.e., limiting to few eliminations before any win), leaving a computationally prohibitive relevant search space. For such instances, we write conditions to contain winner(s) in the set of irrelevant candidates, while still ensuring that the procedure doesn't affect the optimality. These crucially enable the multi-winner Portland case study in \Cref{sec:casestudy_portland}. 

\begin{restatable}{theorem}{candidateremovalextension} \label{thm: irrelevant_extension}
    Let an application of \Cref{algo:candidate-removal} with $Q_{new} = (k+1) Q$ result in a set of irrelevant candidates $L$, and relevant candidates $\mathcal{C}\setminus L$. After eliminating candidates in $L$ (and transferring subsequent votes to $\mathcal{C}\setminus L$), if candidate $C_w \in \mathcal{C}\setminus L$ secures more than $Q$ votes, 
    then satisfiability of polynomial-time verifiable conditions validates the optimality of \Cref{algo:candidate-removal} in retaining $\mathcal{C}\setminus 
 L$.
\end{restatable}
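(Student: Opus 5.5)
The plan is to reduce the multi-winner instance to a single-winner-like analysis in which wins cannot interrupt the early elimination phase, and then to certify that the early winner $C_w$ cannot change anything that matters for the candidates in $L$.

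\textbf{Step 1: remove the early win by inflating the quota.} Running \Cref{algo:candidate-removal} with $Q_{new}=(k+1)Q$ makes every round in the removal analysis an elimination round. No candidate can reach $Q_{new}$ with at most $B$ additions, since $(k+1)Q > |\mathcal{B}|+B$ whenever $B$ is below the traceability threshold. Under this inflated quota, the original guarantee of \Cref{algo:candidate-removal} applies. For every profile obtained by adding at most $B$ ballots, the candidates in $L$ are eliminated consecutively, in some order, before any candidate of $\mathcal{C}\setminus L$, as long as no win intervenes. The comparisons in the removal conditions only involve vote totals of lower-group versus upper-group candidates, and they do not depend on the quota. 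So the guarantee says that the elimination dynamics among $L$ are dominated by $\mathcal{C}\setminus L$ throughout.

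\textbf{Step 2: make the winner's surplus transfer harmless.} With the true quota $Q$, the difference is that $C_w$ may cross $Q$ during or after the elimination of $L$. When that happens, $C_w$ is declared a winner, and its surplus is transferred fractionally, weighted-inclusive Gregory, to the next active preferences. This transfer can raise some candidate of $L$ above a candidate of $\mathcal{C}\setminus L$, which would break the ordering from Step 1.

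I therefore state the polynomial-time verifiable conditions as follows. For every candidate $\ell\in L$ and every round at which $C_w$ could be elected, the maximum possible surplus share flowing to $\ell$, plus up to $B$ added ballots, stays strictly below the minimum total of the weakest remaining candidate in $\mathcal{C}\setminus L$. Since $|\mathcal{B}|$ grows by at most $B$, the quota can grow by at most $\lceil B/(k+1)\rceil$ under additions, and the worst-case surplus should be computed with this taken into account. I also require that $C_w$ is not itself capable of being pushed below the threshold of elimination, so that it is guaranteed to win. Each of these checks is a comparison of tallies over at most $m$ candidates and rounds, which keeps the overall complexity polynomial.

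\textbf{Step 3: conclude optimality.} Given the conditions, I argue by exchange. Take any optimal strategy with at most $B$ additions on the full instance. Its realized structure eliminates all of $L$ before any other candidate of $\mathcal{C}\setminus L$ is eliminated, with $C_w$'s win possibly interleaved. The votes of $L$ therefore end up on $\mathcal{C}\setminus L$ exactly as in the reduced instance. Any ballot preferences placed on $L$ can be moved to the next candidate without loss, using the prefix-robustness of \Cref{thm: robust_subsequent_choice_part1}. It follows that the optimum on the reduced instance equals the optimum on the full instance, so retaining $\mathcal{C}\setminus L$ is optimal.

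The main obstacle is Step 2. The interleaving of the win with the eliminations changes fractional weights and the quota at the same moment. The worst-case bound must hold uniformly over every point in the elimination of $L$ at which $C_w$ might cross $Q$. My first attempt would be to bound the surplus crudely by $C_w$'s maximal total minus $Q$ and give all of it to $\ell$, tightening the bound only if this proves too conservative.
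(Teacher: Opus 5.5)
\textbf{The gap is in Step 2.} You treat the fractional surplus released when $C_w$ is elected as the only new effect of the win. There is a second channel. Once $C_w$ is elected it is inactive, so each later elimination in $L$ sends ballots such as $[L_1, C_w, L_4,\dots]$ past $C_w$ to $L_4$ at full weight. These are exactly the ballots that $S[C_i]$ in \Cref{algo:candidate-removal} excludes: strict support stops at any member of $\mathcal{C}\setminus L$, and $C_w$ is one. They are also not surplus. If $C_w$ crosses $Q$ early with almost no surplus, the Gregory-weighted share reaching $\ell$ is negligible, while the pass-through to $\ell$ can be large. So your stated certificate, ``maximum possible surplus share flowing to $\ell$, plus $B$'', checked at the round of the win, is not sound, and neither is any tightening of the surplus term alone.

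The paper bounds the two channels separately, each at its own worst position, and adds them:
\begin{itemize}
\item the weighted surplus share $SV_0X/(SV_0+V_{C_w})$, assuming the latest possible win (penultimate in $L$'s sequence);
\item the full-weight pass-through $SV_1$, assuming the earliest possible win.
\end{itemize}
This gives one bound on $C_i$'s tally valid for every win position. It is then tested pairwise against each $C_j\in\mathcal{C}\setminus L$ as in line~\ref{line: ifconditions}.

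Your crude fallback, giving all of $C_w$'s maximal total minus $Q$ to $\ell$, does happen to cover both channels. Suppose $C_w$ is elected at tally $T_1$ and would reach $T_{\max}$ once all of $L$ is gone. Then the surplus $T_1-Q$ plus the at most $T_{\max}-T_1$ ballots arriving afterwards total at most $T_{\max}-Q$, with added ballots charged to $B$. But that telescoping is the missing argument; your Step 2 does not supply it.

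\textbf{Two smaller problems.} In Step 3, once a win is interleaved, the votes of $L$ do not land on $\mathcal{C}\setminus L$ ``exactly as in the reduced instance''. The Gregory weight depends on $C_w$'s tally when it is elected, and later arrivals pass through at weight one. For instance, take $Q=100$, $V_{C_w}=90$, twenty ballots $[L_1,C_w,C_j]$ and thirty ballots $[L_2,C_w,C_j]$. These fifty ballots give $C_j$ about $31.8$ votes if $L_1$ is eliminated first, $25$ if $L_2$ is, and $50\cdot 40/140\approx 14.3$ in the reduced instance. The paper claims no such identity. Its argument ends once every $C_i\in L$ is shown to stay strictly below every $C_j$ for all win positions, i.e., once $L$ is shown to remain irrelevant.

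In Step 1, $(k+1)Q\le|\mathcal{B}|+k+1$, so $(k+1)Q>|\mathcal{B}|+B$ fails for every $B>k$. Also, line~\ref{line: ifconditions} does involve the quota, through $S_j^i[C_j]<Q$. The role of $Q_{new}$ is only to make that clause non-binding. This lets \Cref{algo:candidate-removal} form $L$ as a pure elimination block even though $C_w$ exceeds the true quota.
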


\begin{proof}[Proof Sketch]
By setting \(Q_{new} = (k+1)Q\), the Droop quota is big enough to ensure that the candidates in \(L\) are chosen to be strictly eliminated in order. Since after eliminating \(L\) and transferring their votes, candidate \(C_w \in \mathcal{C} \setminus L\) receives more than \(Q\) votes, \(C_w\) must win during the elimination process of $L$ using original quota $Q$. The precise position at which \(C_w\) wins may vary if new votes from \(B\) are added. 
The key argument is that—even if the new votes from \(B\) may alter \(C_w\)'s position within \(L\)'s elimination sequence—\(C_w\)'s influence may be bounded in terms of STV transfers when \(C_w\) goes out of contest as well as direct surplus transfers from $C_w$'s win. This ensures that the candidates in \(L\) remain irrelevant. Specifically, we verify whether the removal condition holds for each \(C_i \in L\) and \(C_j \in \mathcal{C} \setminus L\), if:  both $C_i\in L$ and $C_j \in \mathcal{C} \setminus L$ have updates to their active votes due to $C_w$'s win during $L$'s elimination sequence. The formal condition and proof details, along with algorithmic implementation are provided in \Cref{app:proofs}.
\end{proof}

\Cref{algo:candidate-removal-updated} in \Cref{app:enhanced_algo} extends \Cref{algo:candidate-removal} by incorporating both the strengthened removal conditions from \Cref{thm: augment_removal} and the multi-winner containment procedures from \Cref{thm: irrelevant_extension}. 

Finally, observe that after \Cref{algo:candidate-removal-updated} identifies the relevant candidate set of size $m'$, the ERSF evaluates all possible structures—specifically, the $m'!$ social choice orders combined with $2^{\,m'-1}$ round-result sequences—yielding a total of $(m'!)\,2^{\,m'-1}$ structures that \Cref{algo: robust_allocation} searches for optimal strategies.  We utilize the structural independence of these searches to implement pruning, memoization, and parallelization techniques to improve scalability while preserving polynomial complexity. This increases the computational traceability from elections with 8 candidates up to those with 10 candidates. See \Cref{thm: computational_enhancements} for details.

\subsection{RCV Election Attributes} \label{sec:rcv_attributes}

The ERSF in \Cref{sec:computational_framework} enables characterization of RCV's competitive dynamics by computing optimal ballot additions for any feasible outcome.  We now describe metrics to understand RCV competitiveness, the effects of ballot exhaustion, the level of strategic complexity required to achieve victory, and whether the elimination sequence reflects candidate competitiveness. 

\begin{enumerate}
\item \textbf{Victory Gap and Margin of Victory (Competitiveness):} We define the \emph{Victory Gap} for any candidate as the minimum percentage of additional ballots needed for that candidate to win the election. The \emph{Margin of Victory} (or \emph{Competitiveness}) is quantified as the minimum percentage of strategic ballot additions required to change the election winner (i.e., the smallest victory gap among all non-winning candidates). Lower margins indicate highly competitive elections where minimal interventions could alter outcomes.

\item \textbf{Ballot Exhaustion:} 
An election exhibits outcome sensitivity to ballot exhaustion when a candidate's victory gap is smaller than the cumulative exhausted ballots at their elimination round. In such cases, completing the exhausted ballots could potentially alter the election outcome. \Cref{prop: ballot_exhaust} characterizes all candidates whose victories could be affected by ballot completion. We estimate the likelihood of alternate outcomes using several probability models (bootstrap and beta distributions) that simulate various ballot completion scenarios under different assumptions about voter preferences (see \Cref{app:ballot_exhaustion} for details). %

ERSF quantifies this outcome sensitivity through completion model probabilities. Low probabilities indicate robust elections where outcomes are insensitive to ballot incompleteness, while high probabilities reveal elections vulnerable to changes upon ballot completion.

\item \textbf{Strategic Complexity:} We define strategic complexity based on the composition of each candidate's \emph{victory gap}, i.e., the minimum ballot addition to win. A candidate exhibits a \emph{Selfish Strategy} if their minimum ballot addition consists entirely of ballots ranking them first, as bullet votes. A candidate exhibits a \emph{Non-Selfish Strategy} if every minimum ballot addition for that candidate must include at least one ballot that ranks a different candidate first or at any position in the ballot, indicating that optimal manipulation requires supporting rivals to engineer favorable elimination sequences. An election is \emph{strategically complex up to allowance $B$} if any candidate has non-selfish optimal strategies within $B$. Elections where all candidates exhibit only selfish strategies indicate strategic environments equivalent to plurality's vote-maximization paradigm. 

\item \textbf{Preference Order Alignment:} We examine the Social Choice Order (the ranking of candidates based on their elimination sequence and final outcome in the actual RCV election) against the \emph{Victory Gap Order} (the ranking based on victory gaps). An election shows preference order \emph{Match} when these orders align, while \emph{No-Match} cases reveal elections where formal results may obscure true competitive dynamics. An election has a \emph{match up to allowance $B$} if both orders are aligned when restricted to candidates whose computed victory gaps are at most the stated allowance.
\end{enumerate}

\section{Single-Winner STV: NYC'21 and Alaska'24} \label{sec:case_studies_rcv}
We first demonstrate the ERSF on single-winner STV, or Instant Runoff Voting (IRV).
We analyze two sets of elections: 54 New York City Council Democratic primary elections (2021) and 52 Alaska statewide elections with top-four nonpartisan primaries (2024). These cases represent contrasting electoral environments—urban municipal primaries featuring numerous same-party candidates versus statewide general elections with fewer multi-party contenders. We begin by detailing the elections and methodology, then elucidate our approach on NYC's first district, and finally describe results on election attributes across all contests.

\subsection{Electoral Context and Data}

The 2021 New York City Council Democratic primary elections marked the first widespread use of RCV in the city, following its approval via a 2019 ballot measure. RCV allowed voters to rank up to five candidates in order of preference. The primary was held on June 22, 2021, alongside the mayoral and other citywide races. This system coincided with an increase in voter turnout and candidate diversity, with significant gains in representation for women and minority groups in the City Council \citep{citizensunion2021rcv, fairvote2023rcvnyc}. 

Alaska's 2024 statewide elections featured a top-four primary-infused RCV system, implemented first in 2022 \citep{alaskaelections}. In the top-four primary, held on August 20, 2024, all candidates for an office competed in a single nonpartisan primary, with the top four vote-getters advancing to the general election, regardless of party affiliation. The general election on November 5 then used RCV, allowing voters to rank candidates by preference to ensure winners had majority support \citep{ballotpedia2024alaska}.

The cast vote records (CVR) for both elections were obtained from official sources: NYC data from \cite{NYCBOE2021} and Alaska data from \cite{alaskaelections}. The NYC elections involved 2 to 15 candidates requiring up to 15 elimination rounds, while Alaska elections had 2 to 8 candidates. This variation in scale allowed us to test our algorithmic framework across different levels of complexity. For clarity, we represent candidates using letters `A', `B', `C', etc., ordered by their final social choice ranking: candidate A is the winner, B is the runner-up, and so on.

\subsection{Methodology}

We applied the ERSF from \Cref{sec:theory} to both case studies. For NYC's 54 elections, we used \Cref{algo:candidate-removal-updated} to first identify relevant candidates. Elections with fewer than 11 candidates were analyzed with up to 40\% vote addition allowance, while the larger elections had allowances varying from 8.5\% to 35\% based on algorithmic traceability. For Alaska's 52 elections, we characterized up to 100\% allowance as the typical election sizes are smaller.

The ERSF output is used to compute (1) victory gaps and margins of victory, (2) ballot exhaustion effects, (3) strategic complexity, and (4) preference order alignment between social choice order and strategic vulnerability order. For the ballot exhaustion analysis, as illustrated in \Cref{app:ballot_exhaustion}, we estimated preferences using two approaches: based on all non-exhausted ballots and based on non-exhausted ballots with identical first preferences\footnote{Prior research suggests that ballot exhaustion disproportionately affects minority communities' representation and influence as their ballots have higher exhaustion rates \citep{mccarty2024minority}.}. We applied both Beta and bootstrap models under each conditioning approach.

A complete breakdown of all NYC elections is provided in \Cref{tab:nyc_elections}, and that for Alaska in \Cref{tab:alaska_elections}, both in \Cref{app:single_winner_details}. Here we present a detailed illustration of the District 1 election and summarize insights from the full set of analyzed elections.

\subsection{Illustrative Example: NYC Council District 1} \label{subsec:dis1_nyc}

The NYC Democratic Council District 1 primary involved 9 candidates, with Christopher Marte winning after multiple elimination rounds and vote transfers. \Cref{tab:victory-gap} demonstrates how ERSF transforms RCV dynamics into a plurality-style presentation for clearer strategic interpretation. The official results of this election are reproduced in \Cref{app:dis1_nyc_full}.

\begin{table}[h]
\centering
\small
\caption{NYC RCV Council District 1 Primary (2021), akin to Plurality Voting}
\begin{threeparttable}
\begin{tabular}{llrrrc}
\toprule
\textbf{Candidate} & \textbf{ID} & \multicolumn{2}{l}{\textbf{Victory Gap (\%)}} & \textbf{Required Strategy} & \textbf{Exhaustion} \\
\midrule
\rowcolor{winnerColor} Christopher Marte & A & 0.00 & \textbf{Winner} & Actual Winner & 39.93\% \\
\rowcolor{contenderColor} Jenny L. Low & B & 17.07 & \textbf{Contender} & Low self-support (17.07 \%) & 18.23\% \\
\rowcolor{competitiveColor} Gigi Li & C & 20.13 & \textbf{Competitive} & Li self-support (20.13 \%) & 10.50\% \\
\rowcolor{competitiveColor} Susan Lee & E & 28.95 & \textbf{Competitive} & Lee self-support (28.95 \%) & 3.04\% \\
\rowcolor{distantColor} Maud Maron & D & 37.49 & \textbf{Distant} & Maron self-support (37.49 \%) & 5.32\% \\
\rowcolor{farBehindColor} T. Johnson-Winbush & G & 47.37 & \textbf{Far Behind} & Winbush self-support (47.37  \%) & 0.60\% \\
\rowcolor{farBehindColor} Sean C. Hayes & F & 47.45 & \textbf{Far Behind} & Hayes self-support (47.45 \%) & 1.53\% \\
\rowcolor{farBehindColor} Susan Damplo & H & 50.15 & \textbf{Far Behind} & Damplo self-support (50.15 \%) & 0.26\% \\
\rowcolor{farBehindColor} Denny R. Salas & I & 53.47 & \textbf{Far Behind} & Salas self-support (53.47 \%) & 0.00\% \\
\bottomrule
\end{tabular}

\begin{tablenotes}[para,flushleft]
\small
\item Colors indicate competitive categories, highlighting the clustering of candidates within the field. 

\vspace{0.3cm}
\item \textbf{Competitive Categories:}\\
\item \colorbox{winnerColor}{\textbf{Winner}} (0\%): Actual election winner\\
\item \colorbox{nearWinColor}{\textbf{Near Winner}} (0-5\%): Close to victory\\
\item \colorbox{contenderColor}{\textbf{Contender}} (5-20\%): Possible contender\\
\item \colorbox{competitiveColor}{\textbf{Competitive}} (20-30\%): Significant but not insurmountable\\
\item \colorbox{distantColor}{\textbf{Distant}} (30-45\%): Substantial gap to victory\\
\item \colorbox{farBehindColor}{\textbf{Far Behind}} ($>$45\%): Would have required transformative change to win
\end{tablenotes}
\end{threeparttable}
\label{tab:victory-gap}
\end{table}

\noindent
\textbf{Election Attributes:} The \textit{Victory Gap} column in \Cref{tab:victory-gap} shows the proximity to victory for each candidate; the lowest gap among non-winners, i.e., the \textit{margin of victory, is 17.07\%}.  The Exhaustion column reports the percentage of exhausted ballots at the time of elimination for each candidate. B is the only candidate with higher exhaustion percentage than the victory gap; however, all ballot completion models suggest \textit{close to zero impact of ballot exhaustion}, i.e., the result would not change when the exhausted ballots are completed according to the models. The Required Strategy column shows the nature of each candidate's optimal strategy, which is \textit{consistently ``selfish"} in all instances. The preference order alignment shows \textit{``no-match'' with the social choice order} as candidate E is closer to victory (28.95\% additions) despite getting eliminated prior to D (37.49\% additions).   

\subsection{Empirical Results across elections}
We now provide results on election attributes across all NYC'21 and Alaska'24 elections, with further findings summarized in \Cref{sec:summary_attributes}.
\subsubsection{Victory Gap and Margins of Victory}
\begin{figure}[h]
  \centering
  \includegraphics[width=0.48\linewidth]{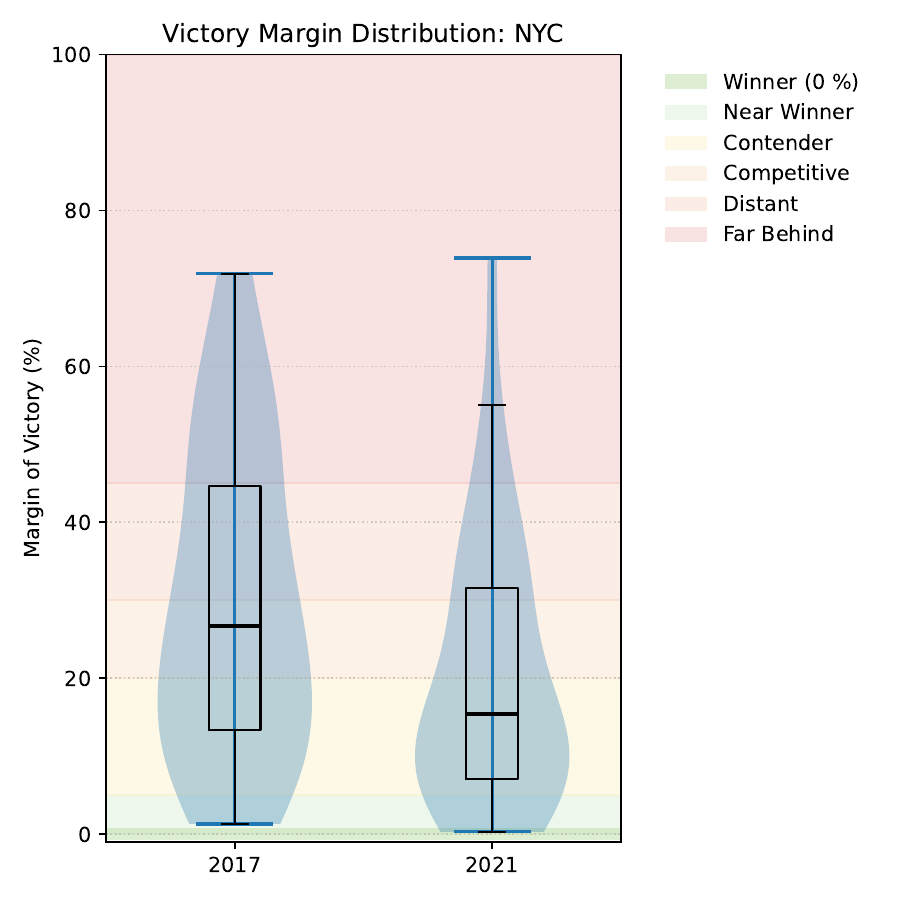}
  \hfill
  \includegraphics[width=0.48\linewidth]{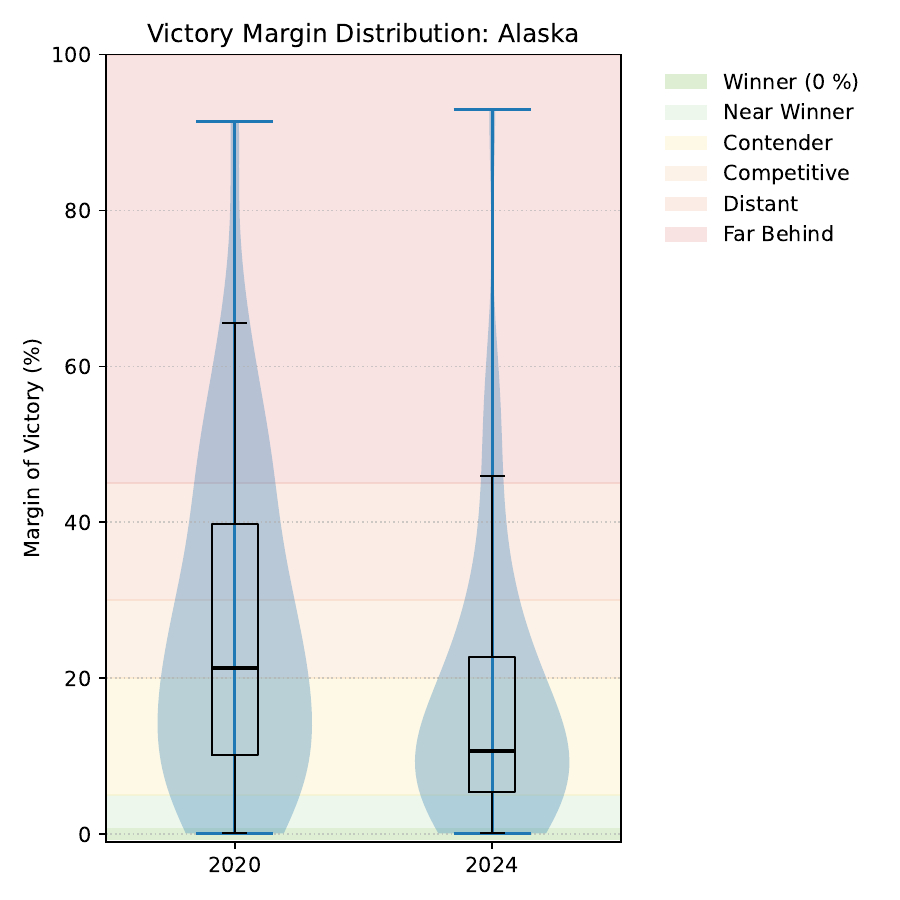}
  
  \caption{Margin of victory or competitiveness distributions before and after RCV adoption in NYC Primaries (left) and Alaska statewide contests (right), overlaid on competitiveness categories. The violin outlines smoothed density curves, and the internal boxplots mark the median and inter‑quartile range. Both jurisdictions show clear shifts toward tighter races after RCV implementation.}
  
  \label{fig:margins-violin-comparison}
\end{figure}
\Cref{tab:nyc_elections} and \Cref{tab:alaska_elections} in \Cref{app:single_winner_details} include the margins of victory for all candidates who can achieve a win under the analyzed allowance. \Cref{fig:margins-violin-comparison} presents a direct comparison of margin of victory distributions before and after RCV implementation in both NYC and Alaska. For previous plurality elections, the margin is simply calculated as the difference between the top two candidates, whereas for RCV elections, it is facilitated by ERSF. Both jurisdictions demonstrate significantly increased competitiveness following RCV adoption. 
Overall, the average margin of victory dropped by 9.2 percentage points in NYC primaries (a 31\% relative reduction from 2017 to 2021) and by 11.4 percentage points in Alaska statewide elections (a 42\% relative reduction from 2020 to 2024).

\subsubsection{Ballot Exhaustion}
\begin{figure}[h]
    \centering
     \includegraphics[width=0.52\linewidth]{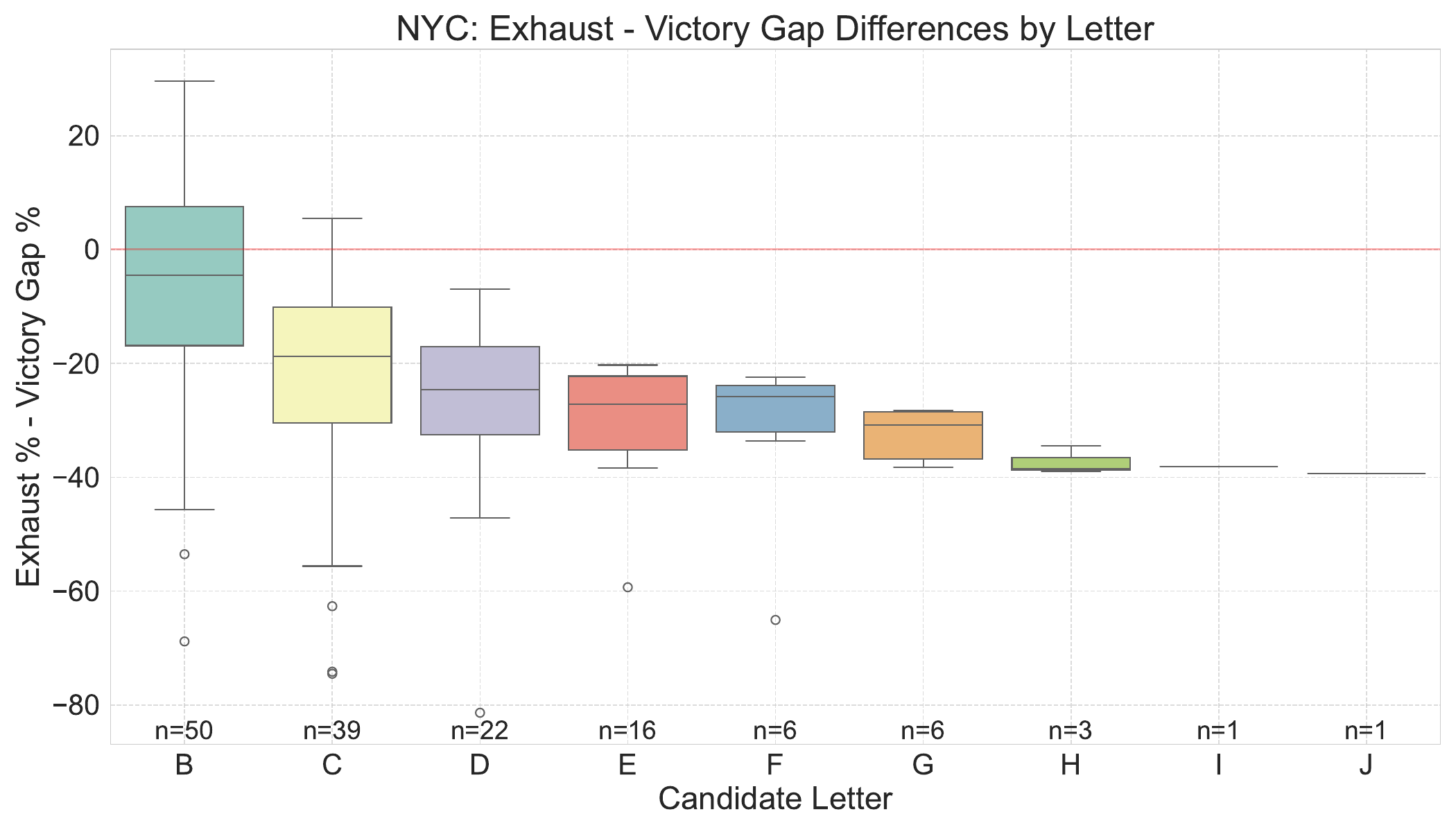}
    \includegraphics[width=0.47\linewidth]{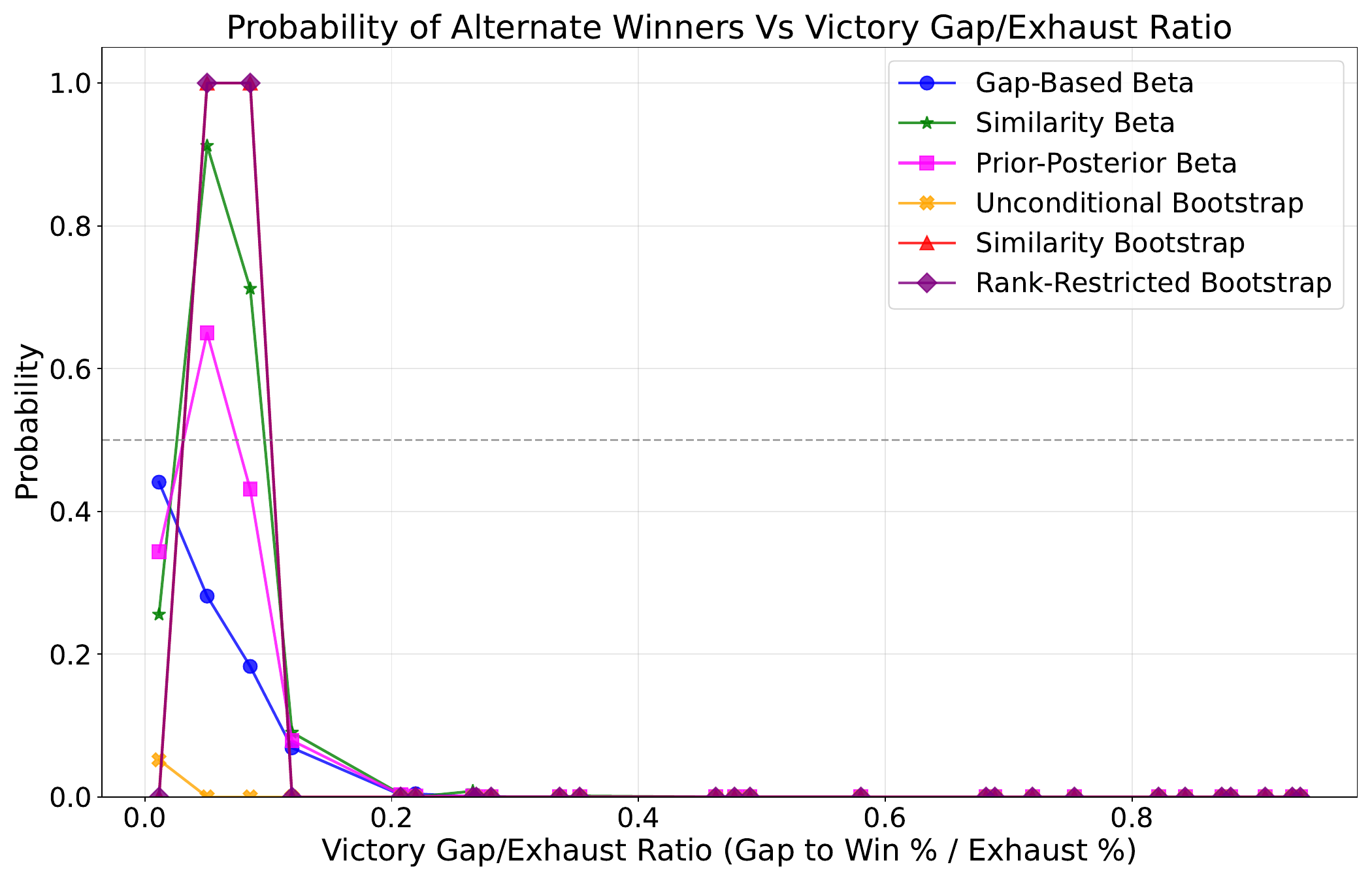}
    \caption{Ballot exhaustion in NYC elections. Left: Excess of exhausted ballots over victory gaps; if this difference is positive, ballot completion can lead to alternative winner(s). Right: Probability of alternate winners vs. victory gap-to-exhaustion ratio, using six different models.}
    \label{fig:exhaustion_nyc}
\end{figure}
\begin{figure}[h]
    \centering
    \includegraphics[width=0.52\linewidth]{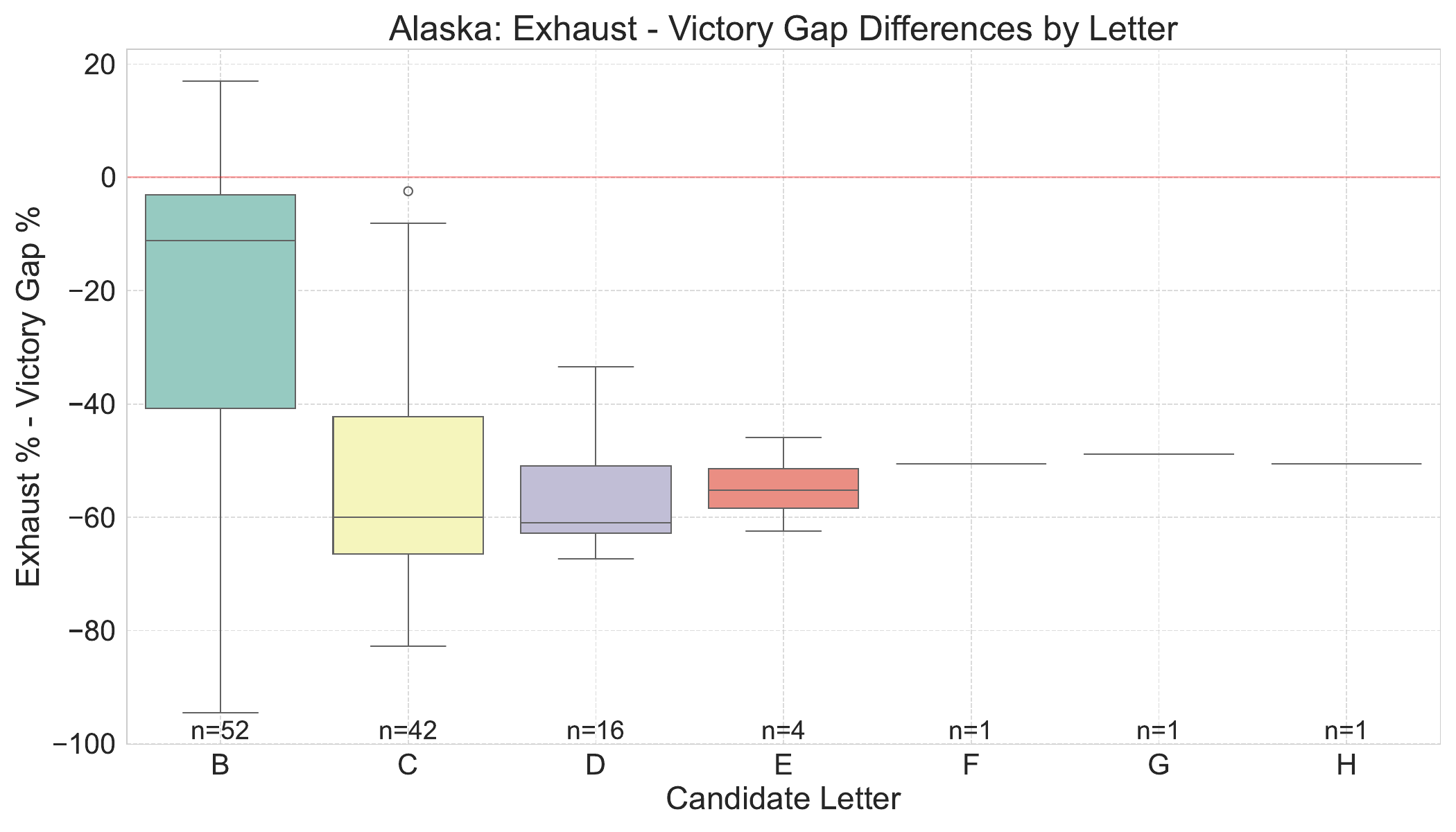}
    \includegraphics[width=0.47\linewidth]{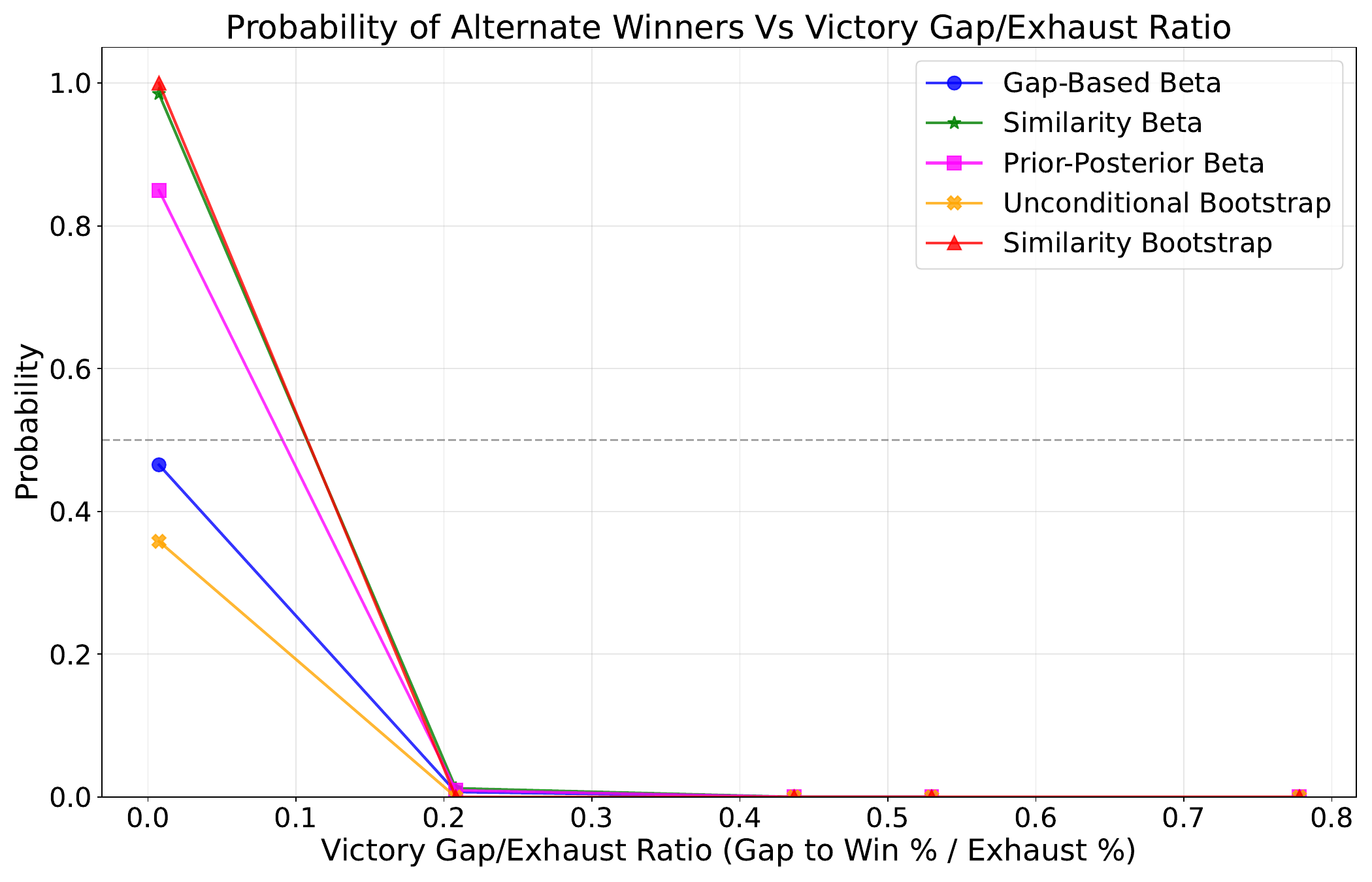}
    \caption{Ballot exhaustion in Alaska elections (same interpretation as Fig. \ref{fig:exhaustion_nyc}).}
    \label{fig:exhaustion_alaska}
\end{figure}
Both case studies reveal instances where ballot exhaustion exceeded victory gaps; however, ballot completion models demonstrate that 3/106 elections would yield outcome changes. \Cref{fig:exhaustion_nyc} illustrates that among NYC's 54 elections, 27 second-placed candidates (B) and 4 third-placed candidates (C) in the social choice order exhibited non-zero potential to win if all exhausted ballots were complete; all candidates post third-placed have zero probability of winning upon any completion of exhausted ballots. Of these, four elections demonstrated reasonable potential for candidate B winning across all models, though high flip probability emerged in only 2 cases.

The general completion models (the ``gap-based beta" and ``unconditional bootstrap," which do not condition on first-choice preferences) yield negligible outcome-flip probabilities across nearly all elections. However, conditioning on first-choice preferences via the two ``similarity"-based completion models identifies two exceptions: the NYC mayoral (margin of victory = 0.8\%) and Queens borough president (margin of victory = 0.5\%) elections. In the NYC mayoral primary, Eric Adams defeated Kathryn Garcia by 7,197 votes in the final round; however, over 140,000 ballots exhausted before this stage. Under similarity-based completion, we complete these exhausted ballots according to the empirical distribution of fully-ranked ballots sharing the same first-choice candidate, and compute the probability that Garcia gains more than 7,197 votes over Adams. Conditioning on first-choice preferences, Garcia wins with high probability—indicating that her supporters disproportionately submitted incomplete ballots.\footnote{This sensitivity is not driven by NYC's five-candidate ranking limit: enough voters ranked fewer than five candidates that the rank-restricted similarity bootstrap produces the same results as the unrestricted similarity bootstrap, indicating genuine sensitivity to ballot exhaustion rather than a ballot-design artifact.} The Queens Borough President election exhibited similar sensitivity.

Alaska demonstrated even greater stability, as illustrated in \Cref{fig:exhaustion_alaska}. Only 4 of 52 elections exhibited any outcome sensitivity to ballot exhaustion, where exclusively the second-ranked candidate (B) could win upon completing all exhausted ballots. Of these contests, only one—House District 28 (margin of victory = 0.1\%)—showed a substantial probability of a different outcome under general completion models; when first preferences were incorporated, the likelihood of an outcome change rose to near certainty.

Our analysis reveals that while ballot exhaustion is a necessary condition for outcome changes, election competitiveness and systematic bias in ballot exhaustion patterns were critical determinants of whether completing ballots would actually alter results. Only highly competitive elections with biased exhaustion patterns generated meaningful probabilities of alternate outcomes upon ballot completion. Additional details appear in \Cref{app:ballot_exhaustion_single_winner}.
\subsubsection{Strategic Complexity}

Both jurisdictions exhibited similar patterns regarding strategic complexity. All NYC elections showed selfish strategies for vote additions up to at least 20\%, with 43 out of 54 elections exhibiting exclusively selfish strategies within the tested range. Across all 52 Alaska elections, optimal strategies were exclusively selfish, with no non-selfish strategies identified.

For the 11 NYC elections with non-selfish strategies, the manipulation involved supporting rival candidates to engineer favorable elimination sequences. For instance, in District 18, candidate D would require 27.9\% under purely selfish strategy but 27.4\% with a non-selfish approach (2.4\% to rival B, 24.9\% to D); in District 23, candidate D would require 27.0\% selfishly but 20.3\% non-selfishly (2.2\% to rival B, 18.1\% to D). See Appendix~\ref{app:single_winner_details} for complete breakdown of all strategies (Table~\ref{tab:nyc_elections}) and detailed illustration of District 23's non-selfish strategies (Table~\ref{tab:district23_detail}).

\subsubsection{Preference Order Alignment}

Both case studies demonstrated strong preference order alignment between social choice order and the victory gap order. Among NYC's 54 elections, 41 cases showed perfect match between social choice order and candidates' ranking based on victory gaps under 40\% allowance. For up to 20\% allowance, this alignment became nearly perfect with 52 out of 54 elections accurately reflecting candidates' relative winning potential. 

Alaska showed an even stronger alignment: in all elections except the presidential race, candidates' strategic vulnerability order perfectly matched their social choice order. Almost every candidate found a feasible winning path within the 100\% allowance, and the relative order among competitive candidates consistently matched the order in which they got eliminated.

~\\\noindent To summarize, RCV elections in NYC and Alaska, despite being contrasting environments themselves, consistently exhibit strategic dynamics similar to plurality voting but with notable competitive advantages. Compared to prior plurality systems, margins of victory have become significantly tighter, indicating more competitive elections. Ballot exhaustion analysis confirms RCV's robustness: fewer than 8 out of 106 elections exhibited nontrivial probability to outcome change across completion models, and 3 demonstrated substantial impact where completing exhausted ballots would likely change results. Strategic complexity analysis shows that optimal strategies are consistently self-focused, negating incentives for elaborate strategic voting. Finally, candidate elimination sequences align closely with their competitive positions, indicating that RCV's procedures typically reflect competitive dynamics accurately rather than obscuring them.
\section{Multi-Winner STV: Portland City Council, 2024} \label{sec:casestudy_portland}

We next analyze multi-winner RCV by examining the 2024 Portland City Council elections. These elections provide a more intricate multi-winner Single Transferable Vote (STV) context with multi party dynamics and larger elections with 16-30 candidates, in contrast with the single-winner analyses in \Cref{sec:case_studies_rcv}. 

\subsection{Electoral Context and Data}

In November 2024, Portland used Ranked Choice Voting via STV for the first time to elect all 12 council members using multi-member council districts. This represented a fundamental structural shift from the city's previous commission government system, which had featured five at-large council members since 1913 \citep{portland_transition_2025}.

The multi-winner format supported coalition-based campaigning, with candidates securing cross-endorsements and establishing their own alliances \citep{FairVote2024PortlandRCV}. This raised questions about the system's susceptibility to strategic manipulation versus the role of voter engagement in determining outcomes \citep{Hutchinson2024PortlandRCV}.

Given Portland's structural transformation and the absence of historical plurality data, we could not employ the before-after competitiveness comparisons used in \Cref{sec:case_studies_rcv}. Instead, we conducted bootstrap resampling analysis on each district, generating multiple resampled elections and computing the election attributes across all samples. This approach provides statistical confidence despite the small sample size (4 elections versus over 50 for NYC and Alaska) and addresses the substantially more complex strategic space created by multi-winner STV, with each district featuring large fields of 16 to 30 candidates. We use the cast vote record \citep{MultnomahRCVResults}.

\subsection{Methodology}

For each district, we generated 100 bootstrap samples by resampling ballots with replacement from the cast vote records, creating simulated election instances that preserve underlying voter preference distributions. We apply ERSF to each sample, analyzing both the original election data (strategies under certainty) and bootstrap samples (strategies under uncertainty).

Given the large candidate fields, directly finding strategies is computationally prohibitive without candidate removal. We hence apply the ERSF, specifically using \Cref{algo:candidate-removal-updated} with the new removal condition in \Cref{thm: irrelevant_extension} for the removal of irrelevant candidates and then \Cref{algo: robust_allocation} to find optimal additions to reach a desired election result.

Given the differing characteristics of the four elections, we determine the allowance based on algorithmic traceability thresholds for each district. The highest feasible percentages for analysis are 4.7\%, 6.5\%, 12.36\%, and 9.6\% for the four districts---above these percentages, the ERSF does not eliminate enough candidates for the analysis to be computationally tractable. These thresholds may be contextualized relative to the Droop quota of 25\%, given the three-winner election structure. We conduct the bootstrap analyses with slightly lower allowances: 4\%, 6\%, 11.5\%, and 9\%, achieving 99.7\% algorithmic efficiency for District 1 and 100\% for Districts 2, 3, 4 in eliminating irrelevant candidates from the bootstrap samples.

\begin{table}[h]
\resizebox{\columnwidth}{!}{%
\begin{tabular}{|c|c|c|c|c|c|}
\hline
District & Number of votes & Candidates & Valid Ballot Turnout & \textbf{Possible winners} & \textbf{Max \% additions} \\ \hline
1        & 42686           & 16         & 42.64\%              & \textbf{7} & \textbf{4.7\%}                                       \\ \hline
2        & 77036           & 22         & 64.77\%              & \textbf{4} & \textbf{6.5\%}                                         \\ \hline
3        & 84323           & 30         & 65.97\%              & \textbf{3} &  \textbf{12.36\%}                                         \\ \hline
4        & 76563           & 30         & 63.85\%              & \textbf{4} &  \textbf{9.6\%}                                         \\ \hline
\end{tabular}%
}
\caption{Summary of the 2024 Portland RCV elections and our analysis of candidate relevancy (in bold). In District 1, the race was the most competitive—7 of the 16 candidates are either winning or within reach with up to a 4.7\% increase in votes. In contrast, the other districts showed little competition, with only one or no additional candidates close to winning, even with substantial allowance. } \label{Tab:portland_summary}
\end{table}

Among the four districts, District 1 saw the lowest voter turnout as well as highest competitiveness with many candidates having closer vote-shares in the last few rounds. Similar to NYC's District 1 analysis in \Cref{subsec:dis1_nyc}, we first illustrate our approach on Portland's District 1, and then present comprehensive analysis of the remaining districts, including the bootstrap results. We refer to \Cref{app:portland} for additional computational details.

\subsection{Illustrative Example: Portland's First 3-Member District}

District 1 election featured 16 candidates, and Candace Avalos, Loretta Smith, and Jamie Dumphy ultimately won (see \Cref{app:portland_dis1_info} for result details). Here, applying \Cref{algo:candidate-removal-updated} removed 8 irrelevant candidates, with the traceability threshold being 4.7\%. Among the 8 \emph{relevant} candidates within the threshold, 7 appear in the set of potentially winning candidates. The remaining candidates would \emph{necessarily} require more than 4.7\% additions to win, and their strategic analysis remains constrained by complexity.  \Cref{tab:portland_dis1_strats} illustrates how our analysis transforms the multi-winner RCV dynamics into a plurality-akin presentation, under the three-winner context.

\begin{table}[h]
\centering
\caption{Plurality-akin presentation of Portland RCV District 1 (2024), Top 7 of 16 Candidates}
\begin{threeparttable}
\small
\begin{tabular}{lccrc}
\toprule
\textbf{Candidates} & ID & \textbf{Victory Gap (\%)} & \textbf{Strategic Complexity} & \textbf{Ballot Exhaustion} \\
\midrule
\rowcolor{winnerColor} Candace Avalos & (A) & 0.00 & \textit{Actual winner} & - \\
\rowcolor{winnerColor} Loretta Smith & (B) & 0.00 & \textit{Actual winner} & - \\
\rowcolor{winnerColor} Jamie Dumphy & (C) & 0.00 & \textit{Actual winner}  & - \\
\rowcolor{nearWinColor} Terrence Hayes & (D) & 1.60 & Selfish (1.60\%)  & 14.59\% \\
\rowcolor{nearWinColor} Steph Routh & (F)  & 1.93  & Selfish (1.93\%) & 6.56\% \\
\rowcolor{contenderColor} Noah Ernst & (E)  & 2.81 & Selfish (2.81\%) &  8.41\%  \\
\rowcolor{contenderColor}  Timur Ender & (G) & 4.21 & Selfish (4.21\%) &  5.14\%  \\
\bottomrule
\end{tabular}

\begin{tablenotes}[para,flushleft]
\small

\vspace{0.3cm}
\item \textbf{Categories:} (Normalized from 50\% quota for 1-winner to 25\% for 3-Winner RCV)\\
\item \colorbox{winnerColor}{\textbf{Winner}} (0\%): No additional votes needed\\
\item \colorbox{nearWinColor}{\textbf{Near Winner}} (0-2.5\%): Would have won with small additional support\\
\item \colorbox{contenderColor}{\textbf{Contender}} (2.5-10\%): Needed moderate additional support to win
\end{tablenotes}
\end{threeparttable}
\label{tab:portland_dis1_strats}
\end{table}

\noindent \textbf{Election Attributes:} 
The \textit{Victory Gap} column shows the percentage of additional votes a candidate would need to displace Dumphy while keeping Avalos and Smith as winners. The `Ballot Exhaustion' column shows the percentage of ballots exhausted when each candidate was eliminated: While all 4 candidates remain viable, ballot completion models suggest \textit{at most a 10\% likelihood of an alternate outcome}, where Hayes emerges as the winner. All optimal strategies are \textit{selfish} (self-support). Notably, although Steph Routh (F) was eliminated before Noah Ernst (E), Routh had a narrower victory gap, highlighting the \textit{partial preference order alignment mismatch} where the elimination sequence partially conceals the real competitiveness.  

\subsection{Results}
We applied the ERSF to both the actual election data and bootstrap samples for each district. For the bootstrap samples, we analyzed strategic possibilities within allowances slightly below the algorithmic traceability thresholds to ensure computational reliability.

\Cref{tab:summary_dis1,tab:multi_district_summary} present comprehensive breakdowns of strategic complexity distributions across all bootstrap samples. 

\begin{table}[htbp]
\centering
\caption{District 1 Bootstrap Analysis Summary }
\label{tab:summary_dis1}
\footnotesize
\begin{tabular}{>{\raggedright\arraybackslash}p{1.5cm}cp{0.9cm}p{0.9cm}p{0.9cm}p{0.9cm}p{6.2cm}}
\toprule
\textbf{Winning Coalition} & \textbf{Freq.  (\%)} & \textbf{Mean (\%)} & \textbf{Std. (\%)} & \textbf{Min (\%)} & \textbf{Max (\%)} & \textbf{Strategic Complexity} \\
\midrule
\rowcolor{winnerColor} A, B, C & 100.0 & 0.00 & 0.00 & 0.00 & 0.00 & None (actual winners) \\
\addlinespace[0.5ex]
\rowcolor{nearWinColor} A, B, D & 98.8 & 1.70 & 0.27 & 1.19 & 2.46 & Selfish: D self-support (100\%) \\
\addlinespace[0.5ex]
\rowcolor{nearWinColor} A, B, F & 98.8 & 1.90 & 0.24 & 1.39 & 2.42 & Selfish: F self-support (100\%) \\
\addlinespace[0.5ex]
\rowcolor{contenderColor} A, B, E & 97.6 & 2.88 & 0.25 & 2.36 & 3.68 & 
\begin{tabular}[t]{@{}l@{}}
• 87.8\%: Selfish (E self-support 100\%)\\
• 12.2\%: Non-Selfish (E 97.5\% + DE  2.5\%) 
\end{tabular} \\
\addlinespace[0.5ex]
\rowcolor{contenderColor} A, B, G & 21.4 & 3.86 & 0.12 & 3.49 & 3.96 & 
\begin{tabular}[t]{@{}l@{}}
• 38.9\%: Selfish (G self-support 100\%)\\
• 61.1\%: Non-Selfish \\
(G 95.8\% + DG 2.48\% + D 1.73\%) 
\end{tabular} \\
\addlinespace[0.5ex]
\rowcolor{contenderColor} A, C, D & 16.7 & 3.78 & 0.17 & 3.50 & 3.99 & 
\begin{tabular}[t]{@{}l@{}}
• 85.7\%: Non-Selfish \\(C 17.68\% + D 49.70\% + FD 32.62\%)  \\
• 14.3\%: Non-Selfish \\(A 0.63\% + C 18.70\% + D 48.61\% + \\
\phantom{• 14.3\%:} FD 32.06\%) 
\end{tabular} \\
\bottomrule
\end{tabular}

\smallskip
\begin{minipage}{\textwidth}
\footnotesize
\textit{Note:} Strategies for District 1 over 100 bootstrap samples with up to 4\% strategic vote additions. ``Freq." shows the percentage of samples in which each coalition could win within the 4\% allowance. ``Mean" represents the average victory gap (additional votes needed as \% of total votes) across samples where the coalition was achievable. The actual winners (A,B,C) emerged in all 100 samples with 0\% gap. Although strategies are selfish on election data, some non-selfish strategies appear on bootstrap samples.
\end{minipage}
\end{table}
\begin{table}[h!]
\centering
\caption{District 2,3, and 4 Bootstrap Analysis Summary }
\label{tab:multi_district_summary}
\footnotesize
\begin{tabular}{c l c r r r r l}
\toprule
\textbf{District} & 
\textbf{Winners} & \textbf{Freq} & \textbf{Mean} & 
\textbf{Std.} & \textbf{Min} & \textbf{Max } & 
\textbf{Strategic Complexity} \\
\midrule
\rowcolor{winnerColor}
2 & A, B, C & 100 & 0.0     & 0.0   & 0    & 0    & None \\
\addlinespace[0.5ex] \rowcolor{contenderColor}
 & A, C, D & 85 & 5.64\%     & 0.17\%   & 5.21\%    & 5.99\%    & Non-Selfish: D= 99.98\%, A= 0.02\% \\
\addlinespace[0.5ex] \rowcolor{contenderColor}
  & A, B, D & 1 & 5.93\%     & 0   & 5.93\%    & 5.93\%    & Selfish: D= 100\% \\
\hline
\rowcolor{winnerColor}
3 & A, B, C & 100 & 0.0     & 0.0   & 0    & 0    & None \\
\hline
\rowcolor{winnerColor}
4 & A, B, C & 100 & 0.0     & 0.0   & 0    & 0    & None \\
\addlinespace[0.5ex] \rowcolor{nearWinColor}
 & A, B, D & 100 & 1.12\%  & 0.23\%   & 0.32\%	 & 1.64\% &
Selfish: D = 100\%\\
\hline
\end{tabular}
\begin{minipage}{\textwidth}
\footnotesize
\vspace{0.1cm}
\textit{Note:} Strategies for Districts 2, 3, and 4 over 100 bootstrap samples each. District 3 only supports A,B,C within the allowance. Districts 2 and 4 can elect candidate D with additional votes. 
\end{minipage}
\end{table}

\subsubsection{Victory Gap and Margin of Victory}
Multi-winner RCV elections in Portland exhibited substantial variation in competitiveness across districts. Districts 1 and 4 proved most competitive with margins of victory of 1.60\% and 1.09\% respectively. District 2 required 5.7\% additional votes to change winners. District 3 was least competitive, with no successful strategies possible within the 12.36\% allowance; however, computational analysis indicates an upper bound on competitiveness of approximately 13.6\%, the threshold at which the fourth-place candidate could secure victory.

The bootstrap results in \Cref{tab:summary_dis1,tab:multi_district_summary} confirm the robustness of these competitive standings: the actual winning coalitions emerged in 100\% of bootstrap samples, while trailing candidates exhibited consistent victory gaps across samples (as shown by low standard deviations).

The variation across districts reflects underlying competition in the official election results \citep{MultnomahRCVResults}. For District 1, the results confirm the original winners' strength—A consistently prevails, B comes next, while C could be substituted with up to 4\% additional new votes. Districts 2 and 3 show that their top three candidates are significantly ahead of their opponents in the first round itself—(13\%, 16\%, 16\%) top 3 versus $\leq 9\%$ for remaining candidates and (24\%, 19\%, 19\%) top 3 versus $\leq 6\%$ rest, for Districts 2 and 3, respectively—while District 4 has the top 4 relatively closer to each other—(24\%, 14\%, 10\%, 11\%) versus 6\% rest.

\subsubsection{Ballot Exhaustion}
Given larger candidate fields, the Portland elections exhibited substantially higher ballot exhaustion rates than single-winner contests, from less than 1\% in early rounds to 47\% by final eliminations. This created more widespread outcome sensitivity to ballot exhaustion: all candidates in Districts 1, 2, and 4 whose victory gaps could be determined showed exhaustion rates exceeding those victory gaps, compared to the sporadic cases observed in single-winner elections in \Cref{sec:case_studies_rcv}. 

\begin{table}[htbp]
\centering
\footnotesize
\begin{tabular}{|l|c|c|c|c|c|c|c|}
\hline
\textbf{Candidate} & \textbf{Exhaust} & \textbf{Gap} & \textit{Gap Beta} & \textit{C. Bootstrap} & \textit{U. Bootstrap} & \textit{Similarity} & \textit{Prior-Post} \\
\hline
D, Dis 1 & 14.59\% & 1.60\% & 10.42\% & 0.0\% & 0.0\% & 0.0\% & 0.10\% \\
\hline
E, Dis 1 & 8.41\% & 2.81\% & 0.01\% & 0.0\% & 0.0\% & 0.0\% & 0.0\% \\
\hline
F, Dis 1 & 6.56\% & 1.93\% & 0.07\% & 0.0\% & 0.0\% & 0.0\% & 0.0\% \\
\hline
G, Dis 1 & 5.14\% & 4.21\% & 0.0\% & 0.0\% & 0.0\% & 0.0\% & 0.0\% \\
\hline
D, Dis 2 & 9.75\% & 5.69\% & 0.0\% & 0.0\% & 0.0\% & 0.0\% & 0.0\% \\
\hline
D, Dis 4 & 14.08\% & 1.12\% & 18.21\% & 0.0\% & 0.0\% & 0.27\% & 3.27\% \\
\hline
\end{tabular}
\caption{Ballot exhaustion analysis in Portland Districts: Probability of winning for trailing candidates whose exhaustion rates exceed their victory gaps, using five different ballot completion models. All models---gap-based beta, category (first-preference or similarity based) and unconditional bootstrap, similarity beta and prior-posterior beta, as detailed in \Cref{app:ballot_exhaustion}---suggest low probabilities of outcome changes. 
}
\label{tab:portland_district_exhaustion}
\end{table}

At first glance, the theoretical potential for outcome changes appears substantial. Victory gaps represented only 29.7\% of available exhausted ballots on average, suggesting that completing these ballots could easily change outcomes. However, multi-winner contexts impose more complexity in ballot completion: for a fourth-place candidate to displace the winning coalition, exhausted ballots must be completed with preferences that rank this challenger above all three actual winners simultaneously.

This demanding preference requirement is reflected in the actual probabilities of outcome changes, under our models of how votes are completed. As \Cref{tab:portland_district_exhaustion} demonstrates, even the strongest case achieved only 18.21\% probability under the ``victory gap" based beta model, which uses gap-based beta parameters for computing winner probabilities. Moreover, this phenomenon is reinforced by voter preferences toward the current winners, since the first-preference based beta and bootstrap models approach zero probabilities across all districts. Rather than creating vulnerability, multi-winner preference complexity appears to strengthen RCV's robustness to ballot exhaustion by making the required voter coordination difficult to achieve.
\subsubsection{Strategic Complexity}
Multi-winner RCV elections maintained the predominantly selfish strategic patterns observed in single-winner contests, despite operating in substantially more complex environments. All candidates in the actual election data exhibited exclusively selfish strategies.

Bootstrap analysis revealed limited instances of non-selfish strategies in Districts 1 and 2, but these remained minority patterns compared to selfish alternatives. As depicted in \Cref{tab:summary_dis1} and \Cref{tab:multi_district_summary}, even where non-selfish strategies emerged, they typically involve minimal coalition building among non-winning candidates rather than complex strategic interdependencies.

Notably, very few optimal strategies required supporting actual winners. Strategic vote additions majorly supported the candidate of interest or formed coalitions among challengers. This suggests that RCV does not create incentives for complex political deals or mutually beneficial cross-endorsements between winners and challengers, maintaining clear strategic boundaries even in multi-winner contexts.

\subsubsection{Preference Order Alignment}

Multi-winner elections showed general alignment between social choice order and victory gap order, with some exceptions revealing different competitive dynamics. District 3 had only 3 candidates' victory gaps quantified due to computational constraints. Districts 2 and 4 demonstrated perfect alignment up to the fourth candidate, indicating that elimination sequences accurately reflected competitive strength.

In District 1, candidate F exhibited a smaller victory gap than candidate E despite being eliminated earlier in the Social Choice Order. This preference order mismatch appears consistently across both actual election data and bootstrap samples, suggesting that formal elimination sequences may occasionally obscure true competitive relationships even in multi-winner contexts (See \Cref{tab:summary_dis1}).

~\\\noindent
Overall, despite having quite diverse and more complicated multi-winner dynamics, Portland's 2024 RCV elections largely echo patterns similar to NYC and Alaska's single-winner elections. Optimal strategies remain largely selfish and single-choice even in elections with up to 30 candidates. Although ballot exhaustion rates are high, the probability of alternate outcomes is near-zero. Crucially, strategies across all districts rarely support actual winners, these are either selfish or support eventual losing candidates. The lack of mutually beneficial strategies suggests that fruitful cross-endorsements or mutually beneficial coalitions would find no actual support. Furthermore, bootstrap analysis confirms the robustness of these insights, indicating that the analyzed RCV attributes apply broadly rather than being specific to the exact election data. Combined with insights from NYC and Alaska's single-winner analysis, this informs how RCV competitive dynamics, powered by the ERSF, remain largely \emph{consistent} and \emph{simple} across wide electoral contexts.

\section{Conclusion}\label{sec:conclusions}
This work presents a comprehensive empirical analysis of Ranked Choice Voting (RCV) system across 110 real-world elections, addressing longstanding questions about RCV's practical complexity through systematic computational methods. We extend recent algorithmic advances to enable a first-of-its kind, large-scale computational assessment 
of RCV's practical performance through interpretable, plurality-comparable metrics.

Four key findings emerge from our analysis. First, RCV coincided with an increase in electoral competitiveness---margins of victory dropped by 9.2 percentage points in NYC and 11.4 points in Alaska after the adoption of RCV. Second, the impact of ballot exhaustion proved largely inconsequential: completing the exhausted ballots using a variety of models would not alter the winner in 107 of 110 elections, demonstrating RCV’s robustness against ballot exhaustion effects.  Third, strategic behavior under RCV is largely selfish, with majority elections showing only selfish addition strategies to be optimal, indicating empirical resistance to elaborate manipulations. Fourth, RCV elimination order rarely masks underlying competitive dynamics, with 106 of 110 elections showing competitive candidate rankings by victory gap that fully align with elimination sequences.

As RCV adoption accelerates—reaching 14 million voters and with legislative proposals like the Fair Representation Act under consideration—our framework offers several applications: (1) providing interpretable RCV metrics on election night, (2) facilitating system assessment through standardized plurality comparisons, (3) enabling polling and bootstrap robustness analyses, and (4) supporting redistricting processes with RCV-specific competitiveness metrics, by coupling ERSF with simulated or polling-based ranked-ballot data for proposed districts. Our complete datasets and code are publicly available.

\newpage
\section*{Statements and Declarations}
Competing Interests: On behalf of all authors, the corresponding author states that there is no conflict of interest.\\
Data Availability: All datasets and analysis code supporting the findings of this study are publicly available at \url{https://github.com/sanyukta-D/Optimal_Strategies_in_RCV}.\\
Ethics Approval: This study did not involve human participants or animals, and therefore did not require ethics committee approval.

\bibliographystyle{plainnat}
\bibliography{biblio}

\newpage
\appendix

\section{Illustration: Official Election Results}

We present two real-world official election outcomes: a single-winner RCV (IRV) election and a multi-winner RCV (STV) election.

\subsection{NYC Democratic Primary Council District 1, 2021}\label{app:dis1_nyc_full}

\Cref{fig:dis_1_nyc_official} presents the official RCV election results for New York City's 2021 Democratic primary in Council District 1, obtained from the NYC Board of Elections website \cite{NYCBOE2021}. This example illustrates the standard tabular format used to present single-winner RCV outcomes.

\begin{figure}[h]
    \centering
    \includegraphics[width=18cm]{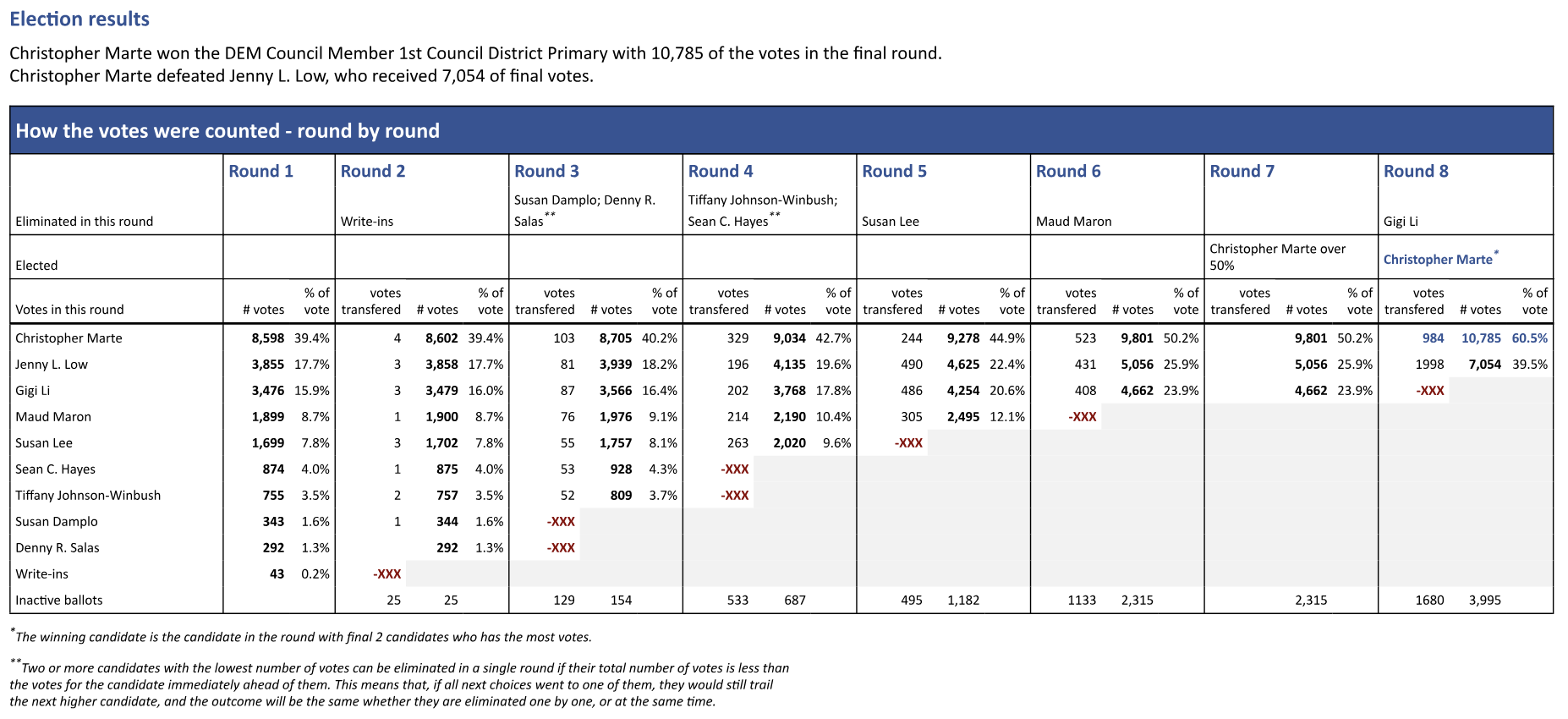}
    \caption{NYC 2021 Democratic Primary Council District 1}
    \label{fig:dis_1_nyc_official}
\end{figure}

\subsection{Portland Multi-Member District 1, 2024}\label{app:portland_dis1_info}

Portland's 2024 RCV election in District 1 featured 16 candidates competing for three seats. The complete official results comprise a 17-round RCV computation \citep{MultnomahRCVResults} that follows the same tabular format as \Cref{fig:dis_1_nyc_official}; we omit this lengthy table for brevity. Instead, we reproduce an image from the interactive results visualization presented on the official website (\Cref{fig:portland_dis1}). This visualization includes a slider bar that allows users to view vote counts for each candidate across all rounds; the figure shows the final round results.

\begin{figure}
    \centering
    \includegraphics[width=0.8\linewidth]{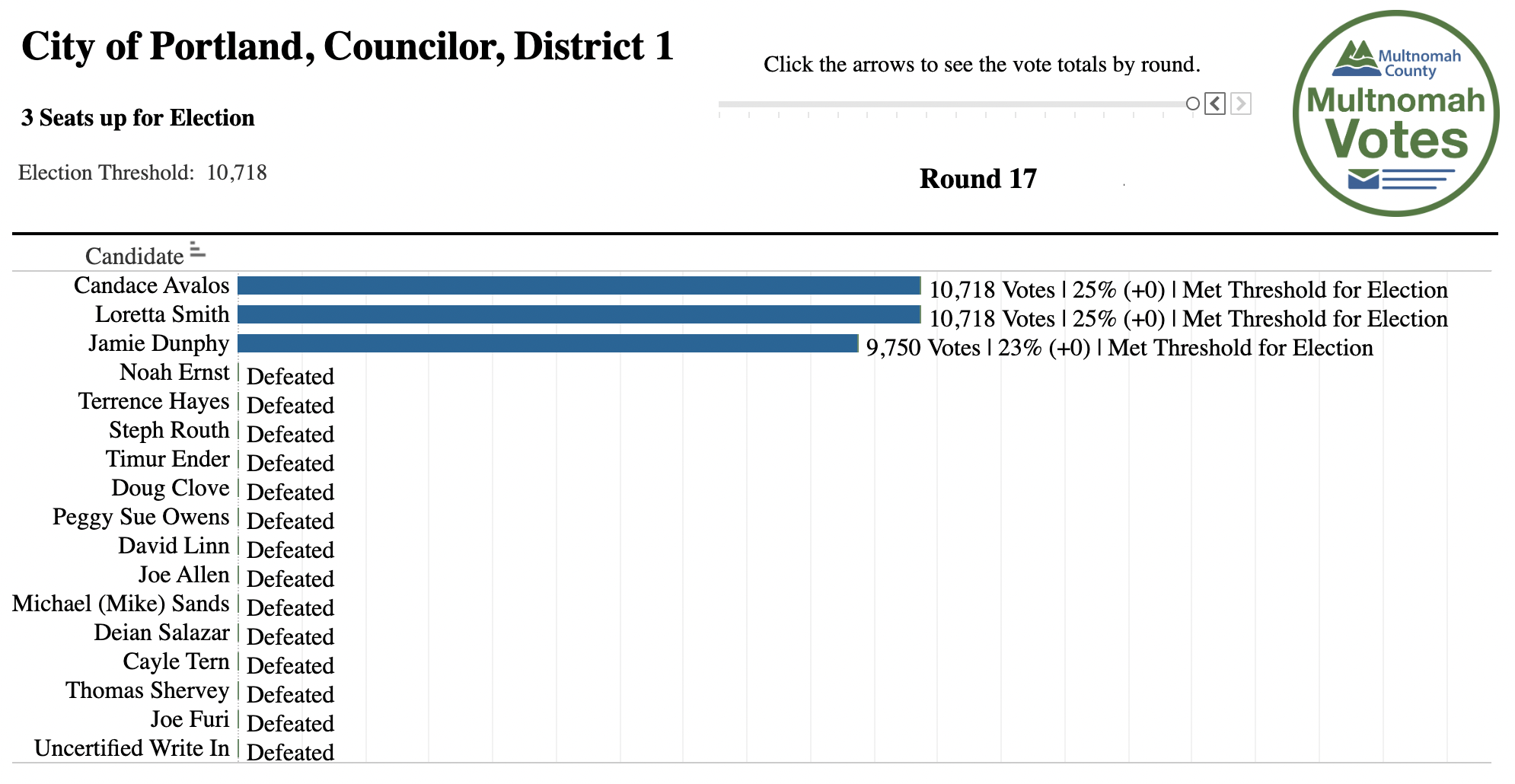}
    \caption{Portland 2024 3-Member District 1}
    \label{fig:portland_dis1}
\end{figure}
Although the tabular presentation and the interactive step-by-step visualization convey the determination of winners effectively, they fail to capture the complete competitive and comparative dynamics of the election. To grasp these dynamics, one must consider the myriad alternative scenarios that our framework concisely summarizes.
\section{Theory: Details}
\subsection{Algorithms from prior work}\label{app:prior_algo}

We briefly review Algorithms~\ref{algo: allocation} and \ref{algo:candidate-removal} from prior work for analyzing RCV. Algorithm~\ref{algo:candidate-removal} reduces computational complexity by identifying and removing provably irrelevant candidates—those guaranteed to be eliminated early regardless of strategic interventions. Algorithm~\ref{algo: allocation} then computes optimal ballot additions to achieve specific elimination/win sequences among the remaining relevant candidates. Together, these algorithms provide a framework for analyzing strategic manipulation in ranked-choice elections while maintaining computational tractability.

Algorithm~\ref{algo: allocation} (\textsc{SmartAllocation}) determines the ballot additions needed to achieve a specified elimination/win sequence in RCV. It processes each election round sequentially, adding votes to ensure the desired candidate either gets eliminated or wins that round. The algorithm handles both elimination scenarios (adding votes to other candidates) and win scenarios (adding votes to the target candidate) while tracking the total budget used.

Algorithm~\ref{algo:candidate-removal} (\textsc{IrrelevantCandidateRemoval}) attempts to reduce problem size by identifying candidates who will be eliminated in the early rounds. It builds a set $L$ of potential irrelevant candidates and checks whether they can be safely removed using the condition at line~\ref{line: ifconditions}. This condition verifies that candidates in $L$ would be eliminated before any candidate outside of $L$, even with strategic vote additions up to budget $B$.

We also reproduce the definition of Strict-Support for reference. 

\begin{definition} \label{def: strict_support}
    $\text{Strict-Support}_{L, G}(C_i)$ is the number of votes candidate $C_i$ receives from voters that rank candidates in $L$ the first, excluding votes that rank $C_i$ post any of $G$ candidates. Here, $L$ and $G$ are mutually exclusive sets. Accordingly, $\text{Strict-Support}_{L, \{\}}(C_i)$ becomes the total number of votes candidate $C_i$ receives (at any rank) from voters that rank candidates in $L$ the first.
\end{definition}

Strict-Support is useful in capturing the maximum number of votes an active candidate can have at a particular round. It is used in \Cref{algo:candidate-removal} in condition \ref{line: ifconditions}.

\begin{algorithm}
\caption{Allocation Rule (by \citet{deshpande2024optimal})} \label{algo: allocation}
\begin{algorithmic}[1]
\REQUIRE  $\mathcal{C}$, voter data $\{v\}_{\mathcal{C}}$, Constants $B$, $Q$
\ENSURE Budget allocation, under feasibility
\STATE \textbf{function} \textsc{SmartAllocation}( $\mathcal{C}$, $\{v\}_{\mathcal{C}}$, $B$, $Q$)
  \STATE Initialize $\text{votes}, \text{ votes}_{\text{new}} \leftarrow 0$.

  \FOR{processing each round}
    \STATE $C_s \leftarrow $ desired candidate for elimination or win in this round
      \IF{$C_s$ needs elimination}
       \IF{all in-contest candidates have votes $< Q$}
        \STATE Add $\text{votes}_{\text{new}} \ \forall  \ C_i : \  v^r_{C_i} + \text{votes}_{\text{new}} \geq v^r_{C_s} + 1$.
        \ELSE
        \RETURN infeasible
        \COMMENT{Increase $Q$ (i.e.,  budget $B$).}
        \ENDIF
        \ENDIF
      \IF{$C_s$ needs to win}
       \IF{all in-contest candidates have votes $< Q$}
        \STATE Add $\text{votes}_{\text{new}}\text{ to } C_s : \  v^r_{C_s} + \text{votes}_{\text{new}} \geq Q +1$.
        \ELSE
        \STATE Add $\text{votes}_{\text{new}} \text{ to } C_s : \  v^r_{C_s} + \text{votes}_{\text{new}}\geq \max(v^r_{C_i}) + 1$.
        \ENDIF
        \STATE $C_{last} \gets$ \text{ Candidate who went out of contest in the previous round}
        \IF{ $C_{last}$ doesn't transfer to $C_s$}
        \STATE Add $\text{votes}_{\text{new}}  [C_{last}, C_s] : \  v^r_{C_{last}}\leftarrow v_{C_{last}} + 1$.        \IF{$margin(v^r_{C_{last}})< 1$} 
        \RETURN infeasible \COMMENT{Increase $Q$ (i.e.,  budget $B$).}
        \ENDIF
        \ENDIF
      \ENDIF
      \STATE  $\text{votes} \leftarrow \text{votes} + \text{votes}_{\text{new}}$ 
      \STATE $\mathcal{C}\gets \mathcal{C}/ C_s$, update  $\{v\}_\mathcal{C} $
      
\ENDFOR
      \IF{votes $\geq$ budget}
       \RETURN feasible \COMMENT{Distribution of added votes}
       \ELSE
       \RETURN infeasible \COMMENT{Increase budget $B$}
    \ENDIF
\end{algorithmic}
\end{algorithm}

\begin{algorithm}
\caption{Removal of Irrelevant Candidates (by \citet{deshpande2024optimal})}
\label{algo:candidate-removal}
\begin{algorithmic}[1]
\REQUIRE  $\mathcal{C}$ in increasing order of $V^1$, Dictionary of orders (ballots) and values (votes for those ballots) $\{D\}_I$, Constants $B$, $Q$
\ENSURE Smallest relevant candidate set $\mathcal{C}$
\STATE \textbf{procedure} \textsc{IrrelevantCandidateRemoval}($\mathcal{C}, \{D\}_I, B, Q$)
\STATE $L, S \gets  \{\}, $  \text{dictionary with keys from set $\mathcal{C}$ and values as zeros}
\WHILE{$B+S[C_i]<Q$ for all {$C_i \in L$} \textbf{AND} $\mathcal{C} \neq \phi$}
    \STATE Shift first  candidate $C_1\in \mathcal{C}$ to set $L$
    \STATE $S \gets \textsc{Strict-Support}(L, \mathcal{C}\setminus L, \{D\}_I$)
    \STATE $S_j^i  \gets \textsc{Strict-Support}(C_j\cup L\setminus C_i, \{\}, \  \{D\}_I) \ \forall C_i, C_j \in L, \mathcal{C}\setminus L $
    \IF{$ B+ S[C_i] < S_j^i[C_j]< Q  \ \ \forall \ C_i, C_j \in L, \mathcal{C}\setminus L$} \label{line: ifconditions}
        \STATE $\{D\}_I \gets \textsc{ReduceElectionInstance}(L, \{D\}_I$)
        \STATE $\mathcal{C}, L \gets \mathcal{C}\setminus L, \{\} $ 
        \STATE Sort $\mathcal{C}$ in increasing order of updated $V^1$
    \ENDIF
\ENDWHILE
\RETURN $\mathcal{C}, \{D\}_I$
\end{algorithmic}
\end{algorithm}

\subsection{Proofs}\label{app:proofs}

\robustchoices*
\begin{proof}
We prove each extension separately:

(1) \textit{Robustness to subsequent choices:} When implementing spoiler-inducing strategies, subsequent preferences on added ballots can affect the election outcome after the elimination of certain candidates. The basic \textsc{SmartAllocation} procedure can be enhanced to ensure robustness against any possible subsequent preferences.

The key insight is to track `free votes' (ballots from inactive candidates) and compute worst-case transfers from these votes to each remaining candidate. We modify line 7 in Algorithm~\ref{algo: allocation} for Case-(L) to:
\begin{equation}
\text{Add } \text{votes}_{\text{new}} \ \forall  \ C_i : \  v^r_{C_i} + \text{votes}_{\text{new}} + \text{max\_transfers}[C_i] \geq v^r_{C_s} + \text{max\_transfers}[C_s] + 1
\end{equation}

And similarly, lines 11-13 for Case-(W) to account for potential transfers. Here, $\text{max\_transfers}[C_i]$ represents the maximum possible votes that could transfer to candidate $C_i$ if all free votes had $C_i$ as their next preference.

By augmenting the required vote calculations with these worst-case transfer scenarios, we ensure the desired structure is preserved regardless of subsequent preferences on added ballots. The optimality is maintained because we add precisely the minimum number of votes required under all possible subsequent preference configurations.

(2) \textit{Robustness to prefixing irrelevant candidate choices:} 
Algorithm~\ref{algo:candidate-removal} identifies consecutive blocks of candidates as irrelevant when they cannot affect the election beyond their internal ordering for up to $B$ ballot additions. Since these irrelevant candidates get eliminated first and transfer votes to strategic candidates below, optimal ballots from Algorithm~\ref{algo: allocation} remain optimal when prepended with arbitrary rankings of irrelevant candidates. 

This prefix robustness also combines with the suffix robust strategies from extension (1), using the same reasoning. The enhanced algorithm (developed from extension (1)) thus outputs optimal ballots that retain optimality when modified with irrelevant candidate prefixes and arbitrary candidate suffixes.

(3) \textit{Length-restricted strategies:} To adapt Algorithm~\ref{algo: allocation} for length-restricted strategies (particularly single-choice ballots), we modify how votes are tracked and reused. Instead of reusing votes from inactive candidates by adding subsequent preferences, we create new ballots when necessary to respect the length restriction.

For the special case of single-choice ballots, we  disable vote reuse, creating a new ballot for each required addition. 
Line 20 in \Cref{algo: allocation} requires length-two ballots to control vote timing for candidate $C_s$. Under single-choice ballot restrictions, $C_s$ would accumulate sufficient first-preference votes to exceed the Droop quota prematurely, winning in the previous round rather than the intended round. Length-two ballots (ranking $C_{last}$ first, $C_s$ second) delay $C_s$'s vote accumulation until after $C_{last}$'s elimination and subsequent vote transfers. When restricted to single-choice ballots, the algorithm would return infeasibility since the temporal vote control is impossible—resolvable only by increasing the Droop quota (raising the winning threshold) or expanding budget $B$.

Optimality within the length constraint follows from the optimality of \Cref{algo: allocation}, as we continue to add the minimum number of ballots required for each round while respecting the length restriction.

All extensions maintain the polynomial-time complexity of the original algorithm while significantly enhancing its practical applicability.
\end{proof}
\ballotexhaust*
\begin{proof}
For candidate $c$, let $g_c$ denote the number of votes in optimal winning strategy from \Cref{algo: robust_allocation} and $r_c$ be the earliest round where $c$'s optimal strategy affects the election outcome. Let $E_r$ denote the number of ballots exhausted by round $r$. Note that candidate $c$ can win iff the number of exhausted ballots available by round $r_c$ is at least the number of strategic ballots required by \Cref{algo: robust_allocation} ($E_{r_c-1}\ge g_c$). 

Appending an exhausted ballot with strategic ballots is equivalent to adding a new strategic ballot. When a ballot exhausts at round $r_e < r_c$, it contains only preferences for candidates eliminated by round $r_e$. Completing this ballot with optimal strategy preferences produces identical electoral effects to adding a new strategic ballot from round $r_c$ onward, since the exhausted portion involves only inactive candidates while the completion portion matches the optimal strategy specification. Since $r_c$ is the earliest affected round, a new ballot from round $r_c$ is equivalent to the same ballot being newly added to the election instance. 

A candidate $c$ is viable through ballot completion if and only if $g_c \leq E_{r_c-1}$.  For multi-round strategies, we check this condition for each activation round separately. If candidate $c$'s strategy requires additions in rounds $r_1, r_2, \ldots$ with gaps $g_1, g_2, \ldots$, then $c$ is viable if $g_{c_i} \leq E_{r_{c_i}-1}$ for all $i$.  The set of viable candidates $S_v = \{c, i : g_{c_i} \leq E_{r_{c_i}-1}\}$ captures all possible winners through ballot completion.

The robustness guarantee in \Cref{thm: robust_subsequent_choice_part1} extends to robust characterization of modeling ballot completion effects. The completion of exhausted ballots is not restricted to only appending the optimal strategy component---preferences may include any irrelevant as well as prior eliminated candidates at any position, and are robust to arbitrary suffix preferences.

\textbf{Multi-winner case:} In multi-winner STV, the Droop quota increases with ballot additions, which can alter the sequence within a structure. Then, the above characterization holds when \Cref{algo: robust_allocation}'s output preserves the sequence within the election structure, i.e., when additions affect which candidates win or are eliminated within rounds without converting win rounds to elimination rounds. That is, the set of viable candidates is restricted to those candidates that have strategies which retain the sequence.
\end{proof}

\augmentremoval*
\begin{proof}
This algorithmic extension works by recognizing that certain candidates may be eliminated even if they
temporarily survive one elimination round. Specifically, we replace the condition at line \ref{line: ifconditions}:
\[ B+ S[C_i] < S_j^i[C_j]< Q  \ \ \forall \ C_i, C_j \in L, \mathcal{C}\setminus L \]
with a call to \textsc{ExtendedRemovalCondition} as defined in Algorithm \ref{algo:extended-condition}. We now rigorously prove the condition.

Let $C_i \in L$ be a candidate for whom the original condition fails: $B + S[C_i] \geq \min_{C_j \in \mathcal{C}\setminus L} S_j^i[C_j]$. This means $C_i$ could potentially receive enough votes to outlast the candidate with the lowest vote count in the upper group, whom we denote as $C_{worst}$.
    
    Our extension verifies two critical conditions:
    
    \begin{enumerate}
        \item First, we check if budget $B$ is sufficient to simultaneously save both $C_i$ and $C_{worst}$ from elimination. Let $C_{second}$ and $C_{third}$ be the candidates with the second and third lowest votes in $\mathcal{C}\setminus L$. If:
        \[2 \cdot S_j^i[C_{third}] - S_j^i[C_{second}] - S_j^i[C_{worst}] > B\]
        then the budget is insufficient to manipulate both eliminations simultaneously.
        
        \item Second, we simulate the elimination of $C_{worst}$ and check if $C_i$ would still be eliminated next, even with $B$ votes added. If:
        \[S^{temp}[C_i] + B < \min_{C_j \in \mathcal{C}\setminus (L \cup \{C_{worst}\})} S_j^i[C_j]\]
        where $S^{temp}$ represents vote counts after $C_{worst}$'s elimination and vote transfers, then $C_i$ is guaranteed to be eliminated next.
    \end{enumerate}
    
    If both conditions are satisfied for all candidates in $L$ that fail the original condition, then the entire lower group $L$ can be safely removed without affecting the final outcome. This extension recognizes that temporary deviations in elimination order may still converge to identical results, allowing for more efficient pruning of the election instance. \Cref{algo:extended-condition} captures this modified removal condition.
    
\end{proof}

\begin{algorithm}
\caption{Extended Candidate Removal Condition}
\label{algo:extended-condition}
\begin{algorithmic}[1]
\REQUIRE  $L$, $U = \mathcal{C}\setminus L$, $\{D\}_I$, $B$, $S$, $S_j^i$
\ENSURE Boolean indicating if the extended removal condition is satisfied

\STATE \textbf{function} \textsc{ExtendedRemovalCondition}($L, U, \{D\}_I, B, S, S_j^i$)
\STATE $orig\_condition \gets \textbf{true if } B+ S[C_i] < S_j^i[C_j] < Q \text{ } \forall C_i \in L, C_j \in U$
\IF{$orig\_condition$}
    \RETURN \textbf{true}
\ENDIF

\FORALL{$C_i \in L$ where $B + S[C_i] \geq \min_{C_j \in U} S_j^i[C_j]$}
    \STATE $C_{worst} \gets \arg\min_{C_j \in U} S_j^i[C_j]$ \COMMENT{Lowest vote candidate in upper group}
    \STATE $U_{temp} \gets U \setminus \{C_{worst}\}$
    \IF{$|U_{temp}| < 2$}
        \RETURN \textbf{false} \COMMENT{Not enough candidates for extension}
    \ENDIF
    
    \STATE $C_{second}, C_{third} \gets$ candidates with two lowest vote counts in $U_{temp}$
    \IF{$2 \cdot S_j^i[C_{third}] - S_j^i[C_{second}] - S_j^i[C_{worst}] \leq B$}
        \RETURN \textbf{false} \COMMENT{Budget can save both $C_i$ and $C_{worst}$}
    \ENDIF
    
    \STATE $\{D\}_I^{temp} \gets \textsc{ReduceElectionInstance}(\{C_{worst}\}, \{D\}_I)$
    \STATE $S^{temp} \gets \textsc{Strict-Support}(\{C_i\}, U_{temp}, \{D\}_I^{temp})$
    \IF{$S^{temp}[C_i] + B \geq \min_{C_j \in U_{temp}} S_j^i[C_j]$}
        \RETURN \textbf{false} \COMMENT{$C_i$ can survive after $C_{worst}$ elimination}
    \ENDIF
\ENDFOR
\RETURN \textbf{true} \COMMENT{Extended condition met for all candidates}
\end{algorithmic}
\end{algorithm}

\candidateremovalextension*
\begin{proof}[Proof]

The position of \(C_w\)'s win in the elimination sequence of \(L\) influences the election in two aspects:
\begin{enumerate}
    \item Immediately after winning, \(C_w\) transfers its \emph{weighted surplus votes} to candidates still in the election, including those in \(L\).
    \item After \(C_w\) is elected, any subsequent unweighted transferred votes arriving at \(C_w\) are shifted to the next specified preferences (for example, from ballots of the form \([L_1, C_w, L_4, \dots]\)).
\end{enumerate}
As given in the proof sketch, our argument is that—even if the new votes from \(B\) shift \(C_w\)'s position within the elimination sequence of \(L\)—\(C_w\)'s influence remains bounded in both aspects, ensuring that the candidates in \(L\) remain irrelevant. We then independently bound the impact of these two aspects, and write an updated strict-support (see \Cref{def: strict_support}) condition that utilizes these bounds (as specified in line \ref{line: ifconditions} of \Cref{algo:candidate-removal}). 

(1) To bound \(C_w\)'s influence in transferring weighted surplus, suppose \(C_w\) wins in the penultimate position of \(L\)'s elimination sequence. In this case, at most all votes of the form \([L, C_w]\) are added to \(C_w\) at that point, which causes \(C_w\) to exceed \(Q\) for the first time. Denote by \(SV_0\) the maximum number of such votes. Formally, 
\[SV_0 = \text{Strict-Support}_{C_w \cup L\setminus C_i, \{\}}(C_w) - Q\]
Let \(V_{C_w}\) be the number of \(C_w\)'s original (first‐choice) votes, WLOG, we assume $V_{C_w} \leq Q$, otherwise $C_w$ would win before elimination rounds start. Then each transfer vote is assigned a weight of at most
\[
\frac{SV_0 + V_{C_w} - Q}{SV_0+V_{C_w}} \leq \frac{SV_0}{SV_0+V_{C_w}}
\]
Since \(Q\) is the maximum number of votes \(C_w\) may have before winning.
Next, if we let \(X\) denote the number of \(C_w\)'s votes for which the next choice is in $C_i$, then any candidate \(C_i \in L\) can receive at most 
\[
SV_0 \times \frac{X}{SV_0+V_{C_w}}
\]
additional weighted votes (surplus) from \(C_w\)'s win. $X$ may be formally expressed as $\sum_{C_i \in L} \tilde{V}_{C_w, C_i}$, where $\{\tilde{V}\}$ denote the ballot counts \emph{after} election instance is reduced with eliminating $L\setminus C_i$. 

(2) Next, consider the maximum number of (unweighted) votes that a candidate in \(L\) can receive via transfers from other candidates in \(L\) when \(C_w\) wins at the earliest possible point. In this case, the total unweighted transfers are bounded by \(SV_1 = \text{Strict-Support}_{C_w \cup L, \{\}}(C_i) - V_{A, C_i}\). Thus, \(SV_1\) is the upper bound on the unweighted transfers that can originate from \(C_w\)'s win and reach $C_i$.

Thus, a candidate $C_i \in L$ can have at most 
\begin{equation}
    SV_0 \times \frac{X}{SV_0+V_{C_w}} + SV_1 \label{eq:extension_bd1}
\end{equation}
votes irrespective of $C_w$'s position of win, considering both aspects.

Given $C_w$ wins during the elimination sequence of $L$, we also update the minimum votes a candidate $C_j \in \mathcal{C}\setminus L$ may have to:
\begin{equation}
    \text{Strict-Support}_{C_j \cup L\setminus C_i, \{\}}(C_j) + \max(0, \text{Strict-Support}_{L\setminus C_i, C_w}(C_j) - [V_{C_w} - Q])
    \label{eq:extension_bd2}
\end{equation}

Here, the first term is exactly the same as line \ref{line: ifconditions} in \Cref{algo:candidate-removal}, and the second term captures the minimum unweighted transfers candidate $C_j$ can receive from $L$, \emph{after} $C_w$ is out-of-contest. This assumes the worst-case scenario of at least $Q - V_{C_{w}}$ unweighted transfers from $L$ reaching $C_w$ prior to its win, i.e., while $C_w$ still in-contest, thereby restricting their subsequent transfer to $C_j$.

Finally, to determine if the addition of $B$ votes retains $L$ irrelevant, we verify the updated strict-support condition (i.e., line \ref{line: ifconditions} in \Cref{algo:candidate-removal}) for each candidate \(C_i \in L\) relative to each candidate \(C_j \in \mathcal{C} \setminus L\), now comparing their votes using Eq. \eqref{eq:extension_bd1} and \eqref{eq:extension_bd2}. Algorithm~\ref{algo:multi-winner-verification} implements this verification process efficiently, directly applying the bounds we have derived to check whether removal of candidates in $L$ is permissible even with a winning candidate $C_w$. This verification can be carried out independently for each case, even if multiple candidates win during the elimination sequence of \(L\). 

In summary, if the updated strict support condition is satisfied, then the additional transfers from \(C_w\)'s win (both weighted and unweighted) remain within the bounds that preserve the irrelevance of candidates in \(L\). Hence, the optimality of \Cref{algo:candidate-removal} in retaining \(\mathcal{C} \setminus L\) is validated.\\
\noindent
 \textbf{Complexity: }  
For computing Eq. \eqref{eq:extension_bd1}, we call \textsc{Strict-Support} twice and \textsc{ReduceElectionInstance} once, hence $O(mn + n) =O(mn)$ operations for $O(m)$ candidates. For Eq. \eqref{eq:extension_bd2}, we call \textsc{Strict-Support} twice, thus $O(mn)$ operations for $O(m)$ candidates. Hence, the complexity of the verification is at most $O(nm^2)$, for $n$ number of unique ballots and $m$ candidates, as implemented in Algorithm~\ref{algo:multi-winner-verification}.
\end{proof}

\begin{algorithm}
\caption{Verification Procedure for Multi-Winner Instances}
\label{algo:multi-winner-verification}
\begin{algorithmic}[1]
\REQUIRE Winner $C_w$, Ballot dictionary $\{D\}_I$, Quota $Q$, Candidates $\mathcal{C}$, Irrelevant set $L$, Budget $B$
\ENSURE Boolean indicating if removal is permitted

\STATE \textbf{function} \textsc{MultiWinnerVerification}($C_w, \{D\}_I, Q, \mathcal{C}, L, B$)
\STATE $V_{C_w} \gets$ First-choice votes for $C_w$ in $\{D\}_I$
\STATE $SV_0 \gets \textsc{Strict-Support}_{C_w \cup L, \{\}}(C_w) - Q$ \COMMENT{Surplus votes from $C_w$}

\FORALL{$C_i \in L$}
    \STATE $\{D\}_I^{temp} \gets \textsc{ReduceElectionInstance}(L \setminus C_i, \{D\}_I)$ \COMMENT{Reduce instance}
    \STATE $X \gets$ Count of ballots where next choice after $C_w$ is $C_i$ in $\{D\}_I^{temp}$
    \STATE $SV_1 \gets \textsc{Strict-Support}_{C_w \cup L, \{\}}(C_i) - $ First-choice votes for $C_i$
    
    \STATE $max\_votes\_C_i \gets SV_0 \times \frac{X}{SV_0 + V_{C_w}} + SV_1$ \COMMENT{Max votes for $C_i$ per Eq.~\eqref{eq:extension_bd1}}
    
    \FORALL{$C_j \in \mathcal{C} \setminus L$}
        \STATE $direct\_votes \gets \textsc{Strict-Support}_{C_j \cup L \setminus C_i, \{\}}(C_j)$
        \STATE $transfers\_after\_win \gets \textsc{Strict-Support}_{L \setminus C_i, C_w}(C_j)$
        \STATE $min\_votes\_C_j \gets direct\_votes + \max(0, transfers\_after\_win - [V_{C_w} - Q])$ \COMMENT{Min votes for $C_j$ per Eq.~\eqref{eq:extension_bd2}}
        
        \IF{$B + max\_votes\_C_i \geq min\_votes\_C_j$}
            \RETURN \textbf{false} \COMMENT{Removal not permitted}
        \ENDIF
    \ENDFOR
\ENDFOR

\RETURN \textbf{true} \COMMENT{Removal permitted}
\end{algorithmic}
\end{algorithm}

\begin{restatable}{lemma}{computationalenhancements} \label{thm: computational_enhancements}
The smart allocation procedure in Algorithm~\ref{algo: robust_allocation} produces effective proposals for pruning sub-optimal and infeasible structures, thereby improving efficiency with memory-sharing and parallelization.
\end{restatable}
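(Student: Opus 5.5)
The plan is to exploit the structure of the exhaustive search: after \Cref{algo:candidate-removal-updated} leaves $m'$ relevant candidates, \Cref{algo: robust_allocation} is invoked once per structure $(f,s)$, and there are $(m'!)\,2^{m'-1}$ of them. I would argue three things in order: (i) the allocation run on a structure naturally yields a \emph{pruning proposal}, namely a certificate that a whole family of structures sharing a prefix is infeasible or sub-optimal; (ii) the runs for different structures share computation through common round prefixes, so memoization is sound; (iii) the residual runs are independent, so parallelization preserves the output. Each of these only reorganizes the computation, so polynomial cost per structure, and hence the optimal strategies, are unchanged.

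For (i), I would use the greedy, round-by-round nature of \Cref{algo: allocation} and its robust version, as described in \Cref{thm: robust_subsequent_choice_part1}. The ballots added when processing round $r$ depend only on the first $r$ entries of $(f,s)$ and the current vote state, and the cumulative count $\text{votes}$ is nondecreasing across rounds. Hence, if after processing a prefix of length $r$ the algorithm returns infeasible (quota violated, or $margin<1$), then every structure extending that prefix is infeasible. Likewise, if the partial count already exceeds the best full-strategy cost found so far for the same target candidate (or exceeds the allowance $B$), every extension is sub-optimal. This is a branch-and-bound argument on the tree of prefixes of $(f,s)$, and the monotonicity of the added votes is exactly what makes it valid.

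For (ii), I would memoize on the pair (prefix of $(f,s)$, resulting augmented ballot profile), or equivalently on the prefix alone, since the profile is a deterministic function of it. Siblings in the prefix tree can then reuse the parent's state instead of recomputing from round one. For (iii), the subtrees below distinct first decisions touch disjoint memo entries apart from read-only shared prefixes. They can therefore be dispatched to separate workers, and a global minimum is taken at the end. The result is identical to the sequential search because the minimum over a partition equals the global minimum.

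The main obstacle is justifying the prefix-monotonicity claim for the robust and multi-winner variants. The $\text{max\_transfers}$ terms, and the Droop quota growing with added ballots, could in principle make a later round's requirement depend on earlier additions non-monotonically. I would first try to show that the total added ballots after round $r$ are bounded below by those after round $r-1$, which should follow since additions are never removed. I would then restrict the cost-based pruning to this lower bound rather than claiming exact costs, which suffices for correctness of the pruning.
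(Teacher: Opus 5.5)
Your proposal is correct and follows the paper's argument: because Algorithm~\ref{algo: robust_allocation} processes rounds sequentially and only ever adds ballots, a prefix of rounds that is infeasible within $B$ certifies infeasibility of every structure sharing it, so such prefixes can be cached and checked before evaluation, and the remaining structures can be evaluated independently in parallel. Your branch-and-bound on partial cost and your memoization of intermediate vote states are sound refinements that the paper does not spell out. One precision point: the prefix you cache on must be the sequence of round decisions (which candidate wins or is eliminated in each of the first $r$ rounds), not literally the first $r$ entries of $f$, since losers fill $f$ from the bottom.
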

\begin{proof}
We exploit Algorithm~\ref{algo: robust_allocation}'s sequential processing to prune the search space efficiently. If budget $B$ is insufficient for the first $k$ rounds of a structure, then any structure sharing those same first $k$ rounds is also infeasible.

Our pruning mechanism caches infeasible prefixes and checks each new structure against this cache before evaluation. This eliminates entire branches without individual evaluation, significantly reducing computation for large structure spaces.

Correctness follows from the algorithm's sequential nature: if required votes for rounds 1 through $k$ exceed budget $B$, then any structure with the same initial pattern is necessarily infeasible. The approach naturally parallelizes since structures can be independently tested for feasibility.
\end{proof}

\subsection{Enhanced Algorithmic Framework}\label{app:enhanced_algo}
We now present the enhanced algorithms discussed in \Cref{sec:computational_framework}. \Cref{algo: robust_allocation} extends \Cref{algo: allocation} with robustness guarantees and length constraints, while \Cref{algo:candidate-removal-updated} improves \Cref{algo:candidate-removal} with strengthened removal conditions for broader applicability. The key changes in the algorithms are highlighted in blue color.
\LineNumbersRobustBlue 
\begin{algorithm}
\renewcommand{\thealgorithm}{A}
\caption{Robust Allocation Rule} \label{algo: robust_allocation}
\begin{algorithmic}[1]
\REQUIRE  $\mathcal{C}$, voter data $\{v\}_{\mathcal{C}}$, Constants $B$, $Q$
\ENSURE Budget allocation robust to subsequent preferences

\STATE \textbf{function} \textsc{RobustAllocation}( $\mathcal{C}$, $\{v\}_{\mathcal{C}}$, $B$, $Q$)
\STATE Initialize $\text{votes}, \text{ votes}_{\text{new}} \leftarrow 0$, $\text{free\_votes} \leftarrow \{\}$

\FOR{processing each round $r$ from first to last}
    \STATE $C_s \leftarrow $ desired candidate for elimination or win in this round
    \STATE $\text{worst\_case\_transfers} \gets$ compute maximum possible transfers from $\text{free\_votes}$ to each candidate
    
    \IF{$C_s$ needs elimination}
        \IF{all in-contest candidates have votes $< Q$}
            \STATE Add $\text{votes}_{\text{new}} \ \forall  \ C_i : \  v^r_{C_i} + \text{votes}_{\text{new}} + \text{worst\_case\_transfers}[C_i] \geq v^r_{C_s} + \text{worst\_case\_transfers}[C_s] + 1$
        \ELSE
            \RETURN infeasible
        \ENDIF
    \ENDIF
    
    \IF{$C_s$ needs to win}
        \IF{all in-contest candidates have votes $< Q$}
            \STATE Add $\text{votes}_{\text{new}}\text{ to } C_s : \  v^r_{C_s} + \text{votes}_{\text{new}} \geq Q + 1 - \text{worst\_case\_transfers}[C_s]$
        \ELSE
            \STATE Add $\text{votes}_{\text{new}} \text{ to } C_s : \  v^r_{C_s} + \text{votes}_{\text{new}} + \text{worst\_case\_transfers}[C_s] \geq \max_i(v^r_{C_i} + \text{worst\_case\_transfers}[C_i]) + 1$
        \ENDIF
        
        \STATE $C_{last} \gets$ \text{Candidate who went out of contest in the previous round}
        \IF{$C_{last}$ doesn't transfer to $C_s$}
            \STATE Add $\text{votes}_{\text{new}}  [C_{last}, C_s] : \  v^r_{C_{last}} \leftarrow v_{C_{last}} + 1$
            \IF{$margin(v^r_{C_{last}}) < 1$}
                \RETURN infeasible
            \ENDIF
        \ENDIF
    \ENDIF
    
    \STATE $\text{votes} \leftarrow \text{votes} + \text{votes}_{\text{new}}$
    \STATE $free\_votes \gets free\_votes \cup \{$votes from eliminated candidates$\}$
    \STATE $\mathcal{C} \gets \mathcal{C} \setminus C_s$, update $\{v\}_\mathcal{C}$
\ENDFOR

\IF{$\text{votes} \leq B$}
    \RETURN feasible \COMMENT{Distribution of added votes}
\ELSE
    \RETURN infeasible \COMMENT{Increase budget $B$}
\ENDIF
\end{algorithmic}
\end{algorithm}
\LineNumbersDefault
\addtocounter{algorithm}{-1} 

\LineNumbersBlueVIItoXIII
\begin{algorithm}
\renewcommand{\thealgorithm}{B}
\caption{Removal of Irrelevant Candidates under Stronger Conditions}
\label{algo:candidate-removal-updated}
\begin{algorithmic}[1]
\REQUIRE  $\mathcal{C}$ in increasing order of $V^1$, Dictionary of orders (ballots) and values (votes for those ballots) $\{D\}_I$, Constants $B$, $Q$
\ENSURE Smallest relevant candidate set $\mathcal{C}$
\STATE \textbf{procedure} \textsc{IrrelevantCandidateRemoval}($\mathcal{C}, \{D\}_I, B, Q$)
\STATE $L, S \gets  \{\}, $  \text{dictionary with keys from set $\mathcal{C}$ and values as zeros}
\WHILE{$B+S[C_i]<Q$ for all {$C_i \in L$} \textbf{and} $\mathcal{C} \neq \phi$}
    \STATE Shift first  candidate $C_1\in \mathcal{C}$ to set $L$
    \STATE $S \gets \textsc{Strict-Support}(L, \mathcal{C}\setminus L, \{D\}_I$)
    \STATE $S_j^i  \gets \textsc{Strict-Support}(C_j\cup L\setminus C_i, \{\}, \  \{D\}_I) \ \forall C_i, C_j \in L, \mathcal{C}\setminus L $
    \IF{\textsc{ExtendedRemovalCondition}($L, U, \{D\}_I, B, S, S_j^i$) = True} \label{line: ifnewconditions}
        \IF{ $\{V^1_{C_i}$ after transfers from $L\}$ $> Q$ for $C_i \in \mathcal{C}$}
            \STATE $C_w \gets C_i$ 
            \IF{ \textsc{MultiWinnerVerification}($C_w, \{D\}_I, Q, \mathcal{C}, L, B$) = False}
            \STATE \textbf{break}
            \ENDIF
        \ENDIF
        \STATE $\{D\}_I \gets \textsc{ReduceElectionInstance}(L, \{D\}_I$)
        \STATE $\mathcal{C}, L \gets \mathcal{C}\setminus L, \{\} $ 
        \STATE Sort $\mathcal{C}$ in increasing order of updated $V^1$
    \ENDIF
\ENDWHILE
\RETURN $\mathcal{C}, \{D\}_I$
\end{algorithmic}
\end{algorithm}
\LineNumbersDefault
\addtocounter{algorithm}{-1} 

\section{Probability Models for Ballot Completion in RCV} \label{app:ballot_exhaustion}

We develop probability models to determine whether trailing candidates could overcome their deficits if exhausted ballots were completed. Our algorithmic framework is essential for these calculations, as it computes the proximity to victory for each candidate—information that would otherwise be computationally prohibitive to obtain. The models apply to both single-winner and multi-winner Single Transferable Vote (STV) systems.\\ 
\underline{Single-Winner Framework:} Let candidate A lead candidate B by gap percentage $g$ (as percentage of total votes), with exhaust percentage $e$ representing ballots that don't rank either A or B in the final round. 
 \\ 
\underline{Multi-Winner Framework:} Let candidate $C$ be eliminated when $k-1$ winners have been selected, competing against $m$ active candidates. The strategy percentage $s$ represents the gap $C$ needs to overcome, and exhaust percentage $e$ represents ballots that don't rank any active candidate at $C$'s elimination. \\
\underline{Required Preference Computation:} For the trailing candidate to win, they need a net gain at least equal to the gap. If proportion $p$ of exhausted ballots prefer the trailing candidate, then net gain is $e(2p-1)$ where $e$ is the number of exhausted ballots. Setting this equal to the required gap and solving yields $r$, the required preference percentage: $r = 50 + \frac{g}{2e} \times 100$ (substituting $s$ for $g$ in multi-winner cases).

\paragraph{Gap-Based Beta Model}
This model uses a Beta distribution with parameters calibrated to the observed Victory Gap between candidates. For a gap percentage $g$ (single-winner) or strategy percentage $s$ (multi-winner), we define $a = \max(50 - g \cdot 0.5, 10)$ and $b = \max(50 + g \cdot 0.5, 10)$. The probability that the trailing candidate wins is $1 - F_{\text{Beta}(a,b)}(r)$, where $r$ is the required preference proportion and $F$ is the CDF of the Beta distribution.

\paragraph{Similarity Beta Model}
This empirical model analyzes non-exhausted ballots that express relevant preferences, categorized by first preference. For each exhausted ballot category (grouped by first preference), we use the corresponding complete ballot data to estimate expected preference distributions. Specifically, we calculate the expected number of completions favoring the trailing candidate versus opponents across all categories, then convert these to percentages. The probability is calculated using $1 - F_{\text{Beta}(\alpha,\beta)}(r)$, where $\alpha$ equals the weighted trailing candidate preference percentage and $\beta$ equals the weighted opponent preference percentage (with $\alpha + \beta = 100$).

\paragraph{Prior-Posterior Beta Model}
This Bayesian model combines the Gap-Based Beta (prior) with observed preference evidence. The prior parameters are gap-based (using the same calculation as Gap-Based Beta). From observed ballot completions, we calculate expected preference percentages for the trailing candidate versus opponents across all first-preference categories. The posterior parameters are weighted combinations: $\alpha_{\text{post}} = \frac{w_1 \alpha_{\text{prior}} + w_2 p_{\text{trailing}}}{w_1 + w_2}$ and $\beta_{\text{post}} = \frac{w_1 \beta_{\text{prior}} + w_2 p_{\text{opponents}}}{w_1 + w_2}$, where $p_{\text{trailing}}$ and $p_{\text{opponents}}$ are the observed preference percentages from the Similarity model, and $w_1 = w_2 = 1.0$.

\paragraph{Similarity Bootstrap}
This bootstrap method categorizes exhausted ballots by first preference, then samples completions based on weighted observed A$>$B vs B$>$A preferences within each category. For each bootstrap iteration, we sample completions for each exhausted ballot category, calculate net votes gained by B, and determine if B wins. The probability is the proportion of iterations where B wins.

\paragraph{Rank-Restricted Bootstrap}
Similar to the Similarity Bootstrap, but respects ranking limits (e.g., NYC's 5-candidate limit) by only completing ballots with fewer than the maximum allowed rankings. This accounts for the institutional constraint that some exhausted ballots could not have been completed due to ranking limits.

\paragraph{Unconditional Bootstrap}
This method samples completions from the overall distribution of preferences between A and B across all ballots, without conditioning on first preference. For each iteration, we sample from a binomial distribution with parameter $p$ equal to the overall proportion of non-exhausted ballots that prefer B over A.

\section{Single-Winner STV: Details}\label{app:single_winner_details}
\subsection{Computational Details}

For each of the 110 elections analyzed, we first applied \Cref{algo:candidate-removal-updated} to reduce the number of relevant candidates to fewer than 10. This reduction yielded a minimum algorithmic traceability threshold of 8.5\%, observed in NYC Council District 26. Elections already having fewer than 10 candidates were analyzed with allowances of 40\% for NYC and 100\% for Alaska. Subsequently, we employed \Cref{algo: robust_allocation} to identify optimal strategies within the allowed vote additions for each relevant candidate. An optimal strategy specifies the minimal ballot additions necessary for a candidate to win, potentially involving strategic support for rival candidates under RCV. With \Cref{thm: computational_enhancements}, the algorithm runtime is approximately 2 days for 10-candidate elections, around 4 hours for 9-candidate elections, and significantly shorter for elections with fewer candidates.

Using these computed strategies, we derived key election attributes. The margin of victory (competitiveness) was determined by the runner-up candidate's victory gap. In three elections, the allowance proved insufficient for any trailing candidate to achieve victory; in these cases, we upper-bounded the victory margin by retaining only the top 10 candidates and identifying the smallest victory gap exceeding the threshold. This remains an upper bound since the eliminated candidates' potential impact on the election dynamics remain unverified.

We analyzed distributions of margins of victory and compared them against corresponding distributions from prior plurality elections in NYC and Alaska. To assess ballot exhaustion impact, we calculated exhausted ballots in each elimination round, comparing these figures to candidates' victory gaps. It suffices to evaluate only relevant candidates within the allowance since irrelevant candidates naturally exhibit victory gaps larger than the accumulated exhaustion levels, unless all relevant candidates simultaneously demonstrate outcome sensitivity to ballot exhaustion—a scenario that does not occur. For candidates whose victory gaps fall below accumulated exhaustion levels, we estimated probabilities of alternative election outcomes using probability models detailed in \Cref{app:ballot_exhaustion}.

The remaining attributes—strategic complexity and preference order alignment—were extracted directly from victory gaps and strategic descriptions for candidates with feasible strategies within allowance limits. Given the substantial number of elections analyzed for NYC (54) and Alaska (52), insights for all four attributes (competitiveness, ballot exhaustion, strategic complexity, and preference order alignment) can be directly drawn from aggregated attributes. In contrast, for Portland's multi-winner RCV elections (4 total), we utilize bootstrap analysis on each of the four elections to rigorously evaluate election attributes in the multi-winner context.

\subsection{Comprehensive Result Tables}
\footnotesize
\begin{longtable}{l | r | c | p{7cm} r | r}

\caption{2021 NYC RCV Democratic Primary: Complete Breakdown of the Strategy-Dynamic
(\textsuperscript{†}=No-Match, \textsuperscript{‡}=Non-Selfish; All others are Match/Selfish; \textsuperscript{*} = Maximal possible margin)}
\label{tab:nyc_elections} \\

\toprule
\textbf{District Name} & \textbf{ Votes } & \textbf{ \# } & \textbf{ Victory Gap Within Allowance} & \textbf{Allow.} & \textbf{Margin} \\
\midrule
\endfirsthead

\toprule
\textbf{District Name} & \textbf{ Votes } & \textbf{ \# } & \textbf{ Candidate Win Margin Within Allowance} & \textbf{Allow.} & \textbf{Margin} \\
\midrule
\endhead

\bottomrule
\multicolumn{6}{r}{\small Continued on next page} \\
\endfoot

\bottomrule
\endlastfoot

Council District 1st \textsuperscript{†} & 21839 & 9 & \colorbox{winnerColor}{\makebox[0.4em][c]{A}}:0\%, \colorbox{contenderColor}{\makebox[0.4em][c]{B}}:17.1\%, \colorbox{competitiveColor}{\makebox[0.4em][c]{C}}:20.1\%, \colorbox{competitiveColor}{\makebox[0.4em][c]{E}}:28.9\%, \colorbox{distantColor}{\makebox[0.4em][c]{D}}:37.5\% & 40.0 & 17.1\% \\
Council District 2nd & 21239 & 2 & \colorbox{winnerColor}{\makebox[0.4em][c]{A}}:0\%, \colorbox{distantColor}{\makebox[0.4em][c]{B}}:32.0\% & 40.0 & 32.0\% \\
Council District 3rd \textsuperscript{†} & 28513 & 6 & \colorbox{winnerColor}{\makebox[0.4em][c]{A}}:0\%, \colorbox{distantColor}{\makebox[0.4em][c]{C}}:34.3\%, \colorbox{distantColor}{\makebox[0.4em][c]{B}}:35.8\% & 40.0 & 34.3\% \\
Council District 5th \textsuperscript{†} \textsuperscript{‡} & 25415 & 7 & \colorbox{winnerColor}{\makebox[0.4em][c]{A}}:0\%, \colorbox{contenderColor}{\makebox[0.4em][c]{B}}:10.2\%, \colorbox{competitiveColor}{\makebox[0.4em][c]{D}}:24.0\%, \colorbox{competitiveColor}{\makebox[0.4em][c]{C}}:25.9\%, \colorbox{distantColor}{\makebox[0.4em][c]{E}}:31.9\% (CE:0.122\%) & 40.0 & 10.2\% \\
Council District 6th \textsuperscript{‡} & 39401 & 6 & \colorbox{winnerColor}{\makebox[0.4em][c]{A}}:0\%, \colorbox{distantColor}{\makebox[0.4em][c]{B}}:43.2\%, \colorbox{distantColor}{\makebox[0.4em][c]{C}}:47.8\%, \colorbox{distantColor}{\makebox[0.4em][c]{D}}:49.3\% (BD:1.228\%), \colorbox{farBehindColor}{\makebox[0.4em][c]{E}}:60.1\%, \colorbox{farBehindColor}{\makebox[0.4em][c]{F}}:65.1\% & 40.0 & 43.2\% \\
Council District 7th & 23214 & 12 & \colorbox{winnerColor}{\makebox[0.4em][c]{A}}:0\% & 18 & 18.62\% \textsuperscript{*}  \\
Council District 8th & 11692 & 4 & \colorbox{winnerColor}{\makebox[0.4em][c]{A}}:0\%, \colorbox{competitiveColor}{\makebox[0.4em][c]{B}}:28.0\% & 40.0 & 28.0\% \\
Council District 9th & 25679 & 13 & \colorbox{winnerColor}{\makebox[0.4em][c]{A}}:0\%, \colorbox{nearWinColor}{\makebox[0.4em][c]{B}}:0.3\%, \colorbox{contenderColor}{\makebox[0.4em][c]{C}}:5.8\% & 9.5 & 0.3\% \\
Council District 10th & 19601 & 8 & \colorbox{winnerColor}{\makebox[0.4em][c]{A}}:0\%, \colorbox{contenderColor}{\makebox[0.4em][c]{B}}:17.2\%, \colorbox{competitiveColor}{\makebox[0.4em][c]{C}}:26.4\% & 40.0 & 17.2\% \\
Council District 11th & 17041 & 7 & \colorbox{winnerColor}{\makebox[0.4em][c]{A}}:0\%, \colorbox{competitiveColor}{\makebox[0.4em][c]{B}}:20.5\%, \colorbox{competitiveColor}{\makebox[0.4em][c]{C}}:27.1\% & 40.0 & 20.5\% \\
Council District 12th & 17341 & 3 & \colorbox{winnerColor}{\makebox[0.4em][c]{A}}:0\%, \colorbox{contenderColor}{\makebox[0.4em][c]{B}}:17.0\%, \colorbox{competitiveColor}{\makebox[0.4em][c]{C}}:24.3\% & 40.0 & 17.0\% \\
Council District 13th & 9961 & 5 & \colorbox{winnerColor}{\makebox[0.4em][c]{A}}:0\%, \colorbox{distantColor}{\makebox[0.4em][c]{B}}:32.9\% & 40.0 & 32.9\% \\
Council District 14th & 9446 & 6 & \colorbox{winnerColor}{\makebox[0.4em][c]{A}}:0\%, \colorbox{competitiveColor}{\makebox[0.4em][c]{B}}:20.5\%, \colorbox{competitiveColor}{\makebox[0.4em][c]{C}}:21.4\%, \colorbox{distantColor}{\makebox[0.4em][c]{D}}:38.3\%, \colorbox{distantColor}{\makebox[0.4em][c]{E}}:39.1\% & 40.0 & 20.5\% \\
Council District 15th \textsuperscript{†} & 8239 & 8 & \colorbox{winnerColor}{\makebox[0.4em][c]{A}}:0\%, \colorbox{competitiveColor}{\makebox[0.4em][c]{B}}:24.6\%, \colorbox{distantColor}{\makebox[0.4em][c]{D}}:33.0\%, \colorbox{distantColor}{\makebox[0.4em][c]{C}}:36.1\% & 40.0 & 24.6\% \\
Council District 16th & 10119 & 4 & \colorbox{winnerColor}{\makebox[0.4em][c]{A}}:0\%, \colorbox{competitiveColor}{\makebox[0.4em][c]{B}}:28.0\%, \colorbox{competitiveColor}{\makebox[0.4em][c]{C}}:29.7\% & 40.0 & 28.0\% \\
Council District 17th & 9022 & 2 & \colorbox{winnerColor}{\makebox[0.4em][c]{A}}:0\%, \colorbox{contenderColor}{\makebox[0.4em][c]{B}}:13.8\% & 40.0 & 13.8\% \\
Council District 18th \textsuperscript{‡} & 14031 & 8 & \colorbox{winnerColor}{\makebox[0.4em][c]{A}}:0\%, \colorbox{nearWinColor}{\makebox[0.4em][c]{B}}:3.9\%, \colorbox{competitiveColor}{\makebox[0.4em][c]{C}}:26.8\%, \colorbox{competitiveColor}{\makebox[0.4em][c]{D}}:27.4\% (B:2.437\%), \colorbox{distantColor}{\makebox[0.4em][c]{E}}:35.6\% & 40.0 & 3.9\% \\
Council District 19th & 13077 & 6 & \colorbox{winnerColor}{\makebox[0.4em][c]{A}}:0\%, \colorbox{contenderColor}{\makebox[0.4em][c]{B}}:8.4\%, \colorbox{competitiveColor}{\makebox[0.4em][c]{C}}:21.8\%, \colorbox{distantColor}{\makebox[0.4em][c]{D}}:39.0\% & 40.0 & 8.4\% \\
Council District 20th \textsuperscript{†} \textsuperscript{‡} & 10644 & 8 & \colorbox{winnerColor}{\makebox[0.4em][c]{A}}:0\%, \colorbox{contenderColor}{\makebox[0.4em][c]{B}}:7.5\%, \colorbox{contenderColor}{\makebox[0.4em][c]{C}}:16.4\%, \colorbox{competitiveColor}{\makebox[0.4em][c]{D}}:21.0\%, \colorbox{competitiveColor}{\makebox[0.4em][c]{F}}:24.0\%, \colorbox{competitiveColor}{\makebox[0.4em][c]{G}}:28.4\% (CG:1.569\%), \colorbox{competitiveColor}{\makebox[0.4em][c]{E}}:29.2\% & 40.0 & 7.5\% \\
Council District 21st & 6839 & 5 & \colorbox{winnerColor}{\makebox[0.4em][c]{A}}:0\%, \colorbox{distantColor}{\makebox[0.4em][c]{B}}:39.6\%, \colorbox{distantColor}{\makebox[0.4em][c]{C}}:39.6\% & 40.0 & 39.6\% \\
Council District 22nd & 16241 & 6 & \colorbox{winnerColor}{\makebox[0.4em][c]{A}}:0\%, \colorbox{competitiveColor}{\makebox[0.4em][c]{B}}:22.6\% & 40.0 & 22.6\% \\
Council District 23rd \textsuperscript{†} \textsuperscript{‡} & 16532 & 7 & \colorbox{winnerColor}{\makebox[0.4em][c]{A}}:0\%, \colorbox{contenderColor}{\makebox[0.4em][c]{B}}:7.1\%, \colorbox{competitiveColor}{\makebox[0.4em][c]{D}}:20.3\% (B:2.202\%), \colorbox{competitiveColor}{\makebox[0.4em][c]{C}}:22.9\% (B:3.442\%), \colorbox{competitiveColor}{\makebox[0.4em][c]{E}}:29.2\%, \colorbox{distantColor}{\makebox[0.4em][c]{G}}:38.3\% & 40.0 & 7.1\% \\
Council District 24th & 13408 & 4 & \colorbox{winnerColor}{\makebox[0.4em][c]{A}}:0\%, \colorbox{distantColor}{\makebox[0.4em][c]{B}}:30.5\% & 40.0 & 30.5\% \\
Council District 25th \textsuperscript{‡} & 14927 & 8 & \colorbox{winnerColor}{\makebox[0.4em][c]{A}}:0\%, \colorbox{contenderColor}{\makebox[0.4em][c]{B}}:5.4\%, \colorbox{contenderColor}{\makebox[0.4em][c]{C}}:11.8\%, \colorbox{competitiveColor}{\makebox[0.4em][c]{D}}:24.0\%, \colorbox{competitiveColor}{\makebox[0.4em][c]{E}}:24.5\%, \colorbox{competitiveColor}{\makebox[0.4em][c]{F}}:29.2\% (DF:0.154\%), \colorbox{distantColor}{\makebox[0.4em][c]{G}}:33.6\% (DG:0.435\%) & 40.0 & 5.4\% \\
Council District 26th & 17925 & 15 & \colorbox{winnerColor}{\makebox[0.4em][c]{A}}:0\% & 8.5 & 9.02\% \textsuperscript{*} \\
Council District 27th & 20420 & 12 & \colorbox{winnerColor}{\makebox[0.4em][c]{A}}:0\% & 35.0 & 36.42\% \textsuperscript{*} \\
Council District 28th \textsuperscript{†} & 14026 & 3 & \colorbox{winnerColor}{\makebox[0.4em][c]{A}}:0\%, \colorbox{distantColor}{\makebox[0.4em][c]{C}}:34.2\%, \colorbox{distantColor}{\makebox[0.4em][c]{B}}:39.6\% & 40.0 & 34.2\% \\
Council District 29th \textsuperscript{†} \textsuperscript{‡} & 17131 & 9 & \colorbox{winnerColor}{\makebox[0.4em][c]{A}}:0\%, \colorbox{contenderColor}{\makebox[0.4em][c]{B}}:14.1\%, \colorbox{contenderColor}{\makebox[0.4em][c]{C}}:16.4\%, \colorbox{competitiveColor}{\makebox[0.4em][c]{D}}:24.2\%, \colorbox{competitiveColor}{\makebox[0.4em][c]{E}}:25.4\%, \colorbox{competitiveColor}{\makebox[0.4em][c]{F}}:27.0\% (DF:0.280\%), \colorbox{competitiveColor}{\makebox[0.4em][c]{G}}:29.7\%, \colorbox{distantColor}{\makebox[0.4em][c]{H}}:38.9\% & 40.0 & 14.1\% \\
Council District 30th & 9587 & 2 & \colorbox{winnerColor}{\makebox[0.4em][c]{A}}:0\%, \colorbox{contenderColor}{\makebox[0.4em][c]{B}}:6.4\% & 40.0 & 6.4\% \\
Council District 31st & 15991 & 3 & \colorbox{winnerColor}{\makebox[0.4em][c]{A}}:0\%, \colorbox{distantColor}{\makebox[0.4em][c]{B}}:48.4\%, \colorbox{farBehindColor}{\makebox[0.4em][c]{C}}:55.6\% & 40.0 & 48.4\% \\
Council District 32nd \textsuperscript{†} \textsuperscript{‡} & 10221 & 6 & \colorbox{winnerColor}{\makebox[0.4em][c]{A}}:0\%, \colorbox{nearWinColor}{\makebox[0.4em][c]{B}}:4.2\%, \colorbox{distantColor}{\makebox[0.4em][c]{C}}:33.2\% (B:1.556\%), \colorbox{distantColor}{\makebox[0.4em][c]{E}}:36.8\%, \colorbox{distantColor}{\makebox[0.4em][c]{D}}:38.2\% & 40.0 & 4.2\% \\
Council District 33rd & 29178 & 8 & \colorbox{winnerColor}{\makebox[0.4em][c]{A}}:0\%, \colorbox{competitiveColor}{\makebox[0.4em][c]{B}}:24.7\% & 40.0 & 24.7\% \\
Council District 34th & 16395 & 4 & \colorbox{winnerColor}{\makebox[0.4em][c]{A}}:0\%, \colorbox{farBehindColor}{\makebox[0.4em][c]{B}}:73.9\%, \colorbox{farBehindColor}{\makebox[0.4em][c]{C}}:75.4\%, \colorbox{farBehindColor}{\makebox[0.4em][c]{D}}:81.4\% & 40.0 & 73.9\% \\
Council District 35th \textsuperscript{‡} & 34914 & 7 & \colorbox{winnerColor}{\makebox[0.4em][c]{A}}:0\%, \colorbox{contenderColor}{\makebox[0.4em][c]{B}}:6.9\%, \colorbox{distantColor}{\makebox[0.4em][c]{C}}:35.6\% (B:5.568\%) & 40.0 & 6.9\% \\
Council District 36th \textsuperscript{†} & 23244 & 5 & \colorbox{winnerColor}{\makebox[0.4em][c]{A}}:0\%, \colorbox{contenderColor}{\makebox[0.4em][c]{C}}:7.4\%, \colorbox{contenderColor}{\makebox[0.4em][c]{B}}:11.8\%, \colorbox{competitiveColor}{\makebox[0.4em][c]{D}}:23.1\% & 40.0 & 7.4\% \\
Council District 37th & 10882 & 6 & \colorbox{winnerColor}{\makebox[0.4em][c]{A}}:0\%, \colorbox{competitiveColor}{\makebox[0.4em][c]{B}}:26.4\% & 40.0 & 26.4\% \\
Council District 38th \textsuperscript{†} & 12113 & 6 & \colorbox{winnerColor}{\makebox[0.4em][c]{A}}:0\%, \colorbox{competitiveColor}{\makebox[0.4em][c]{B}}:26.2\%, \colorbox{distantColor}{\makebox[0.4em][c]{C}}:34.9\%, \colorbox{distantColor}{\makebox[0.4em][c]{E}}:36.4\%, \colorbox{distantColor}{\makebox[0.4em][c]{D}}:36.9\% & 40.0 & 26.2\% \\
Council District 39th & 36058 & 7 & \colorbox{winnerColor}{\makebox[0.4em][c]{A}}:0\%, \colorbox{contenderColor}{\makebox[0.4em][c]{B}}:10.9\%, \colorbox{contenderColor}{\makebox[0.4em][c]{C}}:19.4\%, \colorbox{competitiveColor}{\makebox[0.4em][c]{D}}:29.7\%, \colorbox{distantColor}{\makebox[0.4em][c]{E}}:39.8\% & 40.0 & 10.9\% \\
Council District 40th \textsuperscript{†} & 22361 & 11 & \colorbox{winnerColor}{\makebox[0.4em][c]{A}}:0\%, \colorbox{contenderColor}{\makebox[0.4em][c]{C}}:13.4\% & 14.0 & 13.4\% \\
Council District 41st & 15403 & 2 & \colorbox{winnerColor}{\makebox[0.4em][c]{A}}:0\%, \colorbox{nearWinColor}{\makebox[0.4em][c]{B}}:4.8\% & 40.0 & 4.8\% \\
Council District 42nd & 15626 & 4 & \colorbox{winnerColor}{\makebox[0.4em][c]{A}}:0\%, \colorbox{contenderColor}{\makebox[0.4em][c]{B}}:7.1\% & 40.0 & 7.1\% \\
Council District 45th & 19270 & 3 & \colorbox{winnerColor}{\makebox[0.4em][c]{A}}:0\%, \colorbox{farBehindColor}{\makebox[0.4em][c]{B}}:55.1\%, \colorbox{farBehindColor}{\makebox[0.4em][c]{C}}:74.5\% & 40.0 & 55.1\% \\
Council District 46th \textsuperscript{†} & 18477 & 8 & \colorbox{winnerColor}{\makebox[0.4em][c]{A}}:0\%, \colorbox{competitiveColor}{\makebox[0.4em][c]{B}}:20.8\%, \colorbox{competitiveColor}{\makebox[0.4em][c]{C}}:27.5\%, \colorbox{distantColor}{\makebox[0.4em][c]{E}}:31.7\%, \colorbox{distantColor}{\makebox[0.4em][c]{D}}:36.5\% & 40.0 & 20.8\% \\
Council District 47th & 8022 & 4 & \colorbox{winnerColor}{\makebox[0.4em][c]{A}}:0\%, \colorbox{contenderColor}{\makebox[0.4em][c]{B}}:10.3\%, \colorbox{distantColor}{\makebox[0.4em][c]{C}}:36.9\% & 40.0 & 10.3\% \\
Council District 48th \textsuperscript{‡} & 9198 & 5 & \colorbox{winnerColor}{\makebox[0.4em][c]{A}}:0\%, \colorbox{contenderColor}{\makebox[0.4em][c]{B}}:12.2\%, \colorbox{competitiveColor}{\makebox[0.4em][c]{C}}:23.4\%, \colorbox{competitiveColor}{\makebox[0.4em][c]{D}}:25.9\% (B:5.153\%) & 40.0 & 12.2\% \\
Council District 49th & 13744 & 9 & \colorbox{winnerColor}{\makebox[0.4em][c]{A}}:0\%, \colorbox{contenderColor}{\makebox[0.4em][c]{B}}:10.7\%, \colorbox{competitiveColor}{\makebox[0.4em][c]{C}}:28.0\%, \colorbox{competitiveColor}{\makebox[0.4em][c]{D}}:28.3\%, \colorbox{competitiveColor}{\makebox[0.4em][c]{E}}:29.5\%, \colorbox{distantColor}{\makebox[0.4em][c]{F}}:38.3\%, \colorbox{distantColor}{\makebox[0.4em][c]{H}}:39.8\% & 40.0 & 10.7\% \\
Mayor Citywide & 944261 & 13 & \colorbox{winnerColor}{\makebox[0.4em][c]{A}}:0\%, \colorbox{nearWinColor}{\makebox[0.4em][c]{B}}:0.8\%, \colorbox{contenderColor}{\makebox[0.4em][c]{C}}:8.2\% & 14.3 & 0.8\% \\
Comptroller \textsuperscript{†} & 868572 & 10 & \colorbox{winnerColor}{\makebox[0.4em][c]{A}}:0\%, \colorbox{nearWinColor}{\makebox[0.4em][c]{B}}:2.9\%, \colorbox{contenderColor}{\makebox[0.4em][c]{C}}:17.0\%, \colorbox{competitiveColor}{\makebox[0.4em][c]{D}}:24.9\%, \colorbox{competitiveColor}{\makebox[0.4em][c]{F}}:28.3\%, \colorbox{distantColor}{\makebox[0.4em][c]{E}}:31.0\%, \colorbox{distantColor}{\makebox[0.4em][c]{G}}:31.2\%, \colorbox{distantColor}{\makebox[0.4em][c]{H}}:36.1\%, \colorbox{distantColor}{\makebox[0.4em][c]{I}}:38.5\%, \colorbox{distantColor}{\makebox[0.4em][c]{J}}:39.3\% & 40.0 & 2.9\% \\
Public Advocate & 814832 & 3 & \colorbox{winnerColor}{\makebox[0.4em][c]{A}}:0\%, \colorbox{distantColor}{\makebox[0.4em][c]{B}}:48.8\%, \colorbox{farBehindColor}{\makebox[0.4em][c]{C}}:62.6\% & 40.0 & 48.8\% \\
Bronx President & 101566 & 5 & \colorbox{winnerColor}{\makebox[0.4em][c]{A}}:0\%, \colorbox{contenderColor}{\makebox[0.4em][c]{B}}:6.3\%, \colorbox{competitiveColor}{\makebox[0.4em][c]{C}}:22.2\%, \colorbox{distantColor}{\makebox[0.4em][c]{D}}:30.5\% & 40.0 & 6.3\% \\
Kings President & 289426 & 12 & \colorbox{winnerColor}{\makebox[0.4em][c]{A}}:0\%, \colorbox{contenderColor}{\makebox[0.4em][c]{B}}:6.6\%, \colorbox{contenderColor}{\makebox[0.4em][c]{C}}:12.1\% & 13.5 & 6.6\% \\
NewYork President \textsuperscript{†} \textsuperscript{‡} & 237604 & 7 & \colorbox{winnerColor}{\makebox[0.4em][c]{A}}:0\%, \colorbox{contenderColor}{\makebox[0.4em][c]{B}}:5.9\%, \colorbox{competitiveColor}{\makebox[0.4em][c]{D}}:24.4\%, \colorbox{competitiveColor}{\makebox[0.4em][c]{C}}:26.2\% (B:3.611\%), \colorbox{competitiveColor}{\makebox[0.4em][c]{E}}:27.0\% (CE:1.175\%), \colorbox{distantColor}{\makebox[0.4em][c]{G}}:38.1\% (CG:1.557\%) & 40.0 & 5.9\% \\
Queens President & 195336 & 3 & \colorbox{winnerColor}{\makebox[0.4em][c]{A}}:0\%, \colorbox{nearWinColor}{\makebox[0.4em][c]{B}}:0.5\%, \colorbox{distantColor}{\makebox[0.4em][c]{C}}:30.8\% & 40.0 & 0.5\% \\
Richmond President & 28264 & 5 & \colorbox{winnerColor}{\makebox[0.4em][c]{A}}:0\%, \colorbox{competitiveColor}{\makebox[0.4em][c]{B}}:26.7\%, \colorbox{distantColor}{\makebox[0.4em][c]{C}}:36.7\% & 40.0 & 26.7\% \\
\end{longtable}

\begin{longtable}{l | r | c | p{8cm} r | r}

\caption{\small 2024 Alaska State-wide RCV: Complete Breakdown of the Strategy-Dynamic}
\label{tab:alaska_elections} \\

\toprule
\textbf{District Name} & \textbf{Votes} & \textbf{\#} &
\textbf{Victory Gap Within Allowance} & \textbf{Allow.} &
\textbf{Margin} \\
\midrule
\endfirsthead

\toprule
\textbf{District Name} & \textbf{Votes} & \textbf{\#} &
\textbf{Candidate Win Margin Within Allowance} & \textbf{Allow.} &
\textbf{Margin} \\
\midrule
\endhead

\bottomrule
\multicolumn{6}{r}{\small Continued on next page} \\
\endfoot

\bottomrule
\endlastfoot

House of Rep. & 189389 & 4 & \colorbox{winnerColor}{\makebox[0.5em][c]{A}}: 0.0\%, \colorbox{nearWinColor}{\makebox[0.5em][c]{B}}: 2.7\%, \colorbox{nearWinColor}{\makebox[0.5em][c]{C}}: 2.7\% & 40 & 2.7\% \\
State House D1 & 8177 & 4 & \colorbox{winnerColor}{\makebox[0.5em][c]{A}}: 0.0\%, \colorbox{competitiveColor}{\makebox[0.5em][c]{B}}: 20.7\%, \colorbox{competitiveColor}{\makebox[0.5em][c]{C}}: 23.8\% & 40 & 20.7\% \\
State House D2 & 7362 & 2 & \colorbox{winnerColor}{\makebox[0.5em][c]{A}}: 0.0\%, \colorbox{farBehindColor}{\makebox[0.5em][c]{B}}: 94.5\% & 40 & 94.5\% \\
State House D3 & 8415 & 2 & \colorbox{winnerColor}{\makebox[0.5em][c]{A}}: 0.0\%, \colorbox{farBehindColor}{\makebox[0.5em][c]{B}}: 92.0\% & 40 & 92.0\% \\
State House D4 & 7368 & 2 & \colorbox{winnerColor}{\makebox[0.5em][c]{A}}: 0.0\%, \colorbox{farBehindColor}{\makebox[0.5em][c]{B}}: 93.0\% & 40 & 93.0\% \\
State House D5 & 7059 & 3 & \colorbox{winnerColor}{\makebox[0.5em][c]{A}}: 0.0\%, \colorbox{farBehindColor}{\makebox[0.5em][c]{B}}: 55.2\%, \colorbox{farBehindColor}{\makebox[0.5em][c]{C}}: 82.2\% & 40 & 55.2\% \\
State House D6 & 11431 & 4 & \colorbox{winnerColor}{\makebox[0.5em][c]{A}}: 0.0\%, \colorbox{nearWinColor}{\makebox[0.5em][c]{B}}: 4.3\%, \colorbox{distantColor}{\makebox[0.5em][c]{C}}: 34.0\% & 40 & 4.3\% \\
State House D7 & 8495 & 3 & \colorbox{winnerColor}{\makebox[0.5em][c]{A}}: 0.0\%, \colorbox{contenderColor}{\makebox[0.5em][c]{B}}: 18.8\%, \colorbox{farBehindColor}{\makebox[0.5em][c]{C}}: 68.1\% & 40 & 18.8\% \\
State House D8 & 9053 & 3 & \colorbox{winnerColor}{\makebox[0.5em][c]{A}}: 0.0\%, \colorbox{nearWinColor}{\makebox[0.5em][c]{B}}: 4.4\%, \colorbox{farBehindColor}{\makebox[0.5em][c]{C}}: 62.1\% & 40 & 4.4\% \\
State House D9 & 11213 & 3 & \colorbox{winnerColor}{\makebox[0.5em][c]{A}}: 0.0\%, \colorbox{contenderColor}{\makebox[0.5em][c]{B}}: 8.9\%, \colorbox{farBehindColor}{\makebox[0.5em][c]{C}}: 59.3\% & 40 & 8.9\% \\
State House D10 & 7604 & 3 & \colorbox{winnerColor}{\makebox[0.5em][c]{A}}: 0.0\%, \colorbox{competitiveColor}{\makebox[0.5em][c]{B}}: 24.5\%, \colorbox{farBehindColor}{\makebox[0.5em][c]{C}}: 69.9\% & 40 & 24.5\% \\
State House D11 & 9207 & 3 & \colorbox{winnerColor}{\makebox[0.5em][c]{A}}: 0.0\%, \colorbox{contenderColor}{\makebox[0.5em][c]{B}}: 5.3\%, \colorbox{farBehindColor}{\makebox[0.5em][c]{C}}: 61.1\% & 40 & 5.3\% \\
State House D12 & 7848 & 3 & \colorbox{winnerColor}{\makebox[0.5em][c]{A}}: 0.0\%, \colorbox{competitiveColor}{\makebox[0.5em][c]{B}}: 22.0\%, \colorbox{farBehindColor}{\makebox[0.5em][c]{C}}: 67.0\% & 40 & 22.0\% \\
State House D13 & 7031 & 3 & \colorbox{winnerColor}{\makebox[0.5em][c]{A}}: 0.0\%, \colorbox{contenderColor}{\makebox[0.5em][c]{B}}: 6.9\%, \colorbox{farBehindColor}{\makebox[0.5em][c]{C}}: 60.2\% & 40 & 6.9\% \\
State House D14 & 6284 & 3 & \colorbox{winnerColor}{\makebox[0.5em][c]{A}}: 0.0\%, \colorbox{farBehindColor}{\makebox[0.5em][c]{B}}: 56.6\%, \colorbox{farBehindColor}{\makebox[0.5em][c]{C}}: 82.7\% & 40 & 56.6\% \\
State House D15 & 8832 & 4 & \colorbox{winnerColor}{\makebox[0.5em][c]{A}}: 0.0\%, \colorbox{nearWinColor}{\makebox[0.5em][c]{B}}: 4.6\%, \colorbox{distantColor}{\makebox[0.5em][c]{C}}: 42.8\%, \colorbox{farBehindColor}{\makebox[0.5em][c]{D}}: 62.2\% & 40 & 4.6\% \\
State House D16 & 8611 & 3 & \colorbox{winnerColor}{\makebox[0.5em][c]{A}}: 0.0\%, \colorbox{contenderColor}{\makebox[0.5em][c]{B}}: 14.0\%, \colorbox{farBehindColor}{\makebox[0.5em][c]{C}}: 63.3\% & 40 & 14.0\% \\
State House D17 & 5744 & 2 & \colorbox{winnerColor}{\makebox[0.5em][c]{A}}: 0.0\%, \colorbox{farBehindColor}{\makebox[0.5em][c]{B}}: 87.6\% & 40 & 87.6\% \\
State House D18 & 3745 & 3 & \colorbox{winnerColor}{\makebox[0.5em][c]{A}}: 0.0\%, \colorbox{nearWinColor}{\makebox[0.5em][c]{B}}: 0.5\%, \colorbox{farBehindColor}{\makebox[0.5em][c]{C}}: 57.8\% & 40 & 0.5\% \\
State House D19 & 4390 & 4 & \colorbox{winnerColor}{\makebox[0.5em][c]{A}}: 0.0\%, \colorbox{distantColor}{\makebox[0.5em][c]{B}}: 30.5\%, \colorbox{distantColor}{\makebox[0.5em][c]{C}}: 49.7\%, \colorbox{farBehindColor}{\makebox[0.5em][c]{D}}: 67.3\% & 40 & 30.5\% \\
State House D20 & 6182 & 3 & \colorbox{winnerColor}{\makebox[0.5em][c]{A}}: 0.0\%, \colorbox{competitiveColor}{\makebox[0.5em][c]{B}}: 28.5\%, \colorbox{farBehindColor}{\makebox[0.5em][c]{C}}: 69.2\% & 40 & 28.5\% \\
State House D21 & 8452 & 3 & \colorbox{winnerColor}{\makebox[0.5em][c]{A}}: 0.0\%, \colorbox{contenderColor}{\makebox[0.5em][c]{B}}: 10.8\%, \colorbox{farBehindColor}{\makebox[0.5em][c]{C}}: 60.0\% & 40 & 10.8\% \\
State House D22 & 5049 & 3 & \colorbox{winnerColor}{\makebox[0.5em][c]{A}}: 0.0\%, \colorbox{nearWinColor}{\makebox[0.5em][c]{B}}: 4.9\%, \colorbox{farBehindColor}{\makebox[0.5em][c]{C}}: 60.9\% & 40 & 4.9\% \\
State House D23 & 9928 & 3 & \colorbox{winnerColor}{\makebox[0.5em][c]{A}}: 0.0\%, \colorbox{competitiveColor}{\makebox[0.5em][c]{B}}: 24.0\%, \colorbox{farBehindColor}{\makebox[0.5em][c]{C}}: 67.9\% & 40 & 24.0\% \\
State House D24 & 8317 & 2 & \colorbox{winnerColor}{\makebox[0.5em][c]{A}}: 0.0\%, \colorbox{farBehindColor}{\makebox[0.5em][c]{B}}: 93.0\% & 40 & 93.0\% \\
State House D25 & 8269 & 2 & \colorbox{winnerColor}{\makebox[0.5em][c]{A}}: 0.0\%, \colorbox{farBehindColor}{\makebox[0.5em][c]{B}}: 91.5\% & 40 & 91.5\% \\
State House D26 & 7686 & 2 & \colorbox{winnerColor}{\makebox[0.5em][c]{A}}: 0.0\%, \colorbox{farBehindColor}{\makebox[0.5em][c]{B}}: 92.2\% & 40 & 92.2\% \\
State House D27 & 7636 & 3 & \colorbox{winnerColor}{\makebox[0.5em][c]{A}}: 0.0\%, \colorbox{nearWinColor}{\makebox[0.5em][c]{B}}: 2.5\%, \colorbox{farBehindColor}{\makebox[0.5em][c]{C}}: 61.4\% & 40 & 2.5\% \\
State House D28 & 7827 & 4 & \colorbox{winnerColor}{\makebox[0.5em][c]{A}}: 0.0\%, \colorbox{nearWinColor}{\makebox[0.5em][c]{B}}: 0.1\%, \colorbox{contenderColor}{\makebox[0.5em][c]{C}}: 10.7\%, \colorbox{farBehindColor}{\makebox[0.5em][c]{D}}: 52.3\% & 40 & 0.1\% \\
State House D29 & 8252 & 2 & \colorbox{winnerColor}{\makebox[0.5em][c]{A}}: 0.0\%, \colorbox{farBehindColor}{\makebox[0.5em][c]{B}}: 90.8\% & 40 & 90.8\% \\
State House D30 & 8755 & 3 & \colorbox{winnerColor}{\makebox[0.5em][c]{A}}: 0.0\%, \colorbox{contenderColor}{\makebox[0.5em][c]{B}}: 9.9\%, \colorbox{farBehindColor}{\makebox[0.5em][c]{C}}: 62.0\% & 40 & 9.9\% \\
State House D31 & 6501 & 3 & \colorbox{winnerColor}{\makebox[0.5em][c]{A}}: 0.0\%, \colorbox{contenderColor}{\makebox[0.5em][c]{B}}: 9.0\%, \colorbox{farBehindColor}{\makebox[0.5em][c]{C}}: 59.9\% & 40 & 9.0\% \\
State House D32 & 4806 & 3 & \colorbox{winnerColor}{\makebox[0.5em][c]{A}}: 0.0\%, \colorbox{distantColor}{\makebox[0.5em][c]{B}}: 35.0\%, \colorbox{farBehindColor}{\makebox[0.5em][c]{C}}: 74.6\% & 40 & 35.0\% \\
State House D34 & 8750 & 3 & \colorbox{winnerColor}{\makebox[0.5em][c]{A}}: 0.0\%, \colorbox{contenderColor}{\makebox[0.5em][c]{B}}: 12.8\%, \colorbox{farBehindColor}{\makebox[0.5em][c]{C}}: 64.9\% & 40 & 12.8\% \\
State House D35 & 9146 & 3 & \colorbox{winnerColor}{\makebox[0.5em][c]{A}}: 0.0\%, \colorbox{contenderColor}{\makebox[0.5em][c]{B}}: 10.6\%, \colorbox{farBehindColor}{\makebox[0.5em][c]{C}}: 59.8\% & 40 & 10.6\% \\
State House D36 & 8961 & 5 & \colorbox{winnerColor}{\makebox[0.5em][c]{A}}: 0.0\%, \colorbox{contenderColor}{\makebox[0.5em][c]{B}}: 10.2\%, \colorbox{contenderColor}{\makebox[0.5em][c]{C}}: 11.3\%, \colorbox{distantColor}{\makebox[0.5em][c]{D}}: 33.6\%, \colorbox{farBehindColor}{\makebox[0.5em][c]{E}}: 53.2\% & 40 & 10.2\% \\
State House D37 & 3827 & 3 & \colorbox{winnerColor}{\makebox[0.5em][c]{A}}: 0.0\%, \colorbox{distantColor}{\makebox[0.5em][c]{B}}: 45.9\%, \colorbox{farBehindColor}{\makebox[0.5em][c]{C}}: 80.0\% & 40 & 45.9\% \\
State House D38 & 3614 & 5 & \colorbox{winnerColor}{\makebox[0.5em][c]{A}}: 0.0\%, \colorbox{nearWinColor}{\makebox[0.5em][c]{B}}: 3.6\%, \colorbox{contenderColor}{\makebox[0.5em][c]{C}}: 18.0\%, \colorbox{distantColor}{\makebox[0.5em][c]{D}}: 41.6\%, \colorbox{farBehindColor}{\makebox[0.5em][c]{E}}: 62.5\% & 40 & 3.6\% \\
State House D39 & 4050 & 3 & \colorbox{winnerColor}{\makebox[0.5em][c]{A}}: 0.0\%, \colorbox{contenderColor}{\makebox[0.5em][c]{B}}: 16.7\%, \colorbox{farBehindColor}{\makebox[0.5em][c]{C}}: 75.1\% & 40 & 16.7\% \\
State House D40 & 3071 & 4 & \colorbox{winnerColor}{\makebox[0.5em][c]{A}}: 0.0\%, \colorbox{contenderColor}{\makebox[0.5em][c]{B}}: 17.5\%, \colorbox{competitiveColor}{\makebox[0.5em][c]{C}}: 24.8\%, \colorbox{farBehindColor}{\makebox[0.5em][c]{D}}: 66.2\% & 40 & 17.5\% \\
State Senate B & 16320 & 2 & \colorbox{winnerColor}{\makebox[0.5em][c]{A}}: 0.0\%, \colorbox{farBehindColor}{\makebox[0.5em][c]{B}}: 92.7\% & 40 & 92.7\% \\
State Senate D & 19362 & 4 & \colorbox{winnerColor}{\makebox[0.5em][c]{A}}: 0.0\%, \colorbox{contenderColor}{\makebox[0.5em][c]{B}}: 8.8\%, \colorbox{distantColor}{\makebox[0.5em][c]{C}}: 45.5\%, \colorbox{farBehindColor}{\makebox[0.5em][c]{D}}: 61.8\% & 40 & 8.8\% \\
State Senate F & 17066 & 4 & \colorbox{winnerColor}{\makebox[0.5em][c]{A}}: 0.0\%, \colorbox{contenderColor}{\makebox[0.5em][c]{B}}: 5.5\%, \colorbox{distantColor}{\makebox[0.5em][c]{C}}: 43.2\%, \colorbox{farBehindColor}{\makebox[0.5em][c]{D}}: 60.3\% & 40 & 5.5\% \\
State Senate H & 18002 & 3 & \colorbox{winnerColor}{\makebox[0.5em][c]{A}}: 0.0\%, \colorbox{contenderColor}{\makebox[0.5em][c]{B}}: 10.7\%, \colorbox{farBehindColor}{\makebox[0.5em][c]{C}}: 60.6\% & 40 & 10.7\% \\
State Senate J & 10498 & 3 & \colorbox{winnerColor}{\makebox[0.5em][c]{A}}: 0.0\%, \colorbox{distantColor}{\makebox[0.5em][c]{B}}: 40.4\%, \colorbox{farBehindColor}{\makebox[0.5em][c]{C}}: 73.7\% & 40 & 40.4\% \\
State Senate L & 19600 & 4 & \colorbox{winnerColor}{\makebox[0.5em][c]{A}}: 0.0\%, \colorbox{contenderColor}{\makebox[0.5em][c]{B}}: 9.9\%, \colorbox{distantColor}{\makebox[0.5em][c]{C}}: 41.0\%, \colorbox{farBehindColor}{\makebox[0.5em][c]{D}}: 61.7\% & 40 & 9.9\% \\
State Senate N & 15562 & 4 & \colorbox{winnerColor}{\makebox[0.5em][c]{A}}: 0.0\%, \colorbox{competitiveColor}{\makebox[0.5em][c]{B}}: 23.0\%, \colorbox{distantColor}{\makebox[0.5em][c]{C}}: 35.6\%, \colorbox{farBehindColor}{\makebox[0.5em][c]{D}}: 66.5\% & 40 & 23.0\% \\
State Senate P & 11532 & 3 & \colorbox{winnerColor}{\makebox[0.5em][c]{A}}: 0.0\%, \colorbox{nearWinColor}{\makebox[0.5em][c]{B}}: 3.0\%, \colorbox{farBehindColor}{\makebox[0.5em][c]{C}}: 57.8\% & 40 & 3.0\% \\
State Senate R & 18165 & 4 & \colorbox{winnerColor}{\makebox[0.5em][c]{A}}: 0.0\%, \colorbox{contenderColor}{\makebox[0.5em][c]{B}}: 9.8\%, \colorbox{distantColor}{\makebox[0.5em][c]{C}}: 39.8\%, \colorbox{farBehindColor}{\makebox[0.5em][c]{D}}: 62.1\% & 40 & 9.8\% \\
State Senate T & 6638 & 2 & \colorbox{winnerColor}{\makebox[0.5em][c]{A}}: 0.0\%, \colorbox{farBehindColor}{\makebox[0.5em][c]{B}}: 94.2\% & 40 & 94.2\% \\
U.S. House & 330298 & 5 & \colorbox{winnerColor}{\makebox[0.5em][c]{A}}: 0.0\%, \colorbox{nearWinColor}{\makebox[0.5em][c]{B}}: 2.4\%, \colorbox{distantColor}{\makebox[0.5em][c]{C}}: 42.5\%, \colorbox{distantColor}{\makebox[0.5em][c]{D}}: 45.3\%, \colorbox{farBehindColor}{\makebox[0.5em][c]{E}}: 57.4\% & 40 & 2.4\% \\
Presidential \textsuperscript{†} & 338650 & 8 & \colorbox{winnerColor}{\makebox[0.5em][c]{A}}: 0.0\%, \colorbox{contenderColor}{\makebox[0.5em][c]{B}}: 13.0\%, \colorbox{distantColor}{\makebox[0.5em][c]{E}}: 46.2\%, \colorbox{distantColor}{\makebox[0.5em][c]{C}}: 46.3\%, \colorbox{distantColor}{\makebox[0.5em][c]{D}}: 47.3\%, \colorbox{distantColor}{\makebox[0.5em][c]{G}}: 49.0\%, \colorbox{farBehindColor}{\makebox[0.5em][c]{H}}: 50.5\%, \colorbox{farBehindColor}{\makebox[0.5em][c]{F}}: 50.6\% & 40 & 13.0\% \\

\end{longtable}

\normalsize
\subsection{Supplementary empirical analyses}
\subsubsection{Summary of Election Attributes}\label{sec:summary_attributes}
\Cref{tab:side_by_side_rcv_summary} summarizes all 106 single winner elections in NYC'21 and Alaska'24.
\begin{table}[h]
\caption{\small RCV Election Attributes Summary: NYC 2021 and Alaska 2024. Here, $
\leq 20\%$ allowance denotes computations run with allowance = min(20\%, traceability threshold) per election.}
\label{tab:side_by_side_rcv_summary}
\centering
\footnotesize
\begin{subtable}[t]{0.47\textwidth}
\caption{\small 2021 NYC Democratic Primary}
\label{tab:nyc_rcv_extended}
\centering
\begin{tabular}{l r r}
\toprule
\textbf{Election Attributes} & \textbf{No.} & \textbf{\% of 54} \\
\midrule
\multicolumn{3}{l}{\textit{Competitiveness bands}}\\
\quad Near Winner (0–5\%)       & 7  & 13.0 \\
\quad Contender (5–20\%)        & 24 & 44.4 \\
\quad Competitive (20–30\%)     & 11 & 20.4 \\
\quad Distant (30–45\%)         & 8  & 14.8 \\
\quad Far Behind ($>$45\%)      & 4  &  7.4 \\
\textbf{\quad $\le$30\% total}  & 42 & 77.8 \\
\midrule
\multicolumn{3}{l}{\textit{Ballot exhaustion impact (strategy $<$ exhaust)}}\\
\quad High flip probability     & 2  &  \textbf{3.7} \\
\midrule
\multicolumn{3}{l}{\textit{Strategic complexity (non-selfish)}}\\
\quad Full allowance            & 11 & 20.4 \\
\quad $\le$20\% allowance       & 0  &  \textbf{0.0} \\
\midrule
\multicolumn{3}{l}{\textit{Preference-order alignment}}\\
\quad Full allowance            & 41 & 75.9 \\
\quad $\le$20\% allowance       & 52 & \textbf{96.3} \\
\bottomrule
\end{tabular}
\end{subtable}
\hfill
\begin{subtable}[t]{0.47\textwidth}
\caption{\small 2024 Alaska State-wide}
\label{tab:ak_rcv_extended}
\centering
\begin{tabular}{l r r}
\toprule
\textbf{Election Attributes} & \textbf{No.} & \textbf{\% of 52} \\
\midrule
\multicolumn{3}{l}{\textit{Competitiveness bands}}\\
\quad Near Winner (0–5\%)       & 11 & 21.2 \\
\quad Contender (5–20\%)        & 19 & 36.5 \\
\quad Competitive (20–30\%)     &  6 & 11.5 \\
\quad Distant (30–45\%)         &  3 &  5.8 \\
\quad Far Behind ($>$45\%)      & 13 & 25.0 \\
\textbf{\quad $\le$30\% total}  & 36 & 69.2 \\
\midrule
\multicolumn{3}{l}{\textit{Ballot exhaustion impact (strategy $<$ exhaust)}}\\
\quad High flip probability     & 1  &  \textbf{1.9} \\
\midrule
\multicolumn{3}{l}{\textit{Strategic complexity (non-selfish)}}\\
\quad Full allowance            & 0  &  0.0 \\
\quad $\le$20\% allowance       & 0  &  \textbf{0.0} \\
\midrule
\multicolumn{3}{l}{\textit{Preference-order alignment}}\\
\quad Full allowance            & 51 & 98.1 \\
\quad $\le$20\% allowance       & 52 & \textbf{100} \\
\bottomrule
\end{tabular}
\end{subtable}
\end{table}

\subsubsection{Detailed Non-Selfish Strategy Example}

Table~\ref{tab:district23_detail} illustrates the complete strategic landscape for candidate D in NYC Council District 23, showing how non-selfish strategy achieves victory with fewer ballot additions by engineering favorable elimination sequences.

\begin{table}[H]
\centering
\caption{Winning strategies within $40\%$ allowance for candidate D, NYC Council District 23}
\begin{tabular}{lcc}
\hline
Social Choice Order & Additions (\%) & Ballot Composition \\
\hline
DBACE & 20.3 (\textit{optimal}) & B: 2.2\%, D: 18.1\% \\
DABCE & 27.0 (\textit{selfish}) & D: 27.0\% \\
DAEBC & 33.4 & D: 17.3\%, E: 1.3\%, ED: 14.8\% \\
DBCAE & 36.0 & B: 0.5\%, C: 17.3\%, D: 18.1\% \\
DACBE & 36.2 & C: 7.5\%, D: 19.4\%, CD: 9.4\% \\
DCBAE & 38.0 & B: 0.5\%, C: 19.4\%, D: 18.1\% \\
\hline
\end{tabular}
\begin{tablenotes}
\small
\item The optimal strategy for candidate D requires 20.3\% total additions: 2.2\% ranking rival B first and 18.1\% ranking D first. The purely selfish strategy requires 27.0\% additions---demonstrating how supporting rival B to manipulate the elimination order reduces the required ballot additions by 6.7 percentage points.
\end{tablenotes}
\label{tab:district23_detail}
\end{table}

The original social choice order of District 23 is ABCDE. D's optimal strategy works by eliminating candidate A prior to B, so that D would get more significant transfer votes from A, and would require fewer additions against B in the final round.

\subsubsection{Impact of Stronger Removal Conditions}\label{app:strong_removal_impact}

We applied the strengthened candidate–removal test to 33 real-world elections with at least six candidates.  
Overall, 18 contests (55 \%) exhibited a measurable reduction in required search effort.  
Contests with \textit{11–15 candidates} improved in 4 of 7 cases (57 \%), averaging a 39.2 \% reduction, indicating that tighter traceability screening is especially valuable in crowded fields.  
In the \textit{6–10 candidate} bracket, 14 of 26 elections (54 \%) saw gains, with an average improvement of 28.8 \%.  
Five elections recorded improvements above 60 \%, highlighting the practical payoff of the refined rule in demanding instances.

\begin{table}[H]
\centering
\caption{Effect of stronger removal conditions, by candidate field size}
\label{tab:performance_by_size}
\begin{tabular}{lcccc}
\toprule
Category & Elections & Improved & Avg.\ Improvement & Max.\ Improvement  \\
\midrule
6--10 Candidates & 26 & 14 (54\,\%) & 28.8\,\% & 119.5\,\%  \\
11--15 Candidates & 7  & 4  (57\,\%) & 39.2\,\% & 76.0\,\%  \\
\midrule
Overall & 33 & 18 (55\,\%) & 31.1\,\% & 119.5\,\%  \\
\bottomrule
\end{tabular}
\end{table}

\subsubsection{Ballot Exhaustion Analysis} \label{app:ballot_exhaustion_single_winner}
We apply the probability models detailed in \Cref{app:ballot_exhaustion} to analyze the impact of ballot completion. The heatmaps in \Cref{fig:heatmap_prb} demonstrate that only highly competitive elections exhibit non-trivial probabilities of outcome changes. Three elections show near-certain flip probabilities (approaching 1.0) under similarity bootstrap models and high probabilities under similarity-based beta models, suggesting significant bias in ballot exhaustion patterns. While ballot exhaustion enables potential outcome changes, we find that election competitiveness and systematic bias in exhaustion patterns are the critical determinants of whether completing exhausted ballots would alter electoral outcomes.

\begin{figure}[h]
    \centering
    \includegraphics[width=1\linewidth]{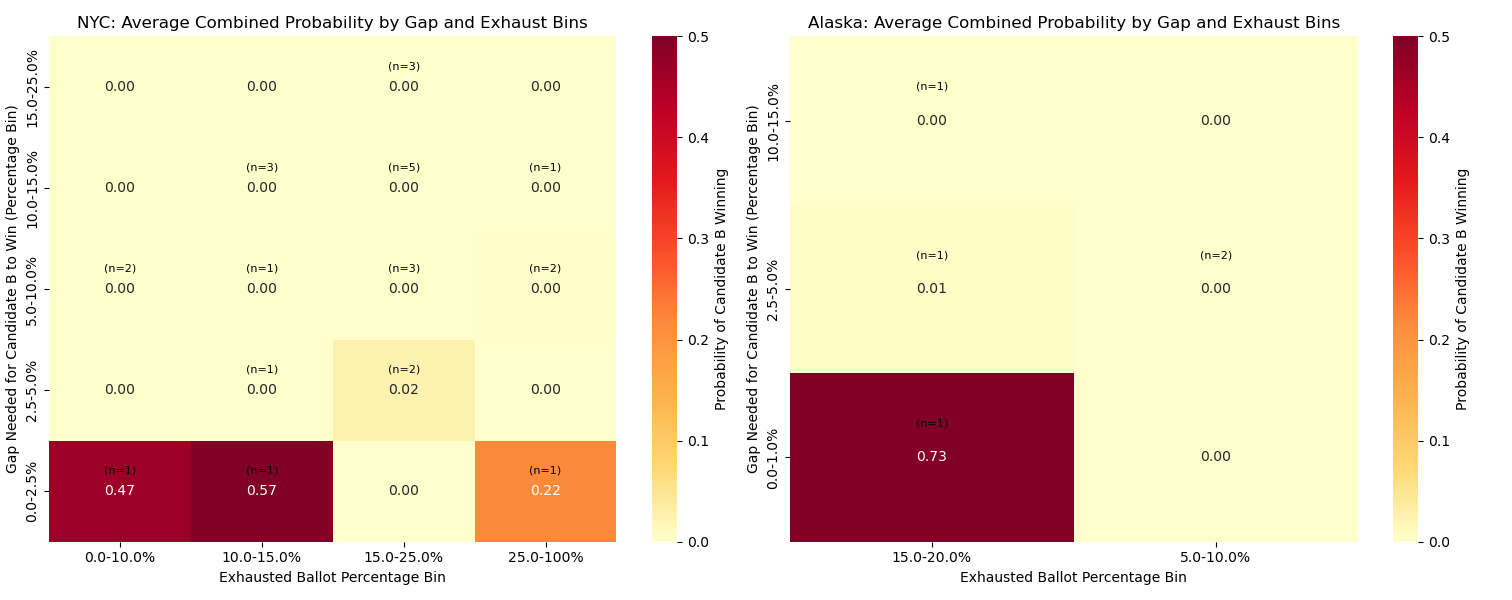}
    \caption{Ballot Exhaustion and Outcome Sensitivity: Heatmaps showing the average probability of alternate outcomes with respect to the percent of candidate's victory gap and exhausted ballots at the time of their elimination.}
    \label{fig:heatmap_prb}
\end{figure}
\section{Multi-Winner STV: Details} \label{app:portland}
We used the official cast vote record (CVR) released in December 2024 in this case study. The record included improperly marked ballots, which were processed in accordance with the official RCV ballot adjudication procedures \cite{MultnomahCounty2024Turnout}. Each election followed the single transferable vote (STV) rules, electing exactly three winners using the Droop quota of 25\% \citep{multnomah_rcv2025}.
Using ERSF, we examined strategic behavior across all four elections.  Below, we provide a brief analysis of each district's elections and the subsequent bootstrap analysis.

\subsection{District 1}\label{app:portland_dis1}

District 1 includes 16 candidates, with the first candidate winning in the 13th round following the elimination of 11 candidates. Given \(n = 16\) and up to \(k = 3\) winners, the total number of possible orders is \(16!\), and the number of possible sequences is \(\sum_{j=1}^3 \binom{16}{j} = 696\). 
For \(B = 4.7\%\), \Cref{algo:candidate-removal-updated} removes 8 candidates; among the remaining 8, only 7 are able to win by adding up to 4.7\% allowance, which is the algorithmic traceability threshold. Note that candidate H, the 8th candidate, is included in the set of relevant candidates but remains unable to secure a win through strategic additions. This occurs because H has the potential to attain higher positions in the ranking, thereby influencing the election dynamics, yet ultimately falls short of meeting the necessary threshold to win. The subsequent strategic analysis of this set using \Cref{algo: robust_allocation} takes approximately 814 minutes on a modern laptop. The resulting optimal strategies for 7 candidates are presented in \Cref{tab:portland_dis1_strats}. Reducing \(B\) to 4.17\% allows the removal of 9 candidates, cutting the corresponding analysis time to about 12 minutes.

For District 1 bootstrap analysis, we generated 1,000 samples using sampling with replacement. We then applied a slightly lower allowance of \(B = 4\%\), as 4.17\% is precise for the original dataset. Under these conditions, 9 candidates were removed in 807 samples and 8 in 190 samples, leaving 3 samples unsolved. From these, we randomly selected 100 for in-depth strategic analysis, summarized in \Cref{tab:summary_dis1}.

\subsection{Districts 2, 3 and 4}
District 2 had 22 candidates, with the first win occurring in the 20th round after the elimination of 18 candidates. The subsequent analysis, including both the election data and bootstrap procedures, follows the same methodology as in District 1 (\Cref{app:portland_dis1}). Given the lower competitiveness of the District 2 election, the algorithms achieved a higher threshold, reaching up to 6.5\% while removing 18 candidates.
We analyzed 100 bootstrap samples with 6\% allowance. The candidate-elimination algorithm was effective for all samples, reducing the number of relevant candidates to 4 in each case. Candidate D secured a place in the winning set in 85\% of the samples, requiring an average of 5.64\% additional votes.

In Districts 3 and 4, the first election winner emerged in rounds 20 and 7, following the elimination of 18 and 5 candidates, respectively. In both cases, the first win occurred while at least 11 candidates remained active in the election. 
Thus, for Districts 3 and 4, \Cref{thm: irrelevant_extension} was specifically useful in enabling candidate elimination even when the removal set includes an election winner. Using the corresponding algorithm, we determine the algorithmic traceability threshold that satisfies the elimination condition. The thresholds for Districts 3 and 4 are 12.36\% and 9.6\%, respectively.
For bootstrap samples, we used 11.5\% allowance for District 3, and 9\% for District 4 and analyzed 100 bootstrap samples. Within these, the algorithms achieve 100\% efficiency in eliminating irrelevant candidates and analyzing samples in both districts.

\subsection{Ballot Exhaustion Analysis}
We analyzed ballot exhaustion's strategic impact in this multi-winner STV setting. For candidates where ballot exhaustion exceeded their victory gap, we calculated the required preference percentage among exhausted ballots using: $\text{Required Net Advantage} = \frac{\text{Gap}}{\text{Exhaust}} \times 100$ and $\text{Required Preference Percent} = (1 + \frac{\text{Required Net Advantage}}{100})/2 \times 100$. This formula determines what percentage of exhausted voters must prefer the trailing candidate to generate sufficient net votes for victory. The multi-winner context required identifying active competitors at each candidate's specific elimination round, as strategic coordination effectiveness depends on the number and identity of remaining candidates rather than the full initial field. 

While adding ballots to multi-winner STV increases the Droop quota, we computationally verified that completing exhausted ballots (or adding equivalent strategic ballots) in the Portland elections affect only which candidates win or are eliminated within rounds, without altering the sequence in the structure. Specifically, `win' rounds remain win rounds (though potentially for different candidates), and elimination rounds remain elimination rounds. As specified in \Cref{prop: ballot_exhaust}, this preserves the operational equivalence between ballot completion and strategic ballot addition used in our probability models.

Results showed that required preference percentages ranged from 53.98\% to 83.41\% across six qualifying candidates. In multi-winner elections, candidates face elimination at different stages with varying numbers of active competitors: District 4's most promising case involved 2 final candidates requiring 53.98\% preference among 14.08\% exhausted ballots, while District 1 cases involved 4-8 active candidates requiring 66.62\%-83.41\% preference rates. The empirical analysis revealed systematic preference deficits: observed complete ballot patterns consistently showed trailing candidates preferred by 33.41\%-40.13\% of voters---substantially below required thresholds of 53.98\%-83.41\%. Thus, unlike the single-winner analyses, the natural voter preferences consistently fell short of coordination requirements.
\end{document}